  \providecommand\BibTeX{{%
    \normalfont B\kern-0.5em{\scshape i\kern-0.25em b}\kern-0.8em\TeX}}}
\DeclareMathOperator*{\argmax}{arg\,max}
\setlist{nolistsep,leftmargin=*}
\def\greedy{\textsc{Greedy}\xspace}
\def\rg{\textsc{Rg}\xspace}
\newtheorem{thm}{\textbf{Theorem}}
\newtheorem{defn}{\textbf{Definition}}
\newtheorem{lem}{\textbf{Lemma}}
\newtheorem{observation}{\textbf{Observation}}
\newtheorem{example}{\textbf{Example}}
\newtheorem{prob}{\textbf{Problem}}
\def\RR{\mathbb{R}}
\begin{document}

\title{Balance Maximization in Signed Networks via Edge Deletions}


\author{Kartik Sharma}
\affiliation{%
  \institution{IIT Delhi}
}
\email{Kartik.Sharma.cs117@cse.iitd.ac.in }

\author{Iqra Altaf Gillani}
\affiliation{%
  \institution{IIT Delhi}
  }
\email{iqraaltaf@cse.iitd.ac.in }

\author{Sourav Medya}
\affiliation{%
  \institution{Northwestern University}
}
\email{sourav.medya@kellogg.northwestern.edu}

\author{Sayan Ranu}
\affiliation{%
 \institution{IIT Delhi}}
 \email{sayanranu@cse.iitd.ac.in}
\author{Amitabha Bagchi}
\affiliation{%
 \institution{IIT Delhi}}
 \email{bagchi@cse.iitd.ac.in}


\renewcommand{\shortauthors}{Sharma et al.}

\begin{abstract}
     In \emph{signed} networks, each edge is labeled as either positive or negative. The edge sign captures the polarity of a relationship. \emph{Balance} of signed networks is a well-studied property in graph theory. In a balanced (sub)graph, the vertices can be partitioned into two subsets with negative edges present only across the partitions. Balanced portions of a graph have been shown to increase coherence among its members and lead to better performance. While existing works have focused primarily on finding the largest balanced subgraph inside a graph, we study the \emph{network design} problem of maximizing balance of a target community (subgraph). 
     In particular, given a budget $b$ and a community of interest within the signed network, we aim to make the community as close to being balanced as possible by deleting up to $b$ edges. Besides establishing \emph{NP-hardness}, we also show that the problem is non-monotone and non-submodular.  To overcome these computational challenges, we propose heuristics based on the spectral relation of balance with the \emph{Laplacian spectrum} of the network. Since the spectral approach lacks approximation guarantees, we further design a \emph{greedy} algorithm, and its \emph{randomized} version, with provable bounds on the approximation quality. The bounds are derived by exploiting \emph{pseudo-submodularity} of the balance maximization function. Empirical evaluation on eight real-world signed networks establishes that the proposed algorithms are effective, efficient, and scalable to graphs with millions of edges.
\end{abstract}


\maketitle

\vspace{-0.10in}
\section{Introduction and Related Work}
Graphs can model various complex systems such as knowledge graphs~\cite{kg}, road networks~\cite{medya2018noticeable}, communication networks~\cite{myinfocom}, and social networks~\cite{kempe}. Typically, nodes represent entities, and edges characterize relationships between pairs of entities.  
\emph{Signed} graphs further enhance the representative power of graphs by capturing the \emph{polarity} of a relationship through  \emph{positive} and \emph{negative} edge labels~\cite{harary1953notion, huffner2007optimal, ordozgoiti2020finding}. For example, if a graph represents social interactions, a positive edge would denote friendly interaction, and a negative edge would indicate a hostile relationship. Similarly, in a collaboration network, positive edges may indicate complementary skill sets, whereas negative edges would indicate disparate skills.

Signed graphs were first studied by Harary et al.~\cite{harary1953notion} with particular focus on their \emph{balance}. A balanced signed graph is one in which the vertices can be partitioned into two sets such that all edges inside each partition have a positive sign and all the negative signed edges are across the partitions. Balance is correlated with both positive and negative side-effects on a community. On the positive side,  balanced communities are positively correlated with performance in financial networks where edges represent trading links~\cite{omid, figueiredo2014maximum}. On the negative side, in social networks,  \emph{balanced communities} 
often promote ``echo-chambers'', reduce diversity of opinions, and ultimately lead to more polarized viewpoints~\cite{garimella2017long}. 

Owing to the correlation of balance with several higher-order functional traits, it is natural to measure how far a community is from being balanced. For example, in financial networks, it is important to evaluate how the community may be engineered to further improve its balance. On the other hand, in social networks, an adversary, such as a political party, may be interested in polarizing the community in its favor by further increasing its balance. To avoid such adversarial attacks, it is important to know the weak links in a community so that they can be safeguarded. 

In this paper, we address these applications by studying the problem of \emph{maximizing balance via edge deletions (\textsc{Mbed})}.
In the \textsc{Mbed} problem, we are given a graph, a target community within this graph, and a budget $b$. Our goal is to remove $b$ edges, such that the community gets as \emph{close} to being balanced as possible. We formally define the notion of balance closeness in \S~\ref{sec:formulation}. Deleting an edge would correspond to actions such as \emph{unfollowing} or \emph{blocking} a connection. If increasing balance is desirable, then \textsc{Mbed} provides a mechanism towards achieving the goal. On the other hand, \textsc{Mbed} also measures how susceptible a community is to adversarial attacks by revealing how much the balance can be increased through a small number of deletions, and which are these critical edges that must be protected.


\vspace{-0.15in}
\subsection{Related Work}
\label{sec:relatedwork}
The problem we study falls in the class of \emph{network design} problems. In  network design, the goal is to modify the network so that an objective function modeling a desirable property is optimized.
Examples of such objective functions include optimizing shortest path distances (traffic and sustainability improvement)~\cite{meyerson2009,dilkina2011,lin2015,medya2018noticeable}, increasing centrality of target nodes by adding a small set of edges~\cite{crescenzi2015,ishakian2012framework,medya2018group}, optimizing the $k$-core\cite{medyakcore,zhou2019k}, manipulating node similarities \cite{dey2019manipulating}, and boosting/containing influence on social networks~\cite{kimura2008minimizing,chaoji2012recommendations,medya_influence}. 

While several works exist on finding balanced subgraphs~\cite{harary1953notion,figueiredo2014maximum,dasgupta2007algorithmic,huffner2007optimal, ordozgoiti2020finding}, work on optimizing balance through network design is rather limited. The only work is by Akiyama et al. \cite{akiyama1981balancing}, where they study the minimum number of sign flips needed to make a graph balanced. However our work is different for several reasons. 
 First, \cite{akiyama1981balancing} does not have any notion of a budget constraint. Second, the cascading impact of a sign flip and an edge deletion on the balance of a graph is significantly different. Third,~\cite{akiyama1981balancing} lacks evaluation on large real world graphs containing millions of edges. Finally, from a practicality viewpoint, selectively flipping the sign of an edge is difficult since the edge sign encodes the nature of interaction between the two entities (endpoints) of the edge. 
 In contrast, deleting an edge is a more lightweight task as it only involves stopping further interactions with a chosen node. 

Several studies related to identifying large balanced subgraphs exist. Poljak and Turz\'{i}k addressed the problem of finding a maximum weight balanced subgraph and showed an equivalence with max-cut in a graph with a general weight function~\cite{poljak1986polynomial}. Other approaches include finding balanced subgraphs with the maximum number of vertices~\cite{figueiredo2014maximum,ordozgoiti2020finding} and edges~\cite{dasgupta2007algorithmic} in the context of biological networks. H\"{u}ffner et al.~\cite{huffner2007optimal} gave an exact algorithm for finding such balanced subgraphs using the idea of graph separators. 
More recently, Ordozgoiti et al. \cite{ordozgoiti2020finding} studied the problem to identify the maximum balanced subgraph in a given graph and designed efficient and effective heuristics.
\vspace{-0.05in}
\subsection{Contributions}
Our key contributions are summarized as follows.
\begin{itemize}
    \item We propose the novel network design problem of \emph{\underline{m}aximizing \underline{b}alance in a target subgraph via \underline{e}dge \underline{d}eletion (\textsc{Mbed})}. We establish that \textsc{Mbed} is NP-hard, non-submodular and non-monotonic (\S~\ref{sec:formulation}).
    \item Since NP-hardness makes an optimal algorithm infeasible, we propose an efficient, algebraically-grounded heuristic that exploits the connection of balance in a signed graph with the \emph{spectrum} of its {\em Laplacian matrix} (\S~\ref{sec:methods_spectral}). Although this spectral approach is extremely efficient, it lacks an approximation guarantee. We overcome this weakness by establishing that \textsc{Mbed} is \emph{pseudo-submodular}, which is then utilized to design  \emph{greedy} algorithms with provable quality guarantees  (\S~\ref{sec:methods_submodular}). 
    \item We extensively benchmark the proposed methodologies on an array of eight real-world signed graphs. Our experiments establish that the proposed methodologies are effective, efficient, and scalable to million-sized graphs (\S~\ref{sec:expts}). 
\end{itemize}

\vspace{-0.05in}
\section{Problem Definition}
\label{sec:formulation}
In this section, we introduce the concepts central to our problem. All important notations used in our work are summarized in Table~\ref{tab:symbols}.
\vspace{-0.05in}
\begin{defn}[Signed graph]
A {\em signed graph}, $\Gamma = (G, \sigma)$ is a undirected graph $G = (V,E)$ along with a mapping $\sigma : E \rightarrow \{-1, +1\}$, called its {\em edge labelling}, that assigns a sign to each edge. 
\end{defn}
\vspace{-0.05in}

Given a signed graph $\Gamma=((V,E),\sigma)$, we use the notation $E^+ = \{ e \in E : \sigma(e) = +1\}$ and $E^- = \{ e \in E : \sigma(e) = -1\}$ to denote the set of positive and negative edges in $\Gamma$ respectively. 

\vspace{-0.05in}
\begin{defn}[Balanced graph]
A signed graph $\Gamma = ((V,E),\sigma)$ is said to be balanced if there exists a partition $(V_1, V_2)$ of $V$ such that for every $(u,v)\in E$ with $\sigma(u,v) = -1$, $u \in V_1$ iff $v \in V_2$. 
\end{defn}
\vspace{-0.05in}

\begin{figure}[t]
    \centering
    \resizebox*{1.04\linewidth}{!}{       
        \subfloat[]{
\begin{tikzpicture}[>=latex]
  \Huge
  \tikzstyle{every node} = [circle,draw=black]
  \foreach \x in {0,1}
  \foreach \y in {0,1} 
  \node  (\x\y) at (1.5*\x,1.5*\y) {};
  \node  (12) at (1.5,3) {\textcolor{white}{o}};
  \node  (02) at (0,3) {};
  \node (015) at (0,-1.5) {};
  \node (1515) at (1.5,-1.5) {};
 
  \draw[line width=0.8mm](00) -- (01);
  \draw[line width=0.8mm](00) -- (10);
  \draw[line width=0.8mm](01) -- (11);
  \draw[line width=0.8mm](11) -- (10);
  \draw[line width=0.8mm](01) -- (02);
  \draw[line width=0.8mm](11) -- (12);
  \draw[line width=0.8mm](02) -- (015);
  \draw[line width=0.8mm](10) -- (1515);
  
  \node at (10) [draw=none,above right=20 pt] {$\bm{+}$};
  \node at (1515) [draw=none,above right=20 pt] {$\bm{+}$};
  \node at (11) [draw=none,above left=20 pt] {$\bm{+}$};
  \node at (01) [draw=none,below left=20 pt] {$\bm{+}$};
  \node at (00) [draw=none,below right=20 pt] {$\bm{-}$};
  \node at (02) [draw=none,below left=15 pt] {$\bm{+}$};
  \node at (015) [draw=none,above left=15 pt] {$\bm{+}$};
  \node at (12) [draw=none,below right=15 pt] {$\bm{+}$};
  
  \node at (11) [draw=none,right=10 pt] {$\bm{v_1}$};
  \node at (12) [draw=none,right=10 pt] {$\bm{v_2}$};
  
  \node at (00) [fill=blue!60] {o};
  \node at (11) [fill=white!60] {\textcolor{white}{o}};
  \node at (01) [fill=blue!60] {o};
  \node at (10) [fill=red!60] {$\times$};
  \node at (1515) [fill=red!60] {$\times$};
  \node at (02) [fill=blue!60] {o};
  \node at (015) [fill=blue!60] {o};
    \end{tikzpicture}
}
\subfloat[]{
\begin{tikzpicture}[>=latex]
  \Huge
  \tikzstyle{every node} = [circle,draw=black]
  \foreach \x in {0,1}
  \foreach \y in {0,1} 
  \node  (\x\y) at (1.5*\x,1.5*\y) {};
  \node  (12) at (1.5,3) {};
  \node  (02) at (0,3) {};
  \node (015) at (0,-1.5) {};
  \node (1515) at (1.5,-1.5) {};
  
  \draw[line width=0.8mm](00) -- (01);
  \draw[line width=0.8mm](01) -- (11);
  \draw[line width=0.8mm](11) -- (10);
  \draw[line width=0.8mm](01) -- (02);
  \draw[line width=0.8mm](11) -- (12);
  \draw[line width=0.8mm](02) -- (015);
  \draw[line width=0.8mm](10) -- (1515);
  
  \node at (10) [draw=none,above right=20 pt] {$\bm{+}$};
  \node at (1515) [draw=none,above right=20 pt] {$\bm{+}$};
  \node at (11) [draw=none,above left=20 pt] {$\bm{+}$};
  \node at (01) [draw=none,below left=20 pt] {$\bm{+}$};
  \node at (00) [draw=none,below right=20 pt] {\textcolor{white}{$\bm{-}$}};
  \node at (02) [draw=none,below left=15 pt] {$\bm{+}$};
   \node at (015) [draw=none,above left=15 pt] {$\bm{+}$};
  \node at (12) [draw=none,below right=15 pt] {$\bm{+}$};

  \node at (00) [fill=blue!60] {o};
  \node at (01) [fill=blue!60] {o};
  \node at (11) [fill=blue!60] {o};
  \node at (10) [fill=blue!60] {o};
  \node at (02) [fill=blue!60] {o};
  \node at (12) [fill=blue!60] {o};
  \node at (015) [fill=blue!60] {o};
  \node at (1515) [fill=blue!60] {o};
    \end{tikzpicture}
}

  
  
  
\subfloat[]{
\begin{tikzpicture}[>=latex]
  \Huge
  \tikzstyle{every node} = [circle,draw=black]
  \foreach \x in {0,1}
  \foreach \y in {0,1} 
  \node  (\x\y) at (1.5*\x,1.5*\y) {};
  \node  (12) at (1.5,3) {};
  \node  (02) at (0,3) {};
  \node (015) at (0,-1.5) {};
  \node (1515) at (1.5,-1.5) {};
  
  \draw[line width=0.8mm](00) -- (01);
  \draw[line width=0.8mm](00) -- (10);
  \draw[line width=0.8mm](01) -- (11);
  \draw[line width=0.8mm](01) -- (02);
  \draw[line width=0.8mm](11) -- (12);
  \draw[line width=0.8mm](02) -- (015);
  \draw[line width=0.8mm](10) -- (1515);
  
  \node at (10) [draw=none,above right=20 pt] {\textcolor{white}{$\bm{+}$}};
  \node at (1515) [draw=none,above right=20 pt] {$\bm{+}$};
  \node at (11) [draw=none,above left=20 pt] {$\bm{+}$};
  \node at (01) [draw=none,below left=20 pt] {$\bm{+}$};
  \node at (00) [draw=none,below right=20 pt] {$\bm{-}$};
  \node at (02) [draw=none,below left=15 pt] {$\bm{+}$};
  \node at (015) [draw=none,above left=15 pt] {$\bm{+}$};
  \node at (12) [draw=none,below right=15 pt] {$\bm{+}$};
  
  \node at (00) [fill=blue!60] {o};
  \node at (01) [fill=blue!60] {o};
  \node at (11) [fill=blue!60] {o};
  \node at (10) [fill=red!60] {$\times$};
  \node at (02) [fill=blue!60] {o};
  \node at (12) [fill=blue!60] {o};
  \node at (1515) [fill=red!60] {$\times$};
  \node at (015) [fill=blue!60] {o};
    \end{tikzpicture}
}
\subfloat[]{
\begin{tikzpicture}[>=latex]
  \Huge
  \tikzstyle{every node} = [circle,draw=black]
  \foreach \x in {0,1}
  \foreach \y in {0,1} 
  \node  (\x\y) at (1.5*\x,1.5*\y) {};
  \node  (12) at (1.5,3) {};
  \node  (02) at (0,3) {};
  \node (015) at (0,-1.5) {};
  \node (1515) at (1.5,-1.5) {};
  
  \draw[line width=0.8mm](00) -- (01);
  \draw[line width=0.8mm](00) -- (10);
  \draw[line width=0.8mm](11) -- (10);
  \draw[line width=0.8mm](01) -- (02);
  \draw[line width=0.8mm](11) -- (12);
  \draw[line width=0.8mm](02) -- (015);
  \draw[line width=0.8mm](10) -- (1515);
  
  \node at (10) [draw=none,above right=20 pt] {$\bm{+}$};
  \node at (1515) [draw=none,above right=20 pt] {$\bm{+}$};
  \node at (11) [draw=none,above left=20 pt] {\textcolor{white}{$\bm{+}$}};
  \node at (01) [draw=none,below left=20 pt] {$\bm{+}$};
  \node at (00) [draw=none,below right=20 pt] {$\bm{-}$};
  \node at (02) [draw=none,below left=15 pt] {$\bm{+}$};
  \node at (015) [draw=none,above left=15 pt] {$\bm{+}$};
  \node at (12) [draw=none,below right=15 pt] {$\bm{+}$};
  
  \node at (00) [fill=blue!60] {o};
  \node at (01) [fill=blue!60] {o};
  \node at (02) [fill=blue!60] {o};
  \node at (11) [fill=red!60] {$\times$};
  \node at (10) [fill=red!60] {$\times$};
  \node at (12) [fill=red!60] {$\times$};
  \node at (1515) [fill=red!60] {$\times$};
  \node at (015) [fill=blue!60] {o};
\end{tikzpicture}
}
\subfloat[]{
\begin{tikzpicture}[>=latex]
  \Huge
  \tikzstyle{every node} = [circle,draw=black]
    \node  (12) at (2,3) {};
    \node  (02) at (0,3) {};
    \node  (01) at (0,1.5) {};
    \node  (11) at (2,1.5) {};
    \node  (00) at (1,-1) {};
  
  \draw[line width=0.8mm](02) -- (12);
  \draw[line width=0.8mm](02) -- (01);
  \draw[line width=0.8mm](12) -- (11);
  \draw[line width=0.8mm](01) -- (11);
  \draw[line width=0.8mm](01) -- (00);
  \draw[line width=0.8mm](11) -- (00);
  
   Add signs on edges
  \node at (02) [draw=none,above right=20 pt] {$\bm{e_4+}$};
  \node at (12) [draw=none,below right=15 pt] {$\bm{e_2+}$};
  \node at (01) [draw=none,above left=15 pt] {$\bm{+}$};
  \node at (11) [draw=none,below left=20 pt] {$\bm{e_1+}$};
  \node at (00) [draw=none,above right=25 pt] {$\bm{e_3+}$};
 \node at (00) [draw=none,above left=25 pt] {$\bm{-}$};
  
  \node at (02) [fill=blue!60] {o};
  \node at (12) [fill=blue!60] {o};
  \node at (01) [fill=blue!60] {o};
  \node at (11) [fill=white!60] {\textcolor{white}{o}};
  \node at (00) [fill=red!60] {$\times$};
\end{tikzpicture}
}
    }
    \vspace{-0.20in}
    \caption{This figure shows a series of signed graphs. We use the following coloring scheme. The balanced subgraphs contain the colored nodes in blue (marked in `o') and red (in $`\times'$) representing node partition sets $V_1$ and $V_2$. Nodes outside the balanced component are in white. The current balance of (a) is $\Delta(\Gamma)=6$, whereas in (b)-(d) a single edge deletion increases $\Delta(\Gamma)$ to $8$. (e) Illustration of why \textsc{Mbed} is not submodular.}
    \label{fig:cycle_example}
    \vspace{-0.05in}
\end{figure}
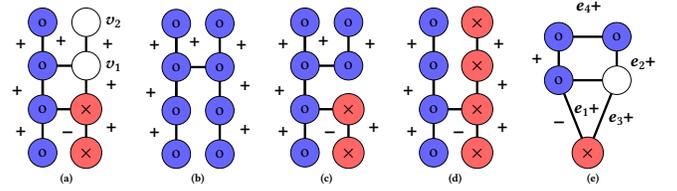

\vspace{-0.05in}
\begin{example}
Consider the signed graph in (Fig.~\ref{fig:cycle_example}(a)). The subgraph induced by the coloured nodes is a balanced subgraph since it can be partitioned into disjoint sets $V_1$ (blue) and $V_2$ (red) with positive edges within the partitions and negative edges across partitions.
\end{example}
\vspace{-0.05in}


\vspace{-0.05in}

\begin{defn}[Current Balance ($\Delta(\Gamma)$) \cite{figueiredo2014maximum}] Given a signed graph $\Gamma$, the  current balance $\Delta(\Gamma)$ is the maximum number of nodes in any  induced subgraph that is connected and balanced. The largest connected induced balanced subgraph is denoted by $S(\Gamma)$, and thus, $\Delta(\Gamma) = |V(S(\Gamma))|$.
\end{defn}
\vspace{-0.05in}

It is worth noting that the largest connected induced balanced subgraph  might not be unique. 

We solve a network design problem where the balance is maximized via edge deletions. The modified graph is denoted as $\Gamma_X$ after the deletion operation of edge set $X$ on $\Gamma$. Deletion of an edge (positive or negative) may increase the balance of a graph.

\vspace{-0.05in}
\begin{example}
The current balance of the graph in Fig.~\ref{fig:cycle_example}(a) is 6. Deleting any negative or positive edge increases the balance to 8 (Fig.~\ref{fig:cycle_example}(b)-(d)). Note that deleting an edge may initiate a cascading impact and bring in multiple nodes into the balanced subgraph.
\end{example}
\vspace{-0.05in}



\vspace{-0.05in}
\begin{prob}[Maximizing Balance via Edge Deletion (\textsc{Mbed})] Given a signed (sub) graph $H$, a candidate edge set $\mathbb{C}$ and a budget $b$, find the set, $B \subset \mathbb{C}$ of $b$ edges to be deleted such that $f(B)= \Delta(H_B) - \Delta(H)$, i.e., the number of nodes in $S(H_B)$, is maximized. Here, $H_B = (V(H), E(H) \setminus B)$ is the subgraph of $H$ formed by deleting the edge set $B$ from $H$.
\end{prob}
\vspace{-0.05in}


Note that maximizing $f(B)=\Delta(H_B)-\Delta(H)$ is equivalent to maximizing $\Delta(H_B)$. We envision $H$ to be the target community where we would like to maximize balance. $\mathbb{C}$ denotes the edges that may be deleted, which  may be the entire edge set of $H$.  


\vspace{-0.05in}
\subsection{Problem Characterization}
\label{sec:character}

\begin{thm}\label{thm:nphard_MBED}
The \textsc{Mbed} problem is NP-hard.
\end{thm}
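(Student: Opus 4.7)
The plan is to reduce \textsc{Mbed} from the NP-hard problem of computing the current balance $\Delta(\Gamma)$ of a signed graph, for which hardness is already established in the balanced-subgraph literature~\cite{figueiredo2014maximum}. Viewing \textsc{Mbed} as a decision problem — given $H$, $\mathbb{C}$, $b$, and $k$, does there exist $B\subseteq\mathbb{C}$ with $|B|=b$ such that $\Delta(H_B)\ge k$? — the reduction needs only to embed an instance of ``$\Delta(\Gamma)\ge k$'' into an instance of \textsc{Mbed}.

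First, I would instantiate \textsc{Mbed} with $H := \Gamma$, $\mathbb{C} := \emptyset$, and $b := 0$. The only feasible set is $B=\emptyset$, so $H_B = H$ and the decision question collapses to ``is $\Delta(\Gamma)\ge k$?''. Any polynomial-time algorithm for \textsc{Mbed} would therefore decide the latter, contradicting its NP-hardness. To make the reduction self-contained, I would justify the hardness of computing $\Delta(\Gamma)$ by the following observation: when every edge of $\Gamma=(G,\sigma)$ satisfies $\sigma(e)=-1$, a vertex subset $V'\subseteq V$ induces a balanced subgraph precisely when $G[V']$ is bipartite, since the required partition $(V_1,V_2)$ is exactly a proper $2$-coloring of $G[V']$. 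Thus $\Delta(\Gamma)$ equals the size of the largest connected induced bipartite subgraph of $G$, a classical NP-hard quantity. This polynomial transformation from Maximum Connected Induced Bipartite Subgraph to computing $\Delta(\Gamma)$, composed with the trivial embedding into \textsc{Mbed}, yields the desired hardness.

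The main obstacle I anticipate is that the $b=0$, $\mathbb{C}=\emptyset$ embedding may look degenerate, since no deletions are actually being made. If a strict reviewer insists on a genuinely non-trivial budget, I would pad $\mathbb{C}$ with $b$ auxiliary edges placed in a disjoint gadget — for instance, $b$ isolated positive edges forming their own small component disconnected from the rest of $H$ — whose removal cannot affect the connected balanced subgraph structure of the original $\Gamma$. Every feasible $B$ of size $b$ then leaves $\Delta(H_B)=\Delta(\Gamma)$, and the reduction continues to equate \textsc{Mbed} with deciding $\Delta(\Gamma)\ge k$, giving NP-hardness for every fixed positive budget as well.
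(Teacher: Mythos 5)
Your reduction takes a genuinely different route from the paper, which reduces from the Set Union Knapsack Problem by building a gadget graph in which the initial balanced subgraph is known by construction and the entire difficulty lies in choosing which of the candidate edges $(v,y_j)$ to delete. Your reduction instead pushes all of the hardness into the evaluation of $\Delta$ itself, and this creates a concrete gap. The objective of \textsc{Mbed} is $f(B)=\Delta(H_B)-\Delta(H)$, so the canonical decision version is ``does there exist $B\subseteq\mathbb{C}$ with $|B|=b$ and $f(B)\ge k$?'' In your instance every feasible $B$ (whether $B=\emptyset$ or any $b$ dummy edges) satisfies $f(B)=0$, so that decision problem is trivially decidable, and the reduction cannot be repaired by shifting the threshold to $k-\Delta(\Gamma)$ because computing that offset is exactly the problem you are reducing from. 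Your argument only goes through under the alternative formalization ``$\Delta(H_B)\ge k$,'' i.e., if the \textsc{Mbed} oracle is required to report the absolute balance of the modified graph rather than the improvement or the optimal set. A polynomial-time algorithm that merely outputs an optimal deletion set $B$ --- which is what the problem statement asks for --- solves your instances trivially (return any $b$ dummy edges) without ever deciding whether $\Delta(\Gamma)\ge k$, so no contradiction is obtained.

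Even granting the $\Delta(H_B)\ge k$ formalization, the result you obtain is strictly weaker than the paper's: it shows only that \textsc{Mbed} inherits the hardness of computing the maximum balanced connected subgraph, and says nothing about instances in which $S(H)$ and $\Delta(H)$ are given or easy --- which is precisely the regime in which the paper's algorithms operate, since $S(H)$ is computed up front and all the algorithmic work is in selecting peripheral edges. The Set Union Knapsack reduction establishes that this selection subproblem is itself intractable. If you keep your approach, state explicitly which decision version you are proving hard and acknowledge that the hardness is located entirely in evaluating $\Delta$. Your supporting observation is sound: with all edges negative, a balanced induced subgraph is exactly an induced bipartite subgraph, and the maximum connected induced bipartite subgraph problem is indeed NP-hard, so the hardness of computing $\Delta(\Gamma)$ itself is correctly established.
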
 
\vspace{-0.05in}
\textsc{Proof.} We reduce \textsc{Mbed} from the Set Union Knapsack Problem 
 \cite{goldschmidt1994note}. The details are in Section ~\ref{app:nphard}.
 \vspace{-0.05in}
\begin{lem}
\label{lem:nonmonotonic}
The optimization function $f(B)$ of \textsc{Mbed} is non-monotonic, i.e., an edge deletion may lead to a decrease in current balance.
\end{lem}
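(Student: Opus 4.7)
The plan is to exhibit an explicit counterexample where a single edge deletion strictly reduces the current balance. The critical handle is the connectivity requirement baked into the definition of $\Delta(\cdot)$: since $S(H)$ must be a \emph{connected} induced balanced subgraph, deleting a bridge of $S(H)$ can shatter it into pieces, each of which is strictly smaller than the original. Thus the search space for the post-deletion current balance is constrained by the new connectivity structure, and nothing in the definition prevents the optimum from dropping.

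A minimal construction suffices. I would take $H$ to be a path on $n \geq 3$ vertices, $v_1 - v_2 - \cdots - v_n$, with every edge labelled $+1$. The whole graph is trivially balanced with the partition $(V(H), \emptyset)$, and it is connected, so $\Delta(H) = n$. Now pick any internal edge $e = (v_i, v_{i+1})$ and let $B = \{e\}$. The deletion splits $H_B$ into two vertex-disjoint paths, on $i$ and $n-i$ vertices respectively. Each of these paths is itself a connected balanced subgraph, and any connected induced subgraph of $H_B$ lies entirely within one of the two components. Consequently
\begin{equation*}
\Delta(H_B) \;=\; \max(i, n-i) \;<\; n \;=\; \Delta(H),
\end{equation*}
so $f(B) = \Delta(H_B) - \Delta(H) < 0$, which is exactly the non-monotonicity claim.

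The only step that needs any care is the observation that no other induced subgraph of $H_B$ can recover the lost vertices; this is immediate once one notes that the two components share no vertex, so any connected subgraph is confined to one of them. I do not anticipate a genuine obstacle here: the intuitive driver is that $\Delta$ is a \emph{connected}-balanced measure, and edge deletions can destroy connectivity. Minor cosmetic choices remain (e.g., whether to present the counterexample on a path or directly on a fragment of Fig.~\ref{fig:cycle_example}, and whether to include a negative-edge variant to emphasize that both sign types can trigger the drop), but none of them affect the argument.
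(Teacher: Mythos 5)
Your proposal is correct and takes essentially the same route as the paper, which also uses a path (on four vertices, with one negative edge) and observes that deleting any edge disconnects it and drops the current balance from $4$ to at most $3$. Both arguments rest on the same mechanism: $\Delta$ counts vertices of a \emph{connected} balanced subgraph, so an edge deletion that destroys connectivity can only confine the optimum to one of the resulting components.
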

\vspace{-0.05in}
\vspace{-0.05in}
\begin{proof} Consider the path $a-b-c-d$ with only edge $(b,c)$ being negative. The current balance is 4 since the entire graph is balanced. If we delete any edge, the balance decreases to at most $3$.
\end{proof}
\vspace{-0.05in}
\begin{table}[t]
\vspace{-0.10in}
\centering
{\scriptsize
\begin{tabular}{cl}
\toprule
\textbf{Symbol} &\textbf{Definition and Description} \\
\midrule
$\Gamma=((V,E),\sigma)$  & Signed undirected graph with sign fn. $\sigma$ \\
$S(\Gamma)$ & Largest balanced (connected and induced) subgraph of $\Gamma$ \\
$\Delta(\Gamma)$ &  $|V(S(\Gamma))|$  \\
$\mathbb{C}$ & Candidate edge set\\
$b$ & Budget (i.e., $\#$edges to be deleted) \\
$L(\Gamma)$ & Laplacian matrix of signed graph $\Gamma$\\
$\lambda_1(\Gamma)$ & Smallest eigenvalue of $L(\Gamma)$\\
$\bm{u,v}$ & Vectors (bold lower case)\\
$v_i$ & $i^{th}$ entry of $\bm{v}$\\
$H_X$ & Subgraph $H$ after deleting edges $X \subseteq \mathbb{C}$\\
$cep(H, x)$ & Set of contradictory edge-pairs for subgraph\\
 & $H$ with one end at node $x$\\
\bottomrule
\end{tabular}}
\caption{Frequently used symbols}
\label{tab:symbols}
\vspace{-0.2in}
\end{table}
A function $f(.)$ is \emph{submodular} \cite{kempe} if the marginal gain by adding an element $e$ to a subset $S$ is equal or higher than the same in a superset $T$. Mathematically, it satisfies: 
\begin{equation}
	f(S\cup \{e\})-f(S)\geq f(T \cup \{e\})-f(T)
\end{equation}
for all elements $e$ and all pairs of sets $S \subseteq T$ and $e\notin S, e \notin T$.

\vspace{-0.05in}
\begin{lem}
\label{lem:submodular}
$f(B)$ is not sub-modular \footnote{We show a stronger result that it is not even proportionally-submodular in Sec. \ref{app:propsubmodular}.}. 
\end{lem}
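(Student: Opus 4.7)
The plan is to exhibit a concrete counterexample on a small signed graph, taking the target subgraph $H$ to be the one drawn in Figure~\ref{fig:cycle_example}(e). The intuition for non-submodularity is that the balance function responds to edge deletions in a non-local, cascading way: the effect of deleting an edge $e$ depends on how previously-deleted edges have already reshaped cycles and connected components, so the marginal contribution of $e$ can actually be larger (or, more easily to arrange, less negative) when it is added to a bigger deletion set. I would exploit this by choosing a configuration in which extending a small deletion set $S$ by one extra edge causes severe fragmentation (destroying a large balanced subgraph), while the same edge extending a larger set $T\supset S$ is less damaging because $T$ has already dismantled part of the structure.

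Concretely, I would first verify $\Delta(H)=4$, since the unbalanced triangle on $\{(01),(11),(00)\}$ prevents the full five-node graph from being balanced, while the positive $4$-cycle at the top provides a balanced induced subgraph of size $4$. I would then let $\mathbb{C}$ contain the four labeled edges $e_1,e_2,e_3,e_4$ of the figure and choose $S=\{e_1,e_3\}$, $T=\{e_1,e_2,e_3\}$ and $e=e_4$. A short case analysis of each of $H_S,\, H_T,\, H_{S\cup\{e\}}$ and $H_{T\cup\{e\}}$ (identifying in each the largest connected balanced induced subgraph) should give $\Delta(H_S)=5$ (the remaining edges form a balanced spanning tree on all five nodes), $\Delta(H_T)=4$ (node $(11)$ becomes isolated and the rest is a balanced $4$-node tree), and $\Delta(H_{S\cup\{e\}})=\Delta(H_{T\cup\{e\}})=3$ (removing $e_4$ further splits the graph into components of sizes $3$ and $2$). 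Substituting into $f(B)=\Delta(H_B)-\Delta(H)$, the marginal gains are $f(S\cup\{e\})-f(S)=-2$ and $f(T\cup\{e\})-f(T)=-1$, so the submodularity inequality $f(S\cup\{e\})-f(S)\ge f(T\cup\{e\})-f(T)$ fails.

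The main obstacle is not technical depth but a combinatorial search: each $\Delta$-value is a short inspection of a five-node graph, but one must locate a triple $(S,T,e)$ that actually violates submodularity. Many of the natural candidates, such as $S=\emptyset$ paired with a singleton $T$, or pairs of edges whose individual deletions already push $\Delta$ from $4$ to the maximum value $5$, give equal or diminishing marginal gains and are therefore consistent with submodularity. The counterexample has to be engineered so that $S$ and $e$ together cause an outsized drop in $\Delta$ through fragmentation while $T$ has already absorbed part of that structural damage, making $e$'s marginal impact less negative on $T$ than on $S$; the choice above accomplishes exactly this.
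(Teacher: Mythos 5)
Your counterexample is correct. Taking $H$ to be the five-node graph of Fig.~\ref{fig:cycle_example}(e), one checks $\Delta(H)=4$ (the all-positive $4$-cycle is balanced while the triangle containing the negative edge is not), and your values $\Delta(H_S)=5$, $\Delta(H_T)=4$ and $\Delta(H_{S\cup\{e_4\}})=\Delta(H_{T\cup\{e_4\}})=3$ for $S=\{e_1,e_3\}$, $T=\{e_1,e_2,e_3\}$ all hold, so the marginal gains of $e_4$ are $-2$ over $S$ and $-1$ over $T$, violating the submodularity inequality. The paper works on the same figure but with a different triple ($S=\{e_4\}$, $T=\{e_1,e_4\}$, element $e_2$) and a different mechanism: there the intended violation is a complementarity effect --- the excluded degree-three node enters $S(H)$ only once both $e_1$ and $e_2$ are deleted, so $e_2$'s marginal gain is $0$ over the smaller set and positive over the larger one --- whereas yours is a fragmentation effect that exploits non-monotonicity (both marginal gains are negative, just less so for the larger set). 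Each route buys something. The paper's mechanism survives the connectivity-preserving restriction of Observation~\ref{obs:monotonic}, under which $f$ becomes monotone and your example is excluded (deleting $e_4$ on top of $S$ increases the number of components), so it is the obstruction that actually matters in the regime where the later greedy analysis operates. Your instance, however, is the one that withstands careful verification: the paper's stated values $f(S\cup\{e_2\})=0$ and $f(T\cup\{e_2\})=1$ overlook that deleting both $e_4$ and $e_2$ isolates a vertex of the $4$-cycle, and since $S(\cdot)$ must be connected this shifts both quantities down by one, leaving equal marginal gains of $-1$; so, as literally written, the paper's triple does not exhibit a violation, while yours does.
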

\vspace{-0.05in}
\vspace{-0.05in}
\begin{proof}
   In Fig.~\ref{fig:cycle_example}e, let $S = \{ e_4\}, T = \{e_1, e_4\}$. 
    Here, $f(S \cup \{e_2\}) = 0, f(T\cup \{e_2\}) = 1$. Thus, $f(S \cup \{e_2\})- f(S) < f(T \cup \{e_2\})- f(T)$.
\end{proof}
\vspace{-0.05in}

Owing to NP-hardness, devising an optimal algorithm for \textsc{Mbed} is not feasible in polynomial time. Furthermore, due to the optimization function being non-monotonic and non-submodular, greedy algorithms exploiting these properties are also not applicable. We overcome these computational challenges through two different approaches: a \emph{spectral} approach built on \emph{signed graph Laplacians} (\S~\ref{sec:methods_spectral}) and approximation schemes based on \emph{pseudo-submodular} optimization (\S~\ref{sec:methods_submodular}). 


\vspace{-0.05in}
\section{The Spectral Approach}
\label{sec:methods_spectral}


Given a signed graph $\Gamma=((V,E),\sigma)$, let $A$ be its adjacency matrix where $A_{ij} = \sigma(i,j)$ for $(i,j)\in E$, and $0$ otherwise. Furthermore, let $D$ be the diagonal degree matrix defined as $D_{ii} = d(i)$, where $d(i)$ is the vertex degree, i.e., the total number of edges incident on vertex $i$. We define the corresponding signed Laplacian as follows.

\vspace{-0.05in}
\begin{defn}[Signed Laplacian]
The Laplacian of a signed graph $\Gamma = ((V,E), \sigma)$, denoted as $L(\Gamma)$ is a symmetric matrix $|V| \times |V|$ matrix defined as $L(\Gamma)=D(\Gamma)-A(\Gamma)$, i.e., $L_{ii} = d_i$, and  $L_{ij} = -\sigma(i,j)$ if $(i,j) \in E$ and 0 otherwise for $i\ne j$. 
\end{defn}
\vspace{-0.05in}

\vspace{-0.05in}
\begin{lem}[\cite{hou2003laplacian}]
\label{lem:eigenvalue_bal}
Given a signed graph $\Gamma=((V,E),\sigma)$, 
$\Gamma$ is balanced iff the smallest eigenvalue of the Laplacian $\lambda_1(\Gamma)=0$.
\end{lem}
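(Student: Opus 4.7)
The plan is to reduce the claim to an analysis of the quadratic form $\mathbf{x}^{\top} L(\Gamma)\mathbf{x}$, which I expect to simplify to a sum of squares because of the particular way $\sigma(i,j)$ appears in both the diagonal (via the total degree $d_i$) and the off-diagonal entries. Concretely, I would start from $L=D-A$ and expand
$$\mathbf{x}^{\top}L(\Gamma)\mathbf{x} \;=\; \sum_{i}d_{i}x_{i}^{2} \;-\; 2\sum_{(i,j)\in E}\sigma(i,j)\,x_{i}x_{j},$$
then use $\sum_{i} d_{i} x_{i}^{2} = \sum_{(i,j)\in E}(x_{i}^{2}+x_{j}^{2})$ to regroup the terms into
$$\mathbf{x}^{\top}L(\Gamma)\mathbf{x} \;=\; \sum_{(i,j)\in E}\bigl(x_{i}-\sigma(i,j)\,x_{j}\bigr)^{2}.$$
This single identity does most of the work: it shows $L(\Gamma)\succeq 0$, so $\lambda_{1}(\Gamma)\ge 0$, and it reduces the biconditional to showing that a nonzero vector $\mathbf{x}$ satisfying $x_{i}=\sigma(i,j)x_{j}$ on every edge exists iff $\Gamma$ is balanced.

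For the ``balanced $\Rightarrow$ $\lambda_{1}=0$'' direction, I would take a balancing partition $(V_{1},V_{2})$ and set $x_{i}=+1$ for $i\in V_{1}$ and $x_{i}=-1$ for $i\in V_{2}$. A positive intra-partition edge has $\sigma(i,j)=+1$ with $x_{i}=x_{j}$, while a negative cross-partition edge has $\sigma(i,j)=-1$ with $x_{i}=-x_{j}$; in both cases $x_{i}-\sigma(i,j)x_{j}=0$. Positive cross-edges and negative intra-edges are forbidden by the balance definition, so every summand vanishes, giving a nonzero null vector and thus $\lambda_{1}(\Gamma)=0$.

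For the converse, assume $\lambda_{1}(\Gamma)=0$ with a corresponding nonzero eigenvector $\mathbf{x}$. The sum-of-squares identity forces $x_{i}=\sigma(i,j)\,x_{j}$ along every edge. Restricting attention to a connected component on which $\mathbf{x}$ is not identically zero, the recurrence propagates $|x_{i}|$ to a common positive value, so each $x_{i}\in\{+c,-c\}$; defining $V_{1}=\{i:x_{i}>0\}$, $V_{2}=\{i:x_{i}<0\}$ yields a balancing partition on that component (every positive edge joins same-signed vertices and every negative edge joins opposite-signed ones). The main subtlety I would address is the disconnected case: the block-diagonal structure of $L(\Gamma)$ means $\lambda_{1}(\Gamma)=\min_{C}\lambda_{1}(C)$ over components $C$, and each component admits its own balancing partition by the same per-component argument (for a component whose eigenvector entry is zero, one simply invokes Hou's result on that component, or equivalently notes that one can construct an independent $\pm 1$ witness there). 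Assembling these per-component partitions produces a global $(V_{1},V_{2})$ satisfying the balance condition, completing the equivalence.
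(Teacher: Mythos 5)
The paper offers no proof of this lemma---it is imported verbatim from Hou et al.~\cite{hou2003laplacian}---so there is nothing to compare against except the standard argument, which is essentially what you give. Your central identity $\mathbf{x}^{\top}L(\Gamma)\mathbf{x}=\sum_{(i,j)\in E}(x_i-\sigma(i,j)x_j)^2$ is exactly the quadratic form the paper itself records as Eq.~\eqref{eq:gen_qform} in the proof of Lemma~\ref{lem:edge_set_removal}, and both directions of your argument are correct \emph{for connected graphs}: the $\pm1$ indicator of a balancing partition is a null vector, and conversely a null vector propagates a constant modulus along a connected graph and its sign pattern yields a balancing partition.

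The genuine problem is your handling of the disconnected case, and it is not a repairable gap in the write-up: the biconditional as stated is simply false for disconnected graphs, and your proposed patch cannot work. Since $L(\Gamma)$ is block-diagonal over components, $\lambda_1(\Gamma)=\min_{C}\lambda_1(C)$, so $\lambda_1(\Gamma)=0$ only certifies that \emph{some} component is balanced. Take the disjoint union of an all-positive triangle and a triangle with exactly one negative edge: the first component gives $\lambda_1=0$ for the union, yet no global partition $(V_1,V_2)$ can satisfy the balance condition because the second component admits none. Your sentence ``for a component whose eigenvector entry is zero, one simply invokes Hou's result on that component, or equivalently notes that one can construct an independent $\pm1$ witness there'' is where the argument breaks: a component on which the null vector vanishes need not be balanced, so no witness exists there, and invoking the very result being proved is circular. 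The correct statement (and the one actually proved in \cite{hou2003laplacian}) assumes $\Gamma$ connected, or else must read ``$\lambda_1(\Gamma)=0$ iff some connected component of $\Gamma$ is balanced'' (more precisely, the multiplicity of the eigenvalue $0$ equals the number of balanced components). You should either add the connectivity hypothesis---harmless here, since the paper applies the lemma to connected subgraphs such as the LCC---or drop the final paragraph and restrict the claim accordingly.
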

\vspace{-0.05in}

It has been further shown that $\lambda_1(\Gamma)$ is a measure of how "far" the graph is from being balanced~\cite{li2009note, belardo2014balancedness}. 

\vspace{-0.05in}
\begin{lem}[\cite{belardo2014balancedness}]
\label{lem:frust_no_ind}
Given a signed graph $\Gamma=((V,E),\sigma)$ with $\lambda_1(\Gamma)$ as the smallest eigenvalue of the corresponding Laplacian.
\begin{equation*}
    \lambda_1 (\Gamma) \le \nu(\Gamma) \le \epsilon(\Gamma)
\end{equation*}
where $\nu(\Gamma)$ ($\epsilon(\Gamma)$) denotes the \textit{frustration number} (\textit{frustration index}), i.e., the minimum number of vertices (edges) to be deleted such that the signed graph is balanced. 
\end{lem}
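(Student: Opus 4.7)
The plan is to prove the two inequalities separately, since they rely on different tools: $\nu(\Gamma) \le \epsilon(\Gamma)$ admits a short combinatorial argument, whereas $\lambda_1(\Gamma) \le \nu(\Gamma)$ needs a matrix-analytic argument built on Cauchy's interlacing theorem together with Weyl's inequality.

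For $\nu(\Gamma) \le \epsilon(\Gamma)$, I would let $F \subseteq E$ be a minimum balancing edge set, so that $|F| = \epsilon(\Gamma)$ and $\Gamma - F$ is balanced. From every edge of $F$ pick one endpoint arbitrarily and collect these endpoints into a vertex set $U$, giving $|U| \le \epsilon(\Gamma)$. Every edge of $F$ has an endpoint in $U$, so the induced subgraph $\Gamma - U$ is also a subgraph of $\Gamma - F$. Since induced subgraphs of a balanced signed graph are balanced (the bipartition simply restricts), $\Gamma - U$ is balanced and hence $\nu(\Gamma) \le |U| \le \epsilon(\Gamma)$.

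For $\lambda_1(\Gamma) \le \nu(\Gamma)$ the key step is the per-vertex bound $\lambda_1(L(\Gamma)) \le \lambda_1(L(\Gamma - v)) + 1$, obtained as follows. Fix $v \in V$ and let $L'$ be the principal submatrix of $L(\Gamma)$ formed by deleting the row and column indexed by $v$. Cauchy interlacing for real symmetric matrices gives $\lambda_1(L(\Gamma)) \le \lambda_1(L')$. Crucially, $L'$ is not equal to $L(\Gamma - v)$: its diagonal still records degrees taken in $\Gamma$, so $L' = L(\Gamma - v) + P_v$, where $P_v$ is the diagonal matrix whose $(i,i)$ entry is $1$ when $i$ is a neighbor of $v$ and $0$ otherwise. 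Since $P_v$ is positive semidefinite with largest eigenvalue at most $1$, Weyl's inequality yields $\lambda_1(L') \le \lambda_1(L(\Gamma - v)) + 1$, completing the per-vertex bound.

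Applying this inequality once for each vertex in a minimum balancing set $U$ with $|U| = \nu(\Gamma)$ then yields $\lambda_1(L(\Gamma)) \le \lambda_1(L(\Gamma - U)) + \nu(\Gamma)$, and by Lemma \ref{lem:eigenvalue_bal} the balanced graph $\Gamma - U$ has $\lambda_1(L(\Gamma - U)) = 0$, closing the argument. The main obstacle is the diagonal mismatch between $L'$ and $L(\Gamma - v)$: Cauchy interlacing alone only compares $\lambda_1(L(\Gamma))$ with $\lambda_1(L')$, so one must separately absorb the contribution of $P_v$ via Weyl, and this is exactly where the ``$+1$'' per removed vertex, and ultimately the final factor $\nu(\Gamma)$, enter.
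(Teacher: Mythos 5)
Your proof is correct. Note that the paper does not prove this lemma at all---it is imported verbatim from Belardo's work on balancedness and the least Laplacian eigenvalue---so there is no in-paper argument to compare against; what you have written is a self-contained derivation of a cited result. Both halves check out. For $\nu \le \epsilon$, picking one endpoint per edge of a minimum balancing edge set $F$ gives a vertex set $U$ whose removal kills every edge of $F$, and $\Gamma - U$ is then a subgraph of the balanced graph $\Gamma - F$, hence balanced. For $\lambda_1 \le \nu$, the decomposition $L' = L(\Gamma - v) + P_v$ is exactly right: Cauchy interlacing alone would compare $\lambda_1(L(\Gamma))$ only with the principal submatrix $L'$, whose diagonal still carries the degrees in $\Gamma$, and your use of Weyl (or equivalently, evaluating the Rayleigh quotient of $L'$ at the minimizing vector of $L(\Gamma - v)$ and bounding $u^T P_v u \le u^T u$) correctly absorbs the unit diagonal perturbation, yielding the per-vertex ``$+1$''. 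Two small points worth making explicit if you write this up: the iteration terminates in a nonempty graph since a single vertex is trivially balanced, so $\nu(\Gamma) \le |V|-1$; and at the last step you only need the forward direction of Lemma~\ref{lem:eigenvalue_bal} (balanced $\Rightarrow \lambda_1 = 0$), which holds even when $\Gamma - U$ is disconnected because the signed Laplacian is always positive semidefinite and each block of a balanced disconnected graph contributes a zero eigenvalue.
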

\vspace{-0.05in}

Note that $ \Delta(\Gamma) = |V| - \nu(\Gamma) $. Through Lemma \ref{lem:frust_no_ind}, for any given subgraph $H$, we have:
\vspace{-0.05in}
\begin{equation} \label{eq:balance_eigen}
    \Delta(H) = |V(H)| - \nu(H)\le |V(H)| - \lambda_1(H)
\end{equation}

\vspace{-0.10in}
\subsection{An Upperbound Based Algorithm}\label{sec:spectral_top}
Since directly maximizing $\Delta(H)$ is NP-hard, we turn our focus to the upperbound provided by Eq.~\eqref{eq:balance_eigen}. It is evident that maximizing the upper bound is equivalent to  minimizing $\lambda_1(H)$. 
To minimize $\lambda_1(H)$, we first derive the following upper bound. 
\vspace{-0.05in}

\begin{lem}
\label{lem:edge_set_removal}
Given a signed graph $\Gamma$, a subgraph $H$, a candidate edge set $\mathbb{C}$, for a set $X \subseteq \mathbb{C}$, we have
\begin{equation}
    \lambda_1(H_X) \le \lambda_1(H) - \sum_{(i, j) \in X} {(v_i - \sigma(i, j)v_j)^2}
\end{equation}
$\bm{v}$ denotes the unit eigenvector of Laplacian $L(H)$ corresponding to the minimum eigenvalue $\lambda_1(H)$ and $v_i$ denotes the $i^{th}$ entry of $\bm{v}$. Recall, $H_X$ denotes the subgraph formed due to removal of edge set $X$ from $H$.
\end{lem}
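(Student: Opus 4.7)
The plan is to use the standard quadratic-form representation of the signed Laplacian together with the Rayleigh--Ritz characterization of the smallest eigenvalue. The key identity I would establish first is that for any vector $\bm{x} \in \RR^{|V(H)|}$,
\begin{equation*}
\bm{x}^T L(H) \bm{x} = \sum_{(i,j) \in E(H)} \bigl(x_i - \sigma(i,j)\, x_j\bigr)^2,
\end{equation*}
which follows directly by expanding the definition $L(H) = D(H) - A(H)$ with $A_{ij} = \sigma(i,j)$, grouping the contribution of each edge $(i,j)$, and noting that $\sigma(i,j)^2 = 1$. This identity is the signed analog of the unsigned Laplacian quadratic form and is the workhorse of the proof.

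Next I would plug in the unit eigenvector $\bm{v}$ associated with $\lambda_1(H)$. By definition of an eigenpair, $\bm{v}^T L(H) \bm{v} = \lambda_1(H)$. Applying the identity above to the edge-deleted graph $H_X$, whose edge set is $E(H) \setminus X$, gives
\begin{equation*}
\bm{v}^T L(H_X) \bm{v} \;=\; \sum_{(i,j)\in E(H)\setminus X} \bigl(v_i - \sigma(i,j)\, v_j\bigr)^2 \;=\; \lambda_1(H) \;-\; \sum_{(i,j)\in X} \bigl(v_i - \sigma(i,j)\, v_j\bigr)^2.
\end{equation*}
Here I am implicitly using the fact that the vertex set is preserved under edge deletion, so $\bm{v}$ (indexed by $V(H)$) is still a valid test vector for $L(H_X)$, and that $L(H_X)$ differs from $L(H)$ only in the contributions of the edges in $X$ (both their off-diagonal entries and the corresponding degree terms).

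Finally, because $L(H_X)$ is symmetric, the Courant--Fischer / Rayleigh--Ritz theorem gives
\begin{equation*}
\lambda_1(H_X) \;=\; \min_{\bm{x}\neq \bm{0}} \frac{\bm{x}^T L(H_X)\bm{x}}{\bm{x}^T \bm{x}} \;\le\; \bm{v}^T L(H_X)\bm{v},
\end{equation*}
since $\|\bm{v}\| = 1$. Combining this with the equality in the previous step yields the claimed bound. I do not anticipate a real obstacle: the only subtlety is to verify the signed-Laplacian quadratic identity (a one-line calculation once the cross terms are collected) and to make sure the bookkeeping on $D(H_X)$ vs.\ $D(H)$ is absorbed correctly into the per-edge squared terms; both are routine.
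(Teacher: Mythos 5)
Your proposal is correct and follows essentially the same route as the paper's proof: the signed-Laplacian quadratic form $\bm{v}^T L(H)\bm{v} = \sum_{(i,j)\in E(H)}(v_i - \sigma(i,j)v_j)^2$ (which the paper writes split over $E^+$ and $E^-$), the edge-wise decomposition of $\bm{v}^T L(H_X)\bm{v}$, and the Rayleigh-quotient bound $\lambda_1(H_X) \le \bm{v}^T L(H_X)\bm{v}$ for the unit eigenvector $\bm{v}$. No gaps; the bookkeeping on the degree matrix is handled exactly as you describe.
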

\vspace{-0.05in}
\vspace{-0.05in}
\begin{proof}
Given a signed graph $\Gamma$ with $L(\Gamma)$ being its corresponding Laplacian. We know for any $\bm{u} \in \RR^{|V|}$, 
\begin{equation}
\label{eq:gen_qform}
\bm{u}^T L(\Gamma) \bm{u} = \sum_{(i,j) \in E^+} (u_i - u_j)^2  + \sum_{(i,j) \in E^-} (u_i + u_j)^2.
\end{equation}
Now, using Eq. \eqref{eq:gen_qform} for $L(H_X)$ and (unit) eigenvector $\bm{v}$ of $L(H)$ corresponding to $\lambda_1(H)$, we get 

\vspace{-0.15in}
    \begin{align*}
        \bm{v}^T L(H_X) \bm{v} &= \sum_{(i, j) \in E(H_X)} {(v_i - \sigma(i, j)v_j)^2} \\
        &= \sum_{(i, j) \in E(H)} {(v_i - \sigma(i, j)v_j)^2} - \sum_{(i, j) \in X} {(v_i - \sigma(i, j)v_j)^2} \\
        &= \bm{v}^T L(H) \bm{v} - \sum_{(i, j) \in X} {(v_i - \sigma(i, j)v_j)^2}.
    \end{align*}
 \vspace{-0.10in}
   
    Note that as $\lambda_1(H_X) = \min_{\bm{z}} {\frac{\bm{z}^T L(H_X) \bm{z}}{\bm{z}^T \bm{z}}}$, $\lambda_1(H_X) \le \frac{\bm{v}^T L(H_X) \bm{v}}{\bm{v}^T \bm{v}}$. Substituting $\frac{\bm{v}^T L(H) \bm{v}}{\bm{v}^T \bm{v}} = \lambda_1(H)$ and $\bm{v}^T \bm{v} = 1$, the result is proved. 
\end{proof}
\vspace{-0.10in}

 We denote the upper bound as the function $g$, where $g$ is
 
 \vspace{-0.10in}
    \begin{equation*}
        g(X) = \lambda_1(H) - \sum_{(i, j) \in X} {(v_i - \sigma(i, j)v_j)^2}
    \end{equation*}  
\vspace{-0.05in}
    
The upper bound $g(X)$ is easier to optimize than minimizing $\lambda_1(H)$. In particular, $g(X)$ is a modular function and hence greedily choosing the top-$b$ edges will achieve an optimal solution \cite{nemhauser1978}.
\vspace{-0.05in}

\begin{lem} \label{lemma:modular}
    $g(X)$ is modular (submodular and supermodular). 
\end{lem}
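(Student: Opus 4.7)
The plan is to observe that $g(X)$ is simply a constant plus a sum of per-edge weights, and then show that any such function is modular by direct computation of marginal gains.

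The key point I would stress up front is that the eigenvector $\bm{v}$ and the scalar $\lambda_1(H)$ appearing in the definition of $g$ are fixed quantities determined by the (unmodified) subgraph $H$, not by $X$. In particular, even though Lemma~\ref{lem:edge_set_removal} establishes $g$ as an upper bound on $\lambda_1(H_X)$, the expression for $g(X)$ itself does not re-diagonalize $L(H_X)$, so $\bm{v}$ never changes as we vary $X$.

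Given this, I would define the per-edge weight $w(i,j) = -(v_i - \sigma(i,j) v_j)^2$ for every candidate edge $(i,j) \in \mathbb{C}$, which is a constant once $H$ is fixed. Then $g$ rewrites as
\begin{equation*}
    g(X) \;=\; \lambda_1(H) \;+\; \sum_{(i,j) \in X} w(i,j),
\end{equation*}
which is a set function of the form ``constant plus sum of element weights''. It is a standard fact that any such function is modular, but for completeness I would verify it by computing the marginal gain of adding an edge $e = (i,j) \notin X$:
\begin{equation*}
    g(X \cup \{e\}) - g(X) \;=\; w(i,j),
\end{equation*}
which depends only on $e$, not on $X$. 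Hence for any $S \subseteq T$ with $e \notin T$, we have $g(S \cup \{e\}) - g(S) = w(i,j) = g(T \cup \{e\}) - g(T)$. This equality immediately yields both the submodular inequality (with ``$\geq$'') and the supermodular inequality (with ``$\leq$''), so $g$ is modular.

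There is really no obstacle here beyond making the independence of $\bm{v}$ from $X$ explicit; once that is noted, the proof is a one-line algebraic check. I would therefore keep the write-up short, leading with the definition of $w$, the rewriting of $g(X)$ as constant plus sum, and the trivial marginal-gain calculation, and conclude by invoking the equivalence between constant-plus-sum functions and modular set functions.
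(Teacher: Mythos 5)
Your proposal is correct and follows essentially the same route as the paper's own proof: both rest on the observation that $g(X)$ is a constant plus a sum of fixed per-edge weights (with $\bm{v}$ determined by $H$ alone), so that marginal gains decompose additively. The paper verifies this by showing $g_X(Y)=\sum_{e\in Y}g_X(e)$ for disjoint $X,Y$, whereas you check that the single-element marginal gain is independent of the base set; these are equivalent characterizations of modularity.
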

\vspace{-0.16in}

\begin{proof} 
The proof is in Section ~\ref{app:modular}.
\end{proof}
\textbf{Algorithm:} Since $g(X)$ is modular, we simply compute $g(\{e\})$, for each edge  $e=(i,j) \in \mathbb{C}$ and select the top-$b$ edges based on the value of $(v_i - \sigma(i, j)v_j)^2$, where $b$ is the budget.

The algorithm involved in this approach requires to compute the smallest \emph{eigenpair} of $L(H)$ only once. So, we can use the Locally Optimal Block Preconditioned Conjugate Gradient (LOBPCG) method proposed by Knyazev \cite{knyazev2001toward}. This method has theoretical guarantee on linear convergence, and the costs per iteration and the memory use are competitive with those of the Lanczos method \footnote{Lanczos algorithm \cite{orecchia2012approximating} (with Fast Multipole method \cite{coakley2013fast}) has a time complexity of $O(d_{avg}|V(H)|k)$ where $d_{avg}$ is the average number of nonzero elements in a row of the matrix and $k$ is the number of iterations of the algorithm.}.
\vspace{-0.05in}
\subsection{ Perturbation \& Iterative Algorithm}
We extend the described upper bound in Lemma \ref{lem:edge_set_removal} into a tighter expression and design another way to solve MBED in an iterative fashion. Similarly, the main idea is to compute change in the smallest eigenvalue $\lambda_1(H)$ of the Laplacian with a single edge deletion. We drop $H$ and use $\lambda_1(H)=\lambda_1$ where the context is understood. 

Let $\hat{\lambda}_1$ be the (exact) smallest eigenvalue of $\hat{L}(H)$, where $\hat{L}(H)$ is the perturbed version of $L(H)$ obtained by deleting a single edge $(i,j)\in E(H)$. Let $\delta=\lambda_2-\lambda_1$ be the eigengap of $L(H)$. For graphs that have sufficiently large eigengaps, we show the following result.

\vspace{-0.05in}
\begin{lem}\label{lemma:edge_perturb}
 Given $\lambda_1$ is the smallest eigenvalue of $L(H)$ and $\bm{v}$ is the corresponding unit eigenvector, for $\delta\geq 4$ we have $\hat{\lambda}_1=\lambda_1-\left(v_i-\sigma(i,j)v_j\right)^2+O(1)$. 
\end{lem}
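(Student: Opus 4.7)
The plan is to treat the single-edge deletion as a bounded additive perturbation of $L(H)$ and invoke Rayleigh--Schr\"odinger perturbation theory. Writing $\hat{L}(H) = L(H) + P$, the perturbation matrix $P$ corresponding to removing the edge $(i,j)$ is rank at most two: its only nonzero entries are $P_{ii}=P_{jj}=-1$ and $P_{ij}=P_{ji}=\sigma(i,j)$. Diagonalising the $2\times 2$ nontrivial block gives eigenvalues $0$ and $-2$, so $\|P\|_2 \leq 2$. A direct computation, or equivalently a single-edge application of the quadratic form in Eq.~\eqref{eq:gen_qform}, yields the first-order (Rayleigh) correction
\[
\bm{v}^T P \bm{v} \;=\; -\left(v_i - \sigma(i,j)\,v_j\right)^2,
\]
which already accounts for the leading term in the claimed expansion.

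Next, I would expand $\hat{\lambda}_1$ in the orthonormal eigenbasis $\{\bm{v}_1=\bm{v},\bm{v}_2,\ldots\}$ of $L(H)$ using analytic perturbation theory:
\[
\hat{\lambda}_1 \;=\; \lambda_1 + \bm{v}^T P \bm{v} \;+\; \sum_{k\geq 2}\frac{(\bm{v}_k^T P \bm{v})^2}{\lambda_1-\lambda_k} \;+\; R,
\]
where $R$ collects all third- and higher-order terms in $P$. Since $\lambda_k - \lambda_1 \geq \delta$ for every $k\geq 2$, Parseval's identity bounds the second-order term in absolute value by
\[
\frac{1}{\delta}\sum_{k\geq 2}(\bm{v}_k^T P \bm{v})^2 \;\leq\; \frac{\|P\bm{v}\|_2^2}{\delta} \;\leq\; \frac{\|P\|_2^2}{\delta} \;\leq\; \frac{4}{\delta}.
\]
A standard Neumann-series / Kato resolvent argument shows that the tail satisfies $|R| = O(\|P\|_2^3/\delta^2)$, valid precisely because $\|P\|_2 < \delta$. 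Plugging in $\delta\geq 4$ makes both the second-order contribution and $R$ into constants of order one, which gives the advertised $\hat{\lambda}_1 = \lambda_1 - (v_i-\sigma(i,j)v_j)^2 + O(1)$.

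The main obstacle is the rigorous control of the higher-order correction: naive first-order perturbation identities fail whenever the perturbation is comparable to the eigengap, and the non-self-adjoint bookkeeping of analytic perturbation theory must be tracked carefully to guarantee convergence of the series for $\hat{\lambda}_1$. The hypothesis $\delta \geq 4$ is exactly what is needed to ensure $\|P\|_2 < \delta$, so that the Kato analytic perturbation series converges and its tail beyond the first-order term sums to a quantity that is $O(1)$ uniformly in the choice of deleted edge. Without this gap assumption the second-order term $4/\delta$ alone could dominate and the claimed approximation would break down.
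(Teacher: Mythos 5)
Your proposal is correct and follows essentially the same route as the paper: write $\hat{L}(H)=L(H)+P$ with the rank-two perturbation ($P_{ii}=P_{jj}=-1$, $P_{ij}=P_{ji}=\sigma(i,j)$, so $\|P\|_2\le 2$), compute the first-order Rayleigh correction $\bm{v}^TP\bm{v}=-(v_i-\sigma(i,j)v_j)^2$, and absorb the higher-order terms into $O(1)$ using the boundedness of $P$. The only difference of emphasis is that the paper takes the generic first-order expansion with error $O(\|P\|_F^2)=O(1)$ and spends the hypothesis $\delta\ge 4$ entirely on a Weyl-inequality argument ($\tilde{\lambda}_1\le\lambda_1+2$ and $\tilde{\lambda}_i\ge\lambda_i-2$ for $i\ge 2$) showing that the perturbed branch of $\lambda_1$ remains the \emph{smallest} eigenvalue of $\hat{L}(H)$, whereas you spend the gap on bounding the second-order sum and the series tail; you should still add the one line confirming that the tracked branch is indeed $\hat{\lambda}_1$, which follows from the same bound $\|P\|_2\le 2<\delta$.
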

\vspace{-0.10in}
\begin{proof}
See App. \ref{sec:proof_edge_perturb}
\end{proof}
\vspace{-0.15in}
\subsubsection{Algorithm: }We use Lemma \ref{lemma:edge_perturb} to design an iterative algorithm (Alg. \ref{alg:spectral_edge}). Given $\bm{v}$ as unit eigenvector corresponding to the smallest eigenvalue $\lambda_1$, we define score of an edge $e=(i,j)\in E(H)$ as $ (v_i-\sigma(i,j)v_j)^2$. We use this score to subsequently find the best edge from the candidate edge set $\mathbb{C}$ (lines $4-6$).
 In subsequent iterations (lines $2-8$) of the algorithm, we recompute the eigenpair (line $3$) corresponding to the minimum eigenvalue of the perturbed matrix after the deletion of the best edge (line $7$) and use LOBPCG method for all such iterations to achieve faster convergence.
\vspace{-0.05in}
\subsubsection{Limitations:}
 Alg.~\ref{alg:spectral_edge} does not provide any approximation guarantee and does not directly optimize the objective in MBED. Rather, it minimizes the smallest eigenvalue. Although it is known that in a balanced graph, $\lambda_1=0$, no result is known on the \emph{gradients} of change in balance with that of change in $\lambda_1$, i.e., the relationship between $\Delta(H_X)-\Delta(H)$ with $\lambda_1(H)-\lambda_1(H_X)$. 
To address these weaknesses, we next directly optimize the objective function and show that MBED is \emph{pseudo-submodular}, which in turn allows us to provide an approximation guarantee on quality.
 \begin{algorithm}[t]
\caption{Spectral Edge Deletion}
\label{alg:spectral_edge}
{\scriptsize
\begin{algorithmic}[1] 
  \REQUIRE  The initial subgraph $H$, budget $b$, candidate set $\mathbb{C}$
\ENSURE A set $B$ of $b$ edges
  \STATE $H_0\leftarrow H$, $B\leftarrow \Phi$
   \FOR{$k=1$ to $b$}
   \STATE Compute eigenpair $\lambda_1(H_{k-1})$, $\bm{v}$
   \FOR{$e=(i,j)\in \mathbb{C}\setminus{B}$}
   \STATE Compute $score(e)= (v_i-\sigma(i,j)v_j)^2$
   \ENDFOR
    \STATE $e_k= \text{argmax}_{e\in \mathbb{C}}\text{ } score(e)$ 
    \STATE $B \leftarrow B \cup\{e_k\}$, $E(H_{k})=E(H_{k-1})\setminus{e_k}$
  \ENDFOR
 \RETURN $B$
  \end{algorithmic}}
  \end{algorithm}
\vspace{-0.10in}
\section{Approximation Algorithms}
\label{sec:methods_submodular}




In \S~\ref{sec:character}, we showed that MBED is not monotonic. We next show that if the set of deleted edges $X$ is selected strategically, then monotonicity can be guaranteed. If the optimization function is \emph{monotonic} and \emph{pseudo-submodular}, then greedy algorithms can produce approximation bounds. The rest of the section builds towards this result.
\vspace{-0.05in}
\begin{observation} \label{obs:monotonic}
    If the set of deleted edges $X$ is chosen such that $H_X$ and $H$ have same number of connected components, then the objective function $f(\cdot)$ is monotonic, i.e., $f(S \cup \{e\}) \geq f(S)$ $\forall S, e$.
\end{observation}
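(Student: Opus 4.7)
The plan is to prove monotonicity by showing that an optimal balanced subgraph of $H_S$ (or a substitute of at least the same size) survives as a connected, balanced, induced subgraph in $H_{S \cup \{e\}}$. Let $B$ be the induced subgraph witnessing $\Delta(H_S)$, with the associated signed partition $(V_1, V_2)$ of $V(B)$. The first observation I would use is elementary: balance is preserved under edge deletion, since every edge that remains in the induced subgraph on $V(B)$ in $H_{S \cup \{e\}}$ still respects the partition $(V_1, V_2)$. Thus the only obstruction to monotonicity is the connectivity of the induced subgraph on $V(B)$ after deleting $e$.

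Next, I would split into three cases based on the status of $e = (u,v)$ relative to $B$. If $e \notin E(B)$, the induced subgraph on $V(B)$ is entirely unaffected, so $B$ carries over unchanged to $H_{S \cup \{e\}}$. If $e \in E(B)$ but $e$ is not a bridge of $B$, then removing $e$ still leaves the induced subgraph on $V(B)$ connected (an alternate path within $V(B)$ survives). In both cases the conclusion $\Delta(H_{S \cup \{e\}}) \geq |V(B)| = \Delta(H_S)$ is immediate.

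The genuinely hard case is when $e$ is a bridge of $B$, so that deleting $e$ splits $V(B)$ into two components $B_1 \ni u$ and $B_2 \ni v$ inside $H_{S \cup \{e\}}$. To repair connectivity, I would use the standing hypothesis that $H_{S \cup \{e\}}$ has the same number of connected components as $H$ (and hence as $H_S$): a $u$-$v$ path $P$ must still exist in $H_{S \cup \{e\}}$, and since deleting $e$ has internally disconnected $V(B)$, the path $P$ must pass through a set $W := V(P) \setminus V(B)$ of vertices outside $V(B)$. The goal is then to certify that the induced subgraph on $V(B) \cup W$ in $H_{S \cup \{e\}}$ is connected and balanced, by retaining $B_1$'s partition and flipping $B_2$'s partition if the parity of negative edges along $P$ demands it.

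The main obstacle is certifying balance of this extended induced subgraph: one must handle incidental edges within $W$ and between $W$ and $V(B)$ that are not part of $P$. The key lever here is the maximality of $B$ in $H_S$: had any extension of $V(B)$ by a vertex subset of $W$ been balance-preserving already in $H_S$, then $B$ would not have been a largest balanced connected induced subgraph. This forces every odd-sign cycle introduced by $W$ into the induced subgraph on $V(B) \cup W$ in $H_S$ to pass through $e$; removing $e$ eliminates precisely these violating cycles and yields a connected balanced induced subgraph containing $V(B)$ in $H_{S \cup \{e\}}$, which gives $\Delta(H_{S \cup \{e\}}) \geq |V(B)| = \Delta(H_S)$ as required.
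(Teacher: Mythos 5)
Your decomposition is sound up to the last step, and you correctly isolate the only nontrivial situation: $e$ a bridge of the induced subgraph on $V(B)$. (The peripheral-edge case, which is all the paper actually uses downstream, falls under your first case, since then $e \notin E(B)$ and $B$ survives untouched.) The gap is in your final paragraph: maximality of $B$ in $H_S$ only guarantees that \emph{some} extension of $V(B)$ into $W$ fails to be balanced, i.e.\ that at least one cycle with an odd number of negative edges appears; it does not force \emph{every} such cycle to pass through $e$. A vertex $w \in W$ may carry a contradictory edge-pair into $B_1$ alone (or into $B_2$ alone), producing an odd cycle that avoids $e$ entirely and therefore persists after $e$ is deleted; your repair must then discard vertices of $V(B)$ to restore balance, and nothing bounds how many.

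In fact this case cannot be repaired, because the statement as literally written fails there. Take $B_1$ and $B_2$ to be all-positive triangles on $\{u,a_1,a_2\}$ and $\{v,c_1,c_2\}$, joined only by the positive edge $e=(u,v)$, and add one outside vertex $w$ with $\sigma(w,u)=\sigma(w,v)=+1$ and $\sigma(w,a_1)=\sigma(w,c_1)=-1$. Then $\Delta(H)=6$, realized uniquely by $V(B)$ ($w$ is excluded by the odd triangle $w,u,a_1$). Deleting $e$ keeps $H$ connected, but any connected induced subgraph meeting both sides must now contain $w$, hence must omit one of $\{u,a_1\}$ and one of $\{v,c_1\}$, so $\Delta(H_{\{e\}})=5<6$. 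The paper's own proof elides exactly this point: in its cases (1)--(2) it asserts $\Delta(H)=\Delta(H_X)$ ``since $H_X$ and $H$ have the same number of connected components,'' which conflates connectivity of $H$ with connectivity of the induced subgraph on $V(S(H))$ --- the distinction your bridge case exposes. The observation is safe only under the stronger restriction, implicitly enforced by the paper's algorithms, that deleted edges are peripheral (at most one endpoint in $S(H)$), in which case your first case already yields $\Delta(H_{S\cup\{e\}}) \ge \Delta(H_S)$ with no further work.
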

\vspace{-0.10in}
\begin{proof}
For all of the subsequent discussions, we will use $S(H)$ to denote the largest balanced subgraph of $H$ with the two vertex sets being $V_1$ and $V_2$.
  The deleted edge $X=\{e\}$ can fall in one of three categories. (1) both end points lie in $V_1$ (or equivalently $V_2$), in which case $\Delta(H)=\Delta(H_X)$ since $H_X$ and $H$ have same number of connected components. (2) One endpoint lies in $V_1$ and the other in $V_2$. Even in this case $\Delta(H)=\Delta(H_X)$. (3) One endpoint in $V_1$ (or $V_2$) and the other in $V(H)\setminus \{V_1\cup V_2\}$. In this case, the node in $V_1$ continues to stay there while the other endpoint may move into $V_1$ or $V_2$ and thus $\Delta(H)\leq\Delta(H_X)$.
\end{proof}
\vspace{-0.05in}

Choosing $X$ is in our control. Hence, we may assume that MBED is monotonic by ensuring that $X$ satisfies the constraint outlined in Obs.~\ref{obs:monotonic}. We next establish that although MBED is not submodular (Lem.~\ref{lem:submodular}), it is \emph{pseudo-submodular} (Thm. \ref{thm:local_pseudo_submodular}).





\vspace{-0.05in}
\subsection{Pseudo-Submodularity}
We first prove that our objective function is pseudo-submodular (Thm. \ref{thm:local_pseudo_submodular}) and then provide approximations (Thms. \ref{thm:approx_RG} and \ref{thm:approx_greedy}) via \textit{Randomized Greedy} and \textit{Greedy} algorithms. 
\label{sec:pseudosubmodular}

\vspace{-0.05in}
\begin{defn}[Contradictory Edge-pair]
    Given a subgraph $H$ with largest balanced subgraph $S(H)$ having balance partition $(V_1,V_2)$ two edges $e_1, e_2$ 
    form a contradictory edge-pair if any of these conditions follow for some $u,u'\in V_1$ and $w,w'\in V_2$, and $x\notin V_1\cup V_2$:
    \begin{enumerate}
        \item $e_1 = (x, u)$ and $ e_2 = (x, w)$ such that
        $\sigma((x, u)) = \sigma((x, w))$.
        \item $e_1 = (x, u)$ and $ e_2 = (x, u')$ such that $\sigma((x,u)) = - \sigma((x, u'))$
        \item $e_1 = (x, w)$ and $ e_2 = (x, w')$ such that $\sigma((x,w)) = - \sigma((x, w'))$
    \end{enumerate}
\end{defn}
\vspace{-0.05in}

    We use $cep(H, x)$ to denote the set of contradictory edge-pairs for subgraph $H$ with one end at node $x$. A contradictory edge pair restricts node $x$ from contributing to the balance. This property is more formally expressed as follows.
\vspace{-0.05in}
\begin{observation} \label{obs:nodetypes}
    A node $x$ will not be part of $S(H)$ if one of the following conditions hold: (1) $|cep(H, x)| > 0$, (2) the node $x$ is connected to $S(H)$ only via paths ending at a node $y$ where $|cep(H, y)| > 0$. 
\end{observation}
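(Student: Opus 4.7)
The plan is to handle the two sufficient conditions separately, leveraging the fact that $S(H)$ is a \emph{connected induced} balanced subgraph with $V(S(H))=V_1\cup V_2$.

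For condition (1), the very definition of a contradictory edge-pair at $x$ already stipulates $x\notin V_1\cup V_2$, so $x\notin V(S(H))$ is immediate from the hypothesis $|cep(H,x)|>0$. The content worth recording, and what I would work through, is \emph{why} this definition captures the real obstruction: for each of the three sub-cases in the definition of $cep$, I would show by inspection that any attempt to extend the balanced partition $(V_1,V_2)$ by placing $x$ in $V_1$ or in $V_2$ forces one of the two edges $e_1,e_2$ to violate the balance rule. For instance, in sub-case~1, the same-signed edges to vertices on opposite sides of the partition force $x$ to lie on both sides; in sub-cases~2 and~3, opposite-signed edges to same-side vertices do the same. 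Consequently the induced subgraph on $V_1\cup V_2\cup\{x\}$ admits no balanced bipartition extending $(V_1,V_2)$, so $x$ cannot be absorbed into $S(H)$.

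For condition (2), I would argue by contradiction. Suppose $x\in V(S(H))$. Since $S(H)$ is connected and induced on $V_1\cup V_2$, there must be a path in $H$ from $x$ to any other vertex of $V(S(H))$ that lies wholly inside $V(S(H))$. By hypothesis, however, every path in $H$ from $x$ to $V(S(H))$ passes through some node $y$ with $|cep(H,y)|>0$; the definition of $cep$ places every such $y$ outside $V_1\cup V_2=V(S(H))$. Hence no path from $x$ to the rest of $V(S(H))$ can stay inside $V(S(H))$, contradicting the connectedness of $S(H)$. Therefore $x\notin V(S(H))$.

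The main obstacle, really a matter of presentation rather than proof difficulty, is the almost tautological flavor of condition (1): because $x\notin V_1\cup V_2$ is already baked into the definition of $cep$, the conclusion of (1) is formally a restatement of its hypothesis. I would therefore emphasize the sub-case analysis to make clear that the definition correctly captures the true structural obstruction, namely that $x$ cannot be added to $S(H)$ without breaking balance, since this is exactly the property that the subsequent pseudo-submodularity and approximation arguments rely on.
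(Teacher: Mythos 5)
Your argument is correct. The paper states this observation without proof (offering only the illustrative example of $v_1$ and $v_2$ in Fig.~1(a)), and your two-part treatment --- the sign case analysis showing that a contradictory edge-pair blocks any consistent placement of $x$ in $V_1$ or $V_2$, and the connectivity contradiction for condition (2) --- supplies exactly the reasoning the authors leave implicit, including the honest acknowledgment that condition (1) is nearly definitional because the $cep$ definition already stipulates $x\notin V_1\cup V_2$.
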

\vspace{-0.05in}

\vspace{-0.05in}
\begin{example}
In Fig.~\ref{fig:cycle_example}(a), nodes $v_1$ and $v_2$ are not part of the balanced subgraph $S(\Gamma)$ due to condition (1) and condition (2) respectively.
\end{example}
\vspace{-0.05in}

Obs.~\ref{obs:nodetypes} allows us to formally define when an edge deletion increases the balance.

\vspace{-0.05in}
\begin{observation}\label{obs:singleEdgeDeletion}
$f(\{e\}) > 0$ iff $(e, e') \in cep(H,x)$ for some $e' \in E$, $x \in V(H)$, and $|cep(H_{\{e\}}, x)| = 0$, i.e., following deletion of $e$, $x$ does not associate with any contradictory edge pair.
\end{observation}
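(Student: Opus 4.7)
The plan is to prove the biconditional using Obs.~\ref{obs:nodetypes} as the bridge between set-membership in $V(S(\cdot))$ and counts of contradictory edge-pairs, and to work inside the monotonicity regime of Obs.~\ref{obs:monotonic}, so that $S(H)$ persists inside $H_{\{e\}}$ as a connected balanced induced subgraph under the same partition $(V_1,V_2)$. Under that convention, both sides of the equivalence become statements about the fate of a single exterior node $x$ whose blocking contradictory pair is destroyed by removing $e$.

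For the $(\Leftarrow)$ direction, I would start from the hypothesis that there exist $x \in V(H)$ and $e' \in E$ with $(e,e') \in cep(H,x)$ and $|cep(H_{\{e\}},x)| = 0$. By the very definition of $cep$, $x$ must lie outside $V_1 \cup V_2$, and both $e$ and $e'$ connect $x$ to $V_1 \cup V_2$, so in particular $e \notin E(S(H))$. Therefore $S(H)$ is still a connected balanced induced subgraph inside $H_{\{e\}}$ with the same partition, and the surviving edge $e'$ keeps $x$ adjacent to $V_1 \cup V_2$. The assumption that no contradictory pair survives at $x$ in $H_{\{e\}}$ (with respect to that same partition) certifies that all of $x$'s remaining signed edges into $V_1 \cup V_2$ are consistent with placing $x$ on a single side. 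Hence the subgraph induced on $V(S(H)) \cup \{x\}$ in $H_{\{e\}}$ is connected, balanced, and strictly larger than $S(H)$, giving $\Delta(H_{\{e\}}) \geq \Delta(H)+1$ and $f(\{e\}) > 0$.

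For the $(\Rightarrow)$ direction, assume $f(\{e\}) > 0$ and pick any $x^\star \in V(S(H_{\{e\}})) \setminus V(S(H))$. Apply Obs.~\ref{obs:nodetypes} to $H$: the node $x^\star$ was excluded from $S(H)$ either directly ($|cep(H,x^\star)| > 0$) or indirectly, reaching $V(S(H))$ only through a gateway node $y$ with $|cep(H,y)| > 0$. In the direct case, membership of $x^\star$ in $V(S(H_{\{e\}}))$ gives $|cep(H_{\{e\}},x^\star)| = 0$ (vacuously, since $cep$ is only meaningful at nodes outside the balance partition, and in any case by the contrapositive of Obs.~\ref{obs:nodetypes}); since deleting a single edge can drop the $cep$ count at $x^\star$ to zero only by destroying every pair containing $e$, we conclude $(e,e') \in cep(H,x^\star)$ for some $e'$ and take $x := x^\star$. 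In the indirect case, for $x^\star$ to actually join $V(S(H_{\{e\}}))$ the gateway $y$ must itself have become unblocked, i.e.\ $y \in V(S(H_{\{e\}}))$, so we re-run the direct-case argument with the witness $x := y$.

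The main obstacle is the potential mismatch between the balance partitions $(V_1,V_2)$ of $S(H)$ and $(V'_1,V'_2)$ of $S(H_{\{e\}})$: $cep$ as defined depends on which partition one uses, and if $e$ were to sit inside $E(S(H))$ then the two partitions could diverge and $e$ need not appear in any $cep(H,\cdot)$ at all, so the $(\Rightarrow)$ direction would genuinely fail. The clean way to close this gap is to restrict to edges $e$ that fall into Case (3) of the proof of Obs.~\ref{obs:monotonic} — one endpoint inside $V_1 \cup V_2$ and one outside — which is automatic on the $(\Leftarrow)$ side and is the only relevant case on the $(\Rightarrow)$ side under the monotonicity regime in which the greedy algorithms are designed to operate.
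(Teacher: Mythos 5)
Your proof is correct and takes essentially the same route as the paper: the paper's own argument (a brief sketch) also pivots entirely on Obs.~\ref{obs:nodetypes}, arguing that a node can only enter $S(H)$ when its last contradictory pair is destroyed, and dismisses the reverse direction as following from the definitions; you have simply fleshed out the gateway-node case and the certificate that $V(S(H))\cup\{x\}$ is connected and balanced. Your closing caveat is not pedantry but a real defect in the statement as written: the paper's own Fig.~\ref{fig:cycle_example}(b) deletes the negative edge joining $V_1$ to $V_2$ --- an edge appearing in no $cep(H,\cdot)$ --- and still raises the balance from $6$ to $8$, so the forward implication genuinely requires the restriction to peripheral edges that you impose and that the paper adopts implicitly in the paragraph following the observation.
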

\vspace{-0.05in}



From Obs.~\ref{obs:singleEdgeDeletion}, it follows that only the deletion of a \emph{peripheral} edge may result in increase of balance. A peripheral edge has one endpoint  within $S(H)$ and the other outside $S(H)$. Owing to this result, hereon, we implicitly assume any edge being considered for deletion is a peripheral edge. Note, however, that following an edge deletion, the set of peripheral edges changes. Empowered with these observations, we next establish pseudo-submodularity.
\subsubsection{Local Pseudo-submodularity}
\vspace{-0.05in}
\begin{defn}[Pseudo-submodularity \cite{santiago2020weakly}] Given a scalar $0< \gamma \leq 1$, a function $f$ is pseduo-submodular if  $\sum_{e \in R} [f(Q\cup \{e\}) -f(Q)] \geq \gamma [f(Q\cup R) - f(Q)]$ for any pair of disjoint sets $Q,R \subset \mathbb{C}$.
\end{defn}
\vspace{-0.05in}

Note that the pseudo-submodularity ratio $\gamma$ is a pessimistic bound over \emph{all} pairs of disjoint sets. Instead of using $\gamma$, we compute approximation bounds on a \emph{local} submodularity ratio \cite{santiago2020weakly} defined on two sets $Q,R$, i.e., a non-negative $\gamma_{Q, R}$ satisfying $\sum_{e \in R} [f(Q\cup \{e\}) -f(Q)] \geq \gamma_{Q, R} [f(Q\cup R) - f(Q)]$. It has been shown that using local bounds leads to significantly better guarantees \cite{santiago2020weakly}. First, we prove a lower bound for $\gamma_{Q, R}$ as follows:
\vspace{-0.05in}
\begin{thm} \label{thm:local_pseudo_submodular}
    For two disjoint sets $Q,R$,
    \vspace{-0.05in}
    \begin{equation*}
     \sum_{e \in R}{\left[f(Q\cup \{e\}) -f(Q)\right]}\geq \gamma_{Q, R}\left[f(Q\cup R) - f(Q)\right]
    \end{equation*}
     \vspace{-0.05in}

     where $ \gamma_{Q, R} \geq \frac{1}{1 + \frac{1}{4}\Delta(H_Q)(|R| - 1)}. $
\end{thm}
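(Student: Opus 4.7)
The plan is to combine the structural characterisation of blocked vertices via contradictory edge pairs with the monotonicity guarantee of Observation \ref{obs:monotonic}. I would first restrict attention to those $R$ for which $H_{Q \cup R}$ preserves the connected-component count of $H_Q$, so that by Observation \ref{obs:monotonic} both $f(Q \cup \{e\}) - f(Q)$ and $f(Q \cup R) - f(Q)$ are non-negative and count newly admitted vertices in the largest balanced subgraph; that is, both sides of the target inequality reduce to integer vertex counts.

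For each admitted node $x \in V(S(H_{Q \cup R})) \setminus V(S(H_Q))$, the key tool is the structural picture given by Observations \ref{obs:nodetypes} and \ref{obs:singleEdgeDeletion}: the edges of $H_Q$ from $x$ to $V(S(H_Q))$ split into two preference sets $P_1^x, P_2^x$ according to whether the neighbour's side of the partition, combined with the edge sign, requires $x$ to sit in $V_1$ or in $V_2$. Then $R$ unblocks $x$ iff $R \supseteq P_1^x$ or $R \supseteq P_2^x$, and a single edge $e \in R$ alone unblocks $x$ iff some $P_i^x$ is a singleton contained in $R$. Accordingly I would partition the nodes counted by $f(Q \cup R) - f(Q)$ into \emph{easy} ones, with $\min(|P_1^x|, |P_2^x|) = 1$, each of which contributes at least one unit to $\sum_{e \in R}[f(Q \cup \{e\}) - f(Q)]$, and \emph{hard} ones, with $\min(|P_1^x|, |P_2^x|) \geq 2$, which contribute zero to the singleton sum but still one unit to the batch gain.

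The combinatorial core is then to show $|\mathrm{hard}| \leq \tfrac{1}{4}\Delta(H_Q)(|R|-1)\cdot|\mathrm{easy}|$, from which the stated bound follows via $\gamma_{Q,R} \geq \tfrac{|\mathrm{easy}|}{|\mathrm{easy}| + |\mathrm{hard}|} \geq \tfrac{1}{1 + \tfrac{1}{4}\Delta(H_Q)(|R|-1)}$. Each hard $x$ carries at least two edges of $R$ whose other endpoints lie in $V(S(H_Q))$, so I would charge each hard node to a pair of edges of $R$ meeting inside $S(H_Q)$; bounding the number of such charged pairs via pigeonhole over the $\Delta(H_Q)$ possible meeting vertices and the $|R|-1$ pair-mates available to each edge of $R$, together with an additional factor of $\tfrac{1}{2}$ coming from the binary $\{V_1, V_2\}$ choice built into the cep definition, supplies the claimed constant $\tfrac{1}{4}$.

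The main obstacle will be the cascading case: deleting edges in $R$ can reshape $S(H_Q)$ so that a whole chain of nodes becomes admitted simultaneously, and an innermost chain node may look hard in isolation while in fact being sustained by an easy admission elsewhere. I would handle this by processing admitted nodes in the topological order of the cascade induced by Observation \ref{obs:nodetypes}(2) and rerouting the charge along the chain — showing that every connected cluster of newly admitted nodes contains at least one easy anchor that absorbs the singleton credit for its entire subtree — so that the easy-to-hard inequality above still closes and the claimed pseudo-submodularity ratio follows.
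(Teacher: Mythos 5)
Your overall architecture matches the paper's: restrict to component-preserving deletions so that Observation~\ref{obs:monotonic} applies, compare the sum of singleton gains against the batch gain, and isolate the ``hard'' nodes that no single edge of $R$ can unblock (these are exactly the nodes the paper counts with $\alpha(R)$, and your ``at least two edges per hard node'' observation is the paper's bound $\alpha(R)\le\frac{|R|-1}{2}$, which supplies one factor of $\tfrac12$). The gap is in where the factor $\Delta(H_Q)$ comes from. In the paper it is \emph{not} a pigeonhole count over $R$: it comes from the cascade. Each unblocked anchor $u$ drags a whole previously excluded component $C_u$ into the balanced subgraph, so a hard anchor contributes up to $C^*+1$ (not one) unit to the batch gain; the paper's Lemma~\ref{lemma:induction} gives $f(R)\le\sum_i f(\{e_i\})+(C^*+1)\alpha(R)$, and Observation~\ref{obs:c_star_UB} bounds $C^*+1\le\Delta(H_Q)/2$ by maximality of $S(H_Q)$ (otherwise the larger side of the partition together with $\{u\}\cup C_u$ would be a bigger balanced subgraph). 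That maximality argument is the second factor of $\tfrac12$ and the sole source of $\Delta(H_Q)$ in the bound. Your proposal instead books each hard node as ``one unit of batch gain'' and tries to recover $\Delta(H_Q)$ by ``pigeonhole over the $\Delta(H_Q)$ possible meeting vertices'' of a pair of edges of $R$; but the two blocking edges of a hard node $x$ meet at $x$ itself, which lies \emph{outside} $S(H_Q)$, so there is no shared vertex inside $S(H_Q)$ to pigeonhole on, and the ``extra $\tfrac12$ from the binary $\{V_1,V_2\}$ choice'' does not correspond to any step that closes the count. As written, the combinatorial core $|\mathrm{hard}|\le\frac14\Delta(H_Q)(|R|-1)\,|\mathrm{easy}|$ is neither established nor approachable by that charging scheme.

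Two smaller points. First, your cascading discussion treats cascades only as an ordering nuisance for the charging argument, whereas they are the quantitative heart of the bound; once you bound the cascaded component by $\Delta(H_Q)/2$ you no longer need the topological rerouting at all, since the paper's induction (Lemma~\ref{lemma:induction}) absorbs each newly blocked-then-unblocked anchor together with its whole component in one step. Second, your final ratio $\frac{|\mathrm{easy}|}{|\mathrm{easy}|+|\mathrm{hard}|}$ silently requires $|\mathrm{easy}|\ge 1$; the paper makes the analogous assumption explicitly as $\sum_{e\in R}f'(\{e\})\ge 1$, and without it the claimed ratio can be $0$ (this is precisely the non-submodularity example of Fig.~\ref{fig:cycle_example}(e)), so you should state that hypothesis rather than derive the multiplicative inequality, which is false when $|\mathrm{easy}|=0$.
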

\vspace{-0.05in}


\textsc{Proof.} See App.~\ref{sec:proof_pseduo_submodular}. 

This theorem proves a lower bound for $\gamma_{Q,R}$ for any disjoint sets $Q$ and $R$. 
Obs. \ref{obs:monotonic} and Thm.~\ref{thm:local_pseudo_submodular} show that the monotonicity and local pseudo-submodularity  holds for our objective function. We next leverage these properties to design a \emph{randomized greedy} algorithm with approximation guarantees.
\vspace{-0.05in}
\subsection{Randomized Greedy (\rg)}

\begin{lem}[\cite{santiago2020weakly}]\label{lem:greedy_appx}
    Assuming $0 \le \gamma_i \le 1$ for $i \in \{0, 1, 2, \cdots, k-1\}$ so that $\sum_{e \in OPT} {[f(S_i \cup\{e\})-f(S_i))]} \geq \gamma_i \cdot [f(S_i \cup OPT)- f(S_i)]$ (local pseduo-submodularity) throughout the execution of the \rg algorithm, where $f$ is monotonic, $OPT$ denotes the optimal set of edges, and $S_i$ denotes the set of chosen elements after the $i$-th iteration (i.e. $|S_i| = i$); then \rg obtains an approximation of $1 - \exp{\left(-\frac{1}{k}\sum_{i=0}^{k-1}{\gamma_i}\right)}$ with a high probability. 
\end{lem}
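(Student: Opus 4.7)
The plan is to follow the standard analysis of Randomized Greedy for weakly pseudo-submodular monotone maximization and specialize it to per-iteration ratios $\gamma_i$. At iteration $i+1$, \rg forms the set $M_i \subseteq \mathbb{C} \setminus S_i$ of the $k$ candidate edges with the largest marginal gains $f(S_i \cup \{e\}) - f(S_i)$ and samples one element of $M_i$ uniformly at random to append to $S_i$. I would first obtain a per-step expected-gain inequality, then unroll it into a multiplicative recurrence on $f(OPT) - \mathbb{E}[f(S_i)]$, and finally upgrade the resulting expectation bound into a high-probability statement.

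For the per-step bound, conditioning on $S_i$ gives
\[
\mathbb{E}[f(S_{i+1}) - f(S_i) \mid S_i] = \frac{1}{k} \sum_{e \in M_i} \bigl[f(S_i \cup \{e\}) - f(S_i)\bigr].
\]
Because $M_i$ collects the $k$ largest marginal gains and $|OPT| \le k$, while the marginal gain of any $e \in OPT \cap S_i$ is zero by monotonicity, the right-hand sum dominates $\sum_{e \in OPT}[f(S_i \cup \{e\}) - f(S_i)]$. Applying local pseudo-submodularity with $Q = S_i$ and $R = OPT$, followed by monotonicity in the form $f(S_i \cup OPT) \ge f(OPT)$, yields
\[
\mathbb{E}[f(S_{i+1}) - f(S_i) \mid S_i] \;\ge\; \frac{\gamma_i}{k}\,\bigl[f(OPT) - f(S_i)\bigr].
\]

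Writing $\alpha_i = \mathbb{E}[f(S_i)]$ and taking total expectations produces the linear recurrence $f(OPT) - \alpha_{i+1} \le (1 - \gamma_i/k)\,(f(OPT) - \alpha_i)$. Unrolling from $i=0$ to $k-1$ with $\alpha_0 = f(\emptyset) \le f(OPT)$ and applying $1-x \le e^{-x}$ termwise gives
\[
f(OPT) - \alpha_k \;\le\; f(OPT) \prod_{i=0}^{k-1}\Bigl(1 - \tfrac{\gamma_i}{k}\Bigr) \;\le\; f(OPT)\exp\!\Bigl(-\tfrac{1}{k}\sum_{i=0}^{k-1}\gamma_i\Bigr),
\]
which rearranges to the claimed ratio $\mathbb{E}[f(S_k)] \ge \bigl(1 - \exp(-\tfrac{1}{k}\sum_i \gamma_i)\bigr) f(OPT)$ in expectation. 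For the ``with high probability'' upgrade I would run \rg independently $\Theta(\log(1/\delta))$ times and return the best output: Markov's inequality applied to the non-negative deficit $f(OPT) - f(S_k)$ ensures each run lies within a constant factor of its expectation with constant probability, and independent boosting then delivers the $1-\delta$ guarantee.

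The step I expect to be the main obstacle is the per-step inequality, because one must carefully justify replacing the sum over $M_i$ by the sum over $OPT$: handling elements of $OPT$ already inside $S_i$ (via monotonicity, which zeroes out their marginals) and handling overlap between $M_i$ and $OPT$ (via the defining top-$k$ property of $M_i$). Once that book-keeping is set up precisely, the recurrence and its exponential solution are routine, and the concentration step is boilerplate.
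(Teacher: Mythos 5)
The paper does not prove this lemma at all: it is imported verbatim from the cited reference \cite{santiago2020weakly} and used as a black box, so there is no in-paper argument to compare against. Your reconstruction is the standard analysis of \textsc{Rg} for monotone functions with a per-iteration submodularity ratio, and it is essentially correct: the conditional expected gain equals the average gain over $M_i$; the top-$k$ property of $M_i$ together with non-negativity of marginals (monotonicity) and the fact that elements of $OPT\cap S_i$ contribute zero lets you lower-bound that average by $\frac{1}{k}\sum_{e\in OPT}[f(S_i\cup\{e\})-f(S_i)]$; the hypothesis then gives $\frac{\gamma_i}{k}[f(S_i\cup OPT)-f(S_i)]\ge \frac{\gamma_i}{k}[f(OPT)-f(S_i)]$, and the recurrence $f(OPT)-\alpha_{i+1}\le(1-\gamma_i/k)(f(OPT)-\alpha_i)$ with $\alpha_0=f(\emptyset)=0$ unrolls to the claimed factor. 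Two small caveats: (i) your Markov-plus-repetition upgrade only yields a $1-(1+\epsilon)\exp(-\frac{1}{k}\sum_i\gamma_i)$ guarantee with probability $1-\delta$, i.e., an arbitrarily small but nonzero loss in the constant, which is consistent with the lemma's loosely stated ``with high probability'' but should be flagged; (ii) in the paper's concrete instantiation the candidate pool at step $i$ is restricted to peripheral edges, so the domination of $\sum_{e\in OPT}$ by $\sum_{e\in M_i}$ additionally uses that non-peripheral edges have zero singleton marginal gain (Obs.~3) — worth a sentence if one wants the argument to apply to Algorithm~2 as written rather than to generic \textsc{Rg}.
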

\vspace{-0.05in}

We can directly apply this lemma in our setting. The \rg Algorithm is described as Algorithm \ref{alg:rgd}.

\vspace{-0.05in}
\begin{thm} \label{thm:approx_RG}
For MBED, the \rg algorithm obtains an approximation of $1 - e^{-\gamma'}$, and $\gamma' \geq \frac{4}{4 + \Delta^{*} (b -1)}$
where $b$ and $\Delta^{*}$ denote the budget and the balance after deleting the optimal set of edges respectively.
\end{thm}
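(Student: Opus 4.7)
The plan is to combine Lemma~5 (from \cite{santiago2020weakly}) with the local pseudo-submodularity bound of Theorem~3 and the monotonicity guaranteed by Observation~1. Lemma~5 tells us that the \rg algorithm achieves an approximation of $1 - \exp\!\bigl(-\tfrac{1}{k}\sum_{i=0}^{k-1}\gamma_i\bigr)$ with high probability, where $k = b$ and $\gamma_i$ is any valid local pseudo-submodularity ratio with $Q = S_i$ and $R = OPT$ (the optimal set). Here $S_i$ denotes the partial solution after iteration $i$. Since $f(S_i \cup \{e\}) - f(S_i) = 0$ whenever $e \in S_i$, the sum effectively ranges over $OPT \setminus S_i$, so we may instantiate Theorem~3 with $R = OPT \setminus S_i$ to obtain
\[
\gamma_i \;\geq\; \frac{1}{1 + \tfrac{1}{4}\Delta(H_{S_i})(|OPT \setminus S_i| - 1)} \;\geq\; \frac{4}{4 + \Delta(H_{S_i})(b-1)},
\]
where the second inequality uses $|OPT \setminus S_i| \le |OPT| = b$.

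The next step is to replace the iteration-dependent quantity $\Delta(H_{S_i})$ by the uniform bound $\Delta^{*}$. By monotonicity (Observation~1, invoked under the standing assumption that deletions preserve connectivity), for any $X$ with $|X| \le b$ we may extend $X$ to a set $X'$ of size exactly $b$ without decreasing the balance, giving $\Delta(H_X) \le \Delta(H_{X'}) \le \Delta^{*}$ by optimality of $OPT$. Applying this with $X = S_i$ yields $\Delta(H_{S_i}) \le \Delta^{*}$ for every iteration $i$, and therefore
\[
\gamma_i \;\geq\; \frac{4}{4 + \Delta^{*}(b-1)} \;=:\; \gamma'
\]
uniformly in $i$.

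Plugging this uniform lower bound into Lemma~5 gives $\tfrac{1}{k}\sum_{i=0}^{k-1}\gamma_i \ge \gamma'$, hence the \rg approximation is at least $1 - e^{-\gamma'}$ with $\gamma' \ge \frac{4}{4 + \Delta^{*}(b-1)}$, as claimed. The one step that requires care is the reduction $\Delta(H_{S_i}) \le \Delta^{*}$: it relies both on monotonicity (which is not automatic for MBED and is the reason we must restrict candidate deletions as in Observation~1) and on the fact that extending a set of fewer than $b$ deletions to size exactly $b$ is always admissible under the candidate set $\mathbb{C}$. The remaining algebra is a direct substitution into the two prior results.
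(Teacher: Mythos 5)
Your proof is correct and follows essentially the same route as the paper's: it chains the high-probability guarantee of the randomized-greedy lemma with the local pseudo-submodularity lower bound of the main theorem, and uses monotonicity plus optimality of $OPT$ to replace each $\Delta(H_{S_i})$ by the uniform bound $\Delta^{*}$. Your only additions are small points of rigor the paper glosses over (instantiating the ratio with the disjoint set $OPT\setminus S_i$, and spelling out why $\Delta(H_{S_i})\le\Delta^{*}$ via extension to a full budget-$b$ set), which are welcome but do not change the argument.
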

\vspace{-0.1in}

\begin{proof}
    Let us denote the optimal set of $b$ edges as $B^*$. By monotonicity, we get $\Delta(H_{S_0}) \le \Delta(H_{S_1}) \cdots \le \Delta(H_{S_{b-1}}) \le \Delta^{*}$. From Theorem \ref{thm:local_pseudo_submodular}, $\gamma_{S_i, B^*} \geq \frac{4}{4 + \Delta(H_{S_i})(|B^*| - 1)} \geq \frac{4}{4 + \Delta^{*} (b -1)}$.
    Now, by substituting $\gamma_i$ with $\gamma_{S_i, B^*}$ in Lem. \ref{lem:greedy_appx} we get the desired result.  
\end{proof}
\vspace{-0.05in}

\textbf{Improved Bounds:} The lower bound of $\gamma'$ in Thm.~\ref{thm:approx_RG} can be tighter. In particular, $\gamma' \geq \frac{4}{4 + \Delta^{RG} (b - 1)} $ where $\Delta^{RG}$ denotes the balance after deleting the solution set of $b$ edges produced by the \rg. The bound could be further improved as $\gamma' \geq \frac{4\psi^r}{4\psi^r + \Delta^{RG} (b - 1)} $ where $\psi^r$ is the summation of marginal gains of the elements in the optimal solution set over the solution set produced by \rg (see App. \ref{subsec:tigh_tbound_rg}). Table~\ref{tab:summary_approx} summarizes the additional lower bounds of $\gamma'$ (where the approximation guarantee is $1 - e^{-\gamma'}$) that can be derived on the \rg. 

\textbf{Implementation:} 
Alg.~\ref{alg:rgd} first computes the set of peripheral edges of the initial balanced subgraph $S(H)$ (line 3). 
After that, for all peripheral candidate edges, $f(\{e\})$ is computed (lines $4-5$). 
Using these values, the subset of peripheral edges of cardinality $b$ maximizing the sum of  $f(\{e\})$ is chosen and a random edge from this subset is selected for deletion (lines $6-8$). Following this edge deletion, the balanced subgraph $S(H)$ is updated to include the newly compatible nodes (line $9$). The peripheral edge set for the updated $S(H)$ is recomputed (line $3$) and this process continues in an iterative manner for $b$ iterations.

\begin{algorithm} [t]
\caption{Randomized Greedy}{\label{alg:rgd}}
{\scriptsize
\begin{algorithmic}[1] 
    \REQUIRE  The initial subgraph $H$, balanced subgraph $S(H)$, budget $b$, candidate set $\mathbb{C}$
 \ENSURE A set $B$ of $b$ edges
    \STATE $B \leftarrow \Phi$
    \FOR{$i=1$ to $b$}
        \STATE Compute the set of edges $\mathbb{C}^*$ on the periphery of $S(H)$ connecting to nodes in $H\setminus S(H)$.
        \FOR{$e \in \mathbb{C}^* \cap \mathbb{C}$}
            \STATE Compute $f(\{e\}) = \Delta(H_{\{e\}}) - \Delta(H)$.
        \ENDFOR
        
        \STATE Find a subset $M^i \subseteq \mathbb{C}^*$ of size $b$ maximizing $\sum_{e \in M^i} {f(\{e\})}$.
        \STATE Select a uniformly random element $e_i$ from $M^i$.
        \STATE Delete $e_i$ from $H$, $B\leftarrow B\cup \{e_i\}$
        \STATE Update $S(H)$ to include the nodes from $\Delta(H_{\{e\}})$.
    \ENDFOR
    \RETURN {$B$}
\end{algorithmic}}
\end{algorithm}


\vspace{-0.05in}
\subsection{The Greedy Approach}
\label{sec:greedy}

 The only difference with Alg.~\ref{alg:rgd} is that instead of choosing a random edge from the top $b$ edges with the highest sum of $f(\{e\})$s (lines $6-7$), the greedy algorithm (\greedy) chooses the edge with the highest $f(\{e\})$, i.e., $e_i = \argmax_{e\in \mathbb{C}^*}\{ f(\{e\})\}$.

\textbf{Theoretical Bounds:}  
We derive the approximation of  \greedy in App.~\ref{sec:greed_approx}. Table \ref{tab:summary_approx} summarizes the different lower bounds of $\gamma'$ (where the approximation guarantee is $1 - e^{-\gamma'}$). 

\begin{table}[ht]
    \centering
    \vspace{-0.05in}
    {
    \begin{tabular}{c|c | c | c }
        \toprule
        Cases & I & II & III\\
        \midrule
        \rg & $\frac{4}{4 + \Delta^{*} (b - 1)}$ & $\frac{4}{4 + \Delta^{RG} (b - 1)}$ & $\frac{4\psi^r}{4\psi^r + \Delta^{RG} (b - 1)}$ \\ \hline
         \greedy & $\frac{4}{4 + \Delta^{*} (b - 1)}$ & $\frac{4}{4 + \Delta^{G} (b - 1)}$ & $\frac{4\psi^g}{4\psi^g + \Delta^{G} (b - 1)}$  \\ \hline
        
    \end{tabular}}
    \caption{Lower bounds (higher is better) of $\gamma'$ produced by \rg and \greedy, where $\Delta^{*}$, $\Delta^{RG}$  and $\Delta^{G}$ denote the balance after deleting the optimal set of edges, the set produced by \rg and \greedy respectively. $\psi^r$ and $\psi^g$ are the summation of marginal gains of the elements in the optimal solution set over the solution set produced by \rg and \greedy respectively.}
    \label{tab:summary_approx}
    \vspace{-0.2in}
\end{table}
\vspace{-0.05in}
\subsection{Time Complexity} Alg.~\ref{alg:rgd} comprises of three main dominating parts with respect to the time complexity: (i) calls to compute function $f(\{e\})$ for all candidate edges, (ii) computing peripheral edge set (line 3) and (iii) finally updating the balanced subgraph $S(H)$ (line 9).  (i) Computing $f(\{e\})$: For each edge $e$ in the peripheral edge set, the computation of $f(\{e\})$ first checks if the corresponding vertex that is outside the balanced subgraph can be inducted inside on deletion of the given edge $e$. It checks the sign of all edges incident on the vertex, which on average consumes $O(d_{avg})$, where $d_{avg}$ is the average degree of a node in the graph. 
If the node is inducted, a breadth-first search (BFS) is performed to count its compatible neighbors that could be included in the newly balanced subgraph. So, each $f(\{e\})$ computation takes $O(|E|d_{avg})$ time. (ii-iii) For updating the balanced subgraph and corresponding peripheral edge set, a similar BFS is performed to find the vertices to be inducted in $S(H)$ and the incompatible edges during this search forms the peripheral edge set of the updated balanced subgraph. So, the overall time complexity of \rg is $O(b|\mathbb{C}||E|d_{avg})$ time. \greedy has the same  complexity.

\vspace{-0.05in}
\section{Experiments}
\label{sec:expts}
In this section, we benchmark the proposed algorithms and analyze their efficacy, efficiency and scalability.

\begin{table}[t]
    \centering
    {\scriptsize
    \begin{tabular}{@{}lrrrrr@{}}
        \toprule
        Datasets & |V| & $|E_+ \cup E_-|$ & $\rho_{-}$ & $|V(G^*)|$ & $|\Delta(G^*)|$\\
        \midrule
        BitcoinAlpha & 4k & 14k & 0.09 & 3772 & 2903\\
        BitcoinOTC & 6k & 21k & 0.15 & 5872 & 4487 \\
        Chess & 7k & 32k & 0.42 & 6601 & 3477\\
        WikiElections & 7k & 100k & 0.22 & 7066 & 3857 \\
        Slashdot & 82k & 498k & 0.23 & 82052 & 51486\\
        WikiConflict & 118k & 1.4M & 0.62 & 96243 & 53542 \\
        Epinions & 131k & 708k & 0.17 & 119070 & 81385\\
        WikiPolitics & 138k & 712k & 0.12 & 137713 & 68037 \\
        
        \bottomrule
    \end{tabular}}
    \caption{Description of Datasets: $G^*$ and $\Delta(G^*)$  denote the largest connected component (LCC) and  the maximum balanced subgraph of LCC respectively in graph $G$. $\rho_{-}=\frac{|E_-|}{|E_+ \cup E_-|}$ denotes the proportion of negative edges in the graph.}
    \label{tab:dataset_statistics}
    \vspace{-0.15in}
\end{table}

\begin{figure*}[ht]
    \centering
    \captionsetup[subfigure]{labelfont={normalsize,bf},textfont={normalsize,bf}}
    \vspace{-0.30in}
    \subfloat[NYC Cab][BitcoinAlpha]{\includegraphics[width=0.22\textwidth]{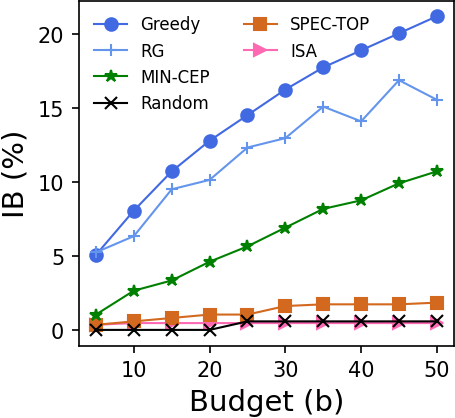}}\hfill
    \subfloat[NYC Cab][BitcoinOTC]{\includegraphics[width=0.23\textwidth]{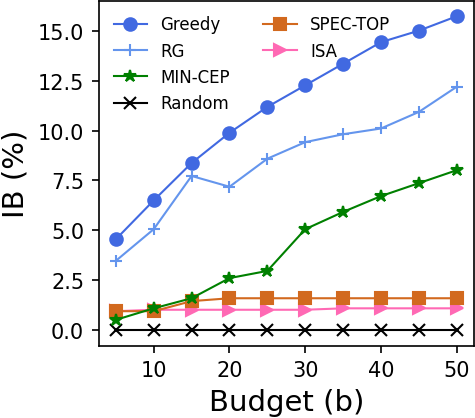}}\hfill
    \subfloat[NYC Cab][Chess]{\includegraphics[width=0.215\textwidth]{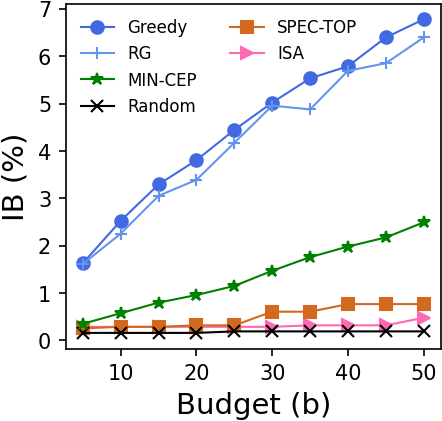}}\hfill
    \subfloat[NYC Cab][WikiElections]{\includegraphics[width=0.215\textwidth]{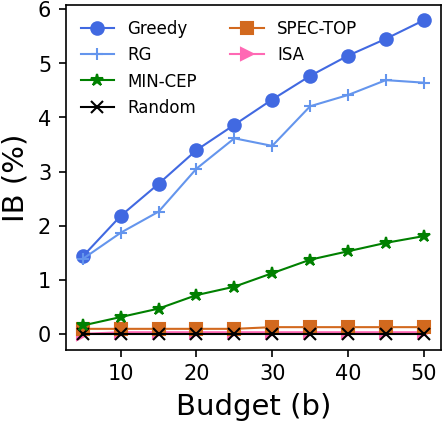}}\\
        \vspace{-0.15in}
    \subfloat[NYC Cab][Epinions]{\includegraphics[width=0.22\textwidth]{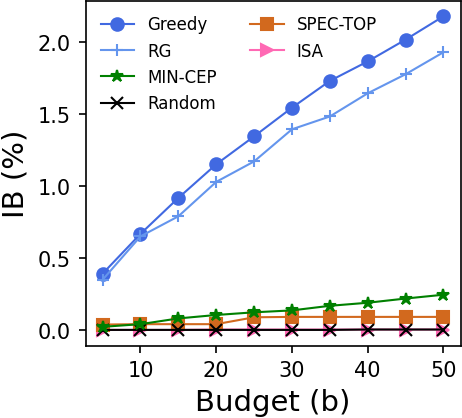}}\hfill
    \subfloat[NYC Cab][Slashdot]{\includegraphics[width=0.22\textwidth]{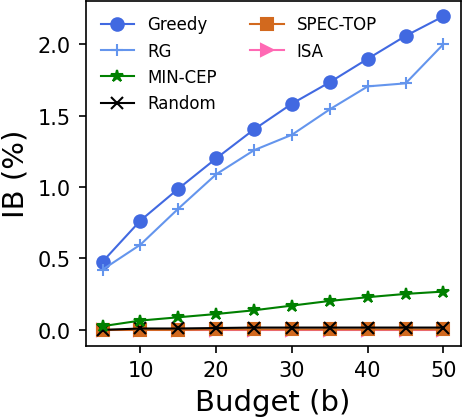}}\hfill
    \subfloat[NYC Cab][WikiConflict]{\includegraphics[width=0.22\textwidth]{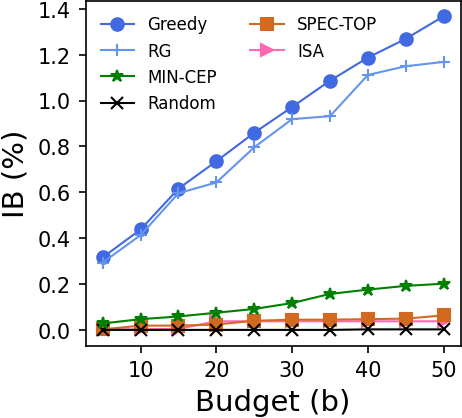}}\hfill
    \subfloat[WikiPolitics]{\includegraphics[width=0.22\textwidth]{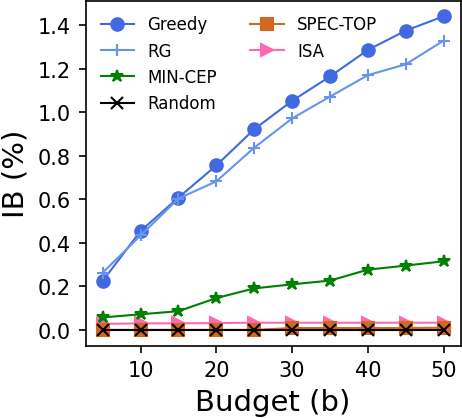}}
        \vspace{-0.15in}
    \caption{ Impact of budget on IB\% (Eq.~\ref{eq:performance}). \rg and \greedy are superior by up to $7$ times than the closest baseline (\textsc{Min-Cep}).}
\label{fig:quality_varying_budget}
    \vspace{-0.01in}
\end{figure*}

\vspace{-0.10in}
\subsection{Experimental Setup}
All algorithms have been implemented in Python $3.6.9$ on a Ubuntu $18.04$ PC with a $2.1$ GHz
Intel\textsuperscript{\textregistered} Xeon\textsuperscript{\textregistered}
Platinum $8160$ processor, $256$ GB RAM and a $7200$ RPM, $8.5$ TB disk. The codebase is available online\footnote{https://github.com/Ksartik/MBED}.
\vspace{-0.05in}
\subsubsection{Datasets}
We use publicly available signed networks from \url{http://konect.cc}. Table~\ref{tab:dataset_statistics} summarizes the dataset statistics. Each of these models polarized (signed) social interactions. BitcoinOTC, BitcoinAlpha, Epinions are trust/distrust networks on the two respective Bitcoin trading platforms and an online product rating site respectively. Chess represents the chess games' results with edges being positive if white won and negative otherwise. Slashdot comprises the friend/foe relations on the news site Slashdot. The edges in WikiConflict represent the positive/negative conflicts on the Wikipedia. WikiPolitics contains interpreted interactions between editors of political articles on Wikipedia. WikiElections connects Wikipedia users who voted for/against each other. We ignore the direction of the edges in the directed graphs and remove any loops and multi-edges.



\vspace{-0.05in}
\subsubsection{Baselines}
Besides \textsc{Greedy} and Randomized Greedy (\textsc{\rg}), we consider the following baselines:
\begin{itemize}
\item \textsc{Spec-Top}: In \S\ref{sec:spectral_top}, we design a spectral approach using an upperbound of the minimum eigen value of the Laplacian. 
    \item \textsc{Isa}: Alg.~\ref{alg:spectral_edge} describes this baseline, which is based on perturbation theory. We only consider the peripheral edges as the candidates.
    \item \textsc{Random}: We randomly delete $b$ edges from the periphery of $S(H)$, where $H$ is the initial given subgraph. 
    \item \textsc{Min-Cep}: Obs. \ref{obs:singleEdgeDeletion} shows that an edge ($e$) deletion associated with a node $x$ is favorable if $|cep(H_{\{e\}}, x)| = 0$. Thus, we iteratively delete the peripheral edge minimizing $|cep(H_{\{e\}}, x)|$.
\end{itemize}

\vspace{-0.05in}
\subsubsection{Parameters:}
The default input subgraph $H$ is the largest connected component (LCC) of the signed graph. We find the initial maximum balanced graph $S(H)$ using TIMBAL~\cite{ordozgoiti2020finding}. Table~\ref{tab:dataset_statistics} lists the size of the LCC and its balance in each of the datasets. In addition, for some experiments, we also use $k$-core structures that are well-known for community discovery~\cite{peng2014accelerating}. The set of candidate edges $\mathbb{C}$ is set to all edges in $H$. The budget $b$ is varied in each experiment.
\vspace{-0.05in}
\subsubsection{Performance Metric: } The quality of a solution (edge) set $B$ for a given subgraph $H$ is defined as the percentage of nodes that gets included in the balanced subgraph after the deletion of $B$. 

\vspace{-0.10in}
\begin{equation}\label{eq:performance}
    IB(B,H) (\%)= \frac{\Delta(H_B)-\Delta(H)}{|H|-\Delta(H)} \times 100.
\end{equation}
\vspace{-0.10in}

\begin{figure*}[ht]
\vspace{-0.30in}
    \centering
    \captionsetup[subfigure]{labelfont={normalsize,bf},textfont={normalsize,bf}}
    \subfloat[NYC Cab][Epinions]{\includegraphics[width=0.22\textwidth]{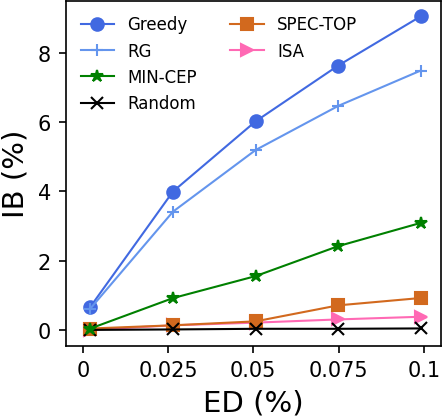}}\hfill
    \subfloat[NYC Cab][Slashdot]{\includegraphics[width=0.22\textwidth]{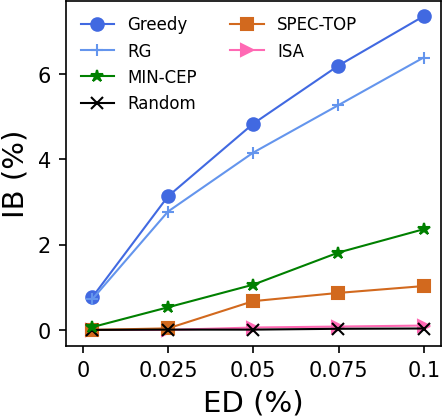}}\hfill
    \subfloat[NYC Cab][WikiConflict]{\includegraphics[width=0.22\textwidth]{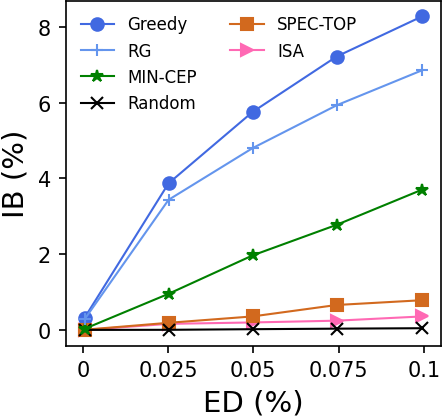}}\hfill
    \subfloat[NYC Cab][WikiPolitics]{\includegraphics[width=0.22\textwidth]{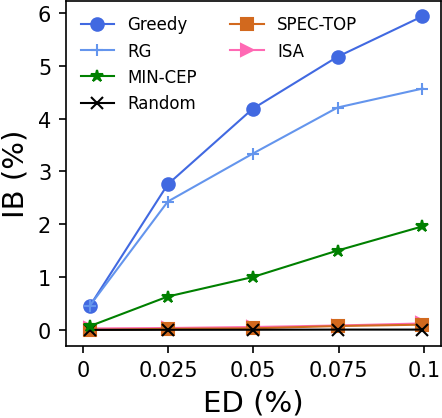}}\hfill
    \vspace{-0.15in}
    \caption{The quality of all methods with large budgets (ED implies the fraction of edge deletions) in four large datasets.} 
\label{fig:quality_varying_large_budget}
\end{figure*}
\begin{figure*}[ht]
\vspace{-0.25in}
    \centering
    \captionsetup[subfigure]{labelfont={normalsize,bf},textfont={normalsize,bf}}
    \subfloat[NYC Cab][Epinions]{\includegraphics[width=0.21\textwidth]{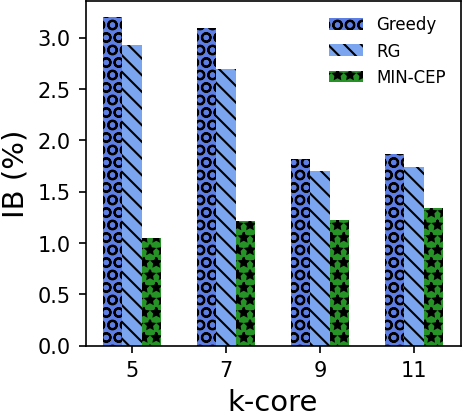}}\hfill
    \subfloat[NYC Cab][Slashdot]{\includegraphics[width=0.22\textwidth]{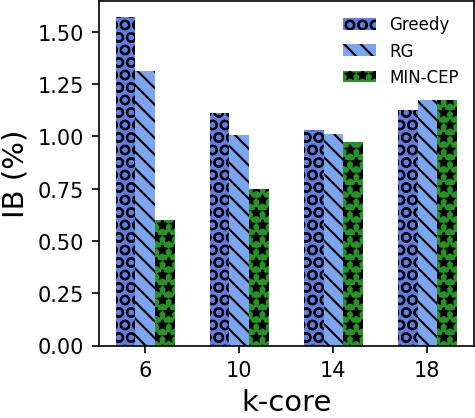}}\hfill
    \subfloat[NYC Cab][WikiConflict]{\includegraphics[width=0.22\textwidth]{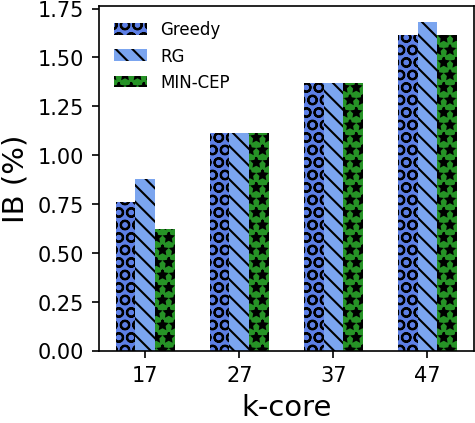}}\hfill
    \subfloat[NYC Cab][WikiPolitics]{\includegraphics[width=0.21\textwidth]{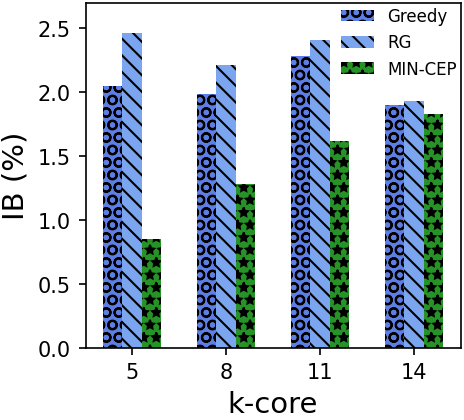}}\hfill
    \vspace{-0.15in}
    \caption{Increase in the balance when the input subgraph is a $k$-core. Results are shown against varying values of $k$ for $b=50$.}
\label{fig:quality_varying_kcore}
\vspace{-0.05in}
\end{figure*}

\vspace{-0.10in}
\subsection{Efficacy and Efficiency}

\subsubsection{Small budget on all datasets: }Fig.~\ref{fig:quality_varying_budget} shows the percentage increase in balance (IB) for eight datasets achieved by each algorithm. \textsc{Greedy} and \textsc{\rg} outperform all the baselines by up to 12\%. Besides having approximation guarantees (Thms. \ref{thm:approx_RG} and \ref{thm:approx_greedy}), \greedy and \rg directly optimize the objective function in an iterative fashion. In contrast, the baselines choose solution edges depending on other criterion. In particular, the spectral methods \textsc{Isa} and \textsc{Spec-Top} do not perform well since it chooses edges based on an upper-bound to minimize the minimum eigenvalue of the corresponding Laplacian. Though the balanced graph has minimum eigenvalue of the Laplacian as $0$, the rate at which the edge deletions move towards achieving it, might still be low. We also observe that \greedy, in general, performs better than \rg. It would be wrong, however, to draw the conclusion that \greedy is always better. In subsequent experiments where we choose $k$-cores as the input subgraphs, we will see that \rg performs better. We will revisit the topic of \greedy vs \rg while discussing that experiment.

\vspace{-0.05in}
\subsubsection{Larger budget on large datasets: }To further demonstrate the efficacy of our methods we vary the budget as a function of $\mathbb{C}$. i.e., all edges in $H$. Fig.~\ref{fig:quality_varying_large_budget} shows the percentage increase in balance (IB) for the four largest datasets. Consistent with previous experiments, \rg and \greedy outperform all baselines (better by up to $6\%$ points). More interestingly, we observe that a substantial increase in balance is feasible ($9\%$ or up to $4000$ nodes) by deleting only $0.1\%$ of edges ($\approx500$ edges). In other words, improvement in balance-dependent community functions, such as team performance or stability, may be significantly improved through minor adjustments to the network.

\vspace{-0.05in}
\subsubsection{Scalability: }Table \ref{tab:rt_varying_budget} shows the running times of all  algorithms against budget in the three largest datasets. Although \rg and \greedy are slower than the other baselines, they finish within a few minutes even on a million edges' network. Thus, scalability to large networks is not a concern. A more interesting behavior is witnessed in the  correlation between efficacy and efficiency. More specifically, we observe that the better performance of an algorithm in IB\%, the higher is its running time. When an algorithm performs better, it means in each iteration, the algorithm  produces a larger cascading impact following an edge deletion. Higher cascading impact leads to a larger number of new peripheral edges coming into consideration. Consequently, the running time goes up. 

\begin{table}[b]
{\scriptsize
\begin{tabular}{l|rrr|rrr|rrr}
\toprule
    & \multicolumn{3}{c|}{Epinions} & \multicolumn{3}{c|}{WikiPolitics} & \multicolumn{3}{c}{WikiConflict} \\
    \backslashbox{Method}{Budget}& 10 & 30  & 50  & 10   & 30   & 50   & 10  & 30   & 50   \\
\midrule
\textsc{Isa} & 2  & 6 & 11  & 2  & 7  & 12 & 4 & 12 & 19 \\
\textsc{Spec-Top} & 2 & 6 & 9 & 3 & 9 & 15 & 3 & 7 & 11 \\
\textsc{Min-Cep} & 4 & 5 & 6 & 4 & 6 & 7 & 10 & 12 & 14 \\
\textsc{\rg} & 6 & 7 & 9 & 7 & 9 & 10 & 13  & 16  & 18  \\
\textsc{Greedy} & 7 & 13 & 18 & 9 & 18  & 25  & 15  & 22  & 28  \\
\bottomrule
\end{tabular}}
\caption{Running times in minutes of the algorithms varying budget on largest available datasets.}
\label{tab:rt_varying_budget}
\vspace{-0.20in}
\end{table}

\vspace{-0.10in}
\subsection{Impact of Community Density}
In this experiment, we systematically vary the density of the input community $H$ and analyze its impact on the performance. To control the density of $H$, we use $k$-core \cite{zhang2017finding} as the input subgraph. As $k$ increases, $H$ gets denser. Table~\ref{tab:kcore_statistics} shows the maximum and minimum $k$-core sizes along with their balance for each dataset. We vary the value of $k$ depending on the $k$-core distribution of the graph. As high $k$-cores contain fewer nodes, the highest value of $k$ is chosen such that the size of the $k$-core is at least $10\%$ of the original graph size in terms of number of nodes.

Fig.~\ref{fig:quality_varying_kcore} presents the results. In this section, we only consider the three best-performing algorithms of \greedy, \rg and \textsc{Min-Cep}. \greedy and \rg continue to be the best performers. Another interesting behavior we observe is that, the higher the $k$, and therefore density, the smaller is the gap between \greedy and \rg. In some cases, \rg performs better than \greedy. This behavior is a direct consequence of how \rg and \greedy operates. \greedy deterministically chooses the edge with the highest marginal gain. Consequently, when the gradient of the marginal gains in the sorted order is high, choosing the highest edge produces a good result. However, when the gradient is small and several edges provide similarly high marginal gains, \rg performs better.

\vspace{-0.05in}
\subsection{Visualizations on Bitcoin Network}
In the next experiment, we visually inspect the impact of edge deletions on increasing balance in the BitcoinOTC data. Fig.~\ref{fig:visBitcoin} presents the gradual increase in the size of the balanced component following $5$ and $10$ edge deletions. It shows that: (1) both positive and negative edges are chosen for deletion, and (2) there may be significant cascading impact of a single deletion (as visible in the appearance of several new green squares in Fig.~\ref{fig:visBitcoin}(c)).

\section{Conclusions}
In this paper, we studied the problem of maximizing the balance in signed networks via edge deletion. While existing studies have focused primarily on finding the largest balanced subgraph, we adopted a network design approach to improve balance inside a subgraph. We proved that the problem is NP-hard, non-submodular, and non-monotonic. To overcome the resultant computational challenges, we designed an efficient heuristic based on the relation of Laplacian eigenvalues with the balance in corresponding signed graphs. Since these heuristics do not exhibit approximation guarantees, we leverage pseudo-submodularity of the objective function to design greedy algorithms with provable approximation guarantees. 
 Through an extensive set of experiments, we showed that the proposed approximation algorithms outperform the baseline algorithms while being scalable to large graphs. An interesting future direction would be to explore alternative network design mechanisms such as node deletion and edge-sign flips to improve balance. From a theoretical perspective, we also aim to investigate the parameterized complexity of balance-related design problems.

\begin{table}[b]
    \centering
    \vspace{-0.05in}
    {\scriptsize
    \begin{tabular}{@{}lrrrr@{}}
        \toprule
        Datasets & $|V(H_{k_{min}})|$ & $|\Delta(H_{k_{min}})|$ & $|V(H_{k_{max}})|$ & $|\Delta(H_{k_{max}})|$\\
        \midrule
        Epinions & 26k & 20k & 13k & 10k\\
        Slashdot & 23k & 14k & 8k & 4k\\
        WikiConflict & 25k & 17k & 12k & 9k \\
        WikiPolitics & 37k & 30k & 14k & 11k \\
        \bottomrule
    \end{tabular}}
    \caption{Sizes of the $k$-core ($H_k$) corresponding to the minimum ($H_{k_{min}}$) and maximum ($H_{k_{max}}$) values of $k$ considered for each dataset in  Fig.~\ref{fig:quality_varying_kcore}.} 
    \label{tab:kcore_statistics}
    \vspace{-0.30in}
\end{table}

\begin{figure}[b]
    \centering
    \captionsetup[subfigure]{labelfont={normalsize,bf},textfont={normalsize,bf}}
    \subfloat[NYC Cab][Initial subgraph] {\includegraphics[width=0.15\textwidth]{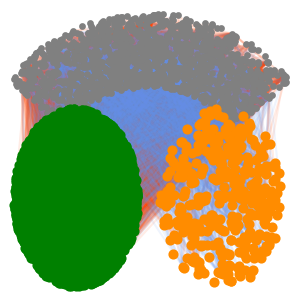}\label{fig:visBitcoin_a}}
    \subfloat[NYC Cab][{\small After $5$ deletions}]{\includegraphics[width=0.15\textwidth]{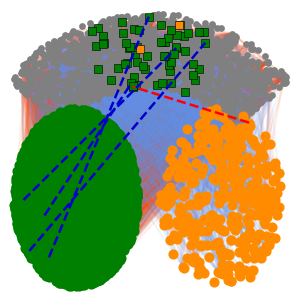}\label{fig:visBitcoin_b}}
    \subfloat[NYC Cab][{\small After $10$ deletions}]{\includegraphics[width=0.15\textwidth]{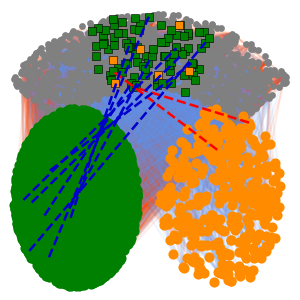}\label{fig:visBitcoin_c}}
    \vspace{-0.15in}
    \caption{Visualization of the impact of edge deletions by \greedy. Green and orange denote the two partitions of the balanced subgraph $S(H)$; grey denotes the component outside $S(H)$. The solid red and  blue edges are positive and negative edges, respectively, while the dashed edges in (b) and (c) are the ones being deleted. (b) and (c) also show the new components being added to the balanced portion through green and orange squares.}
\label{fig:visBitcoin}
\vspace{-0.10in}
\end{figure}
\vspace{-0.10in}
\section{Appendix}
\label{sec:paper_appendix}

\subsection{Proof of Lemma \ref{lemma:edge_perturb}} 
\label{sec:proof_edge_perturb}
\vspace{-0.05in}
\begin{proof}
Given a signed graph $G=(V,E,\sigma)$, a subgraph $H$, let $\lambda_i$, $\tilde{\lambda_i}$ be the eigenvalues of $L(H)$ and the perturbed matrix $\hat{L}(H)$ (after single edge $(i,j)$ deletion) respectively where $\lambda_1\leq\lambda_2\leq\cdots\leq\lambda_m$.

We have $\hat{L}(H)=L(H)+P$, and perturbation matrix $P=\bar{D}+S$, where $\bar{D}$ is a diagonal matrix with $\bar{D}_{ii}=\bar{D}_{jj}=-1$ and $0$ otherwise.
$S_{ij}=S_{ji}=\sigma(i,j)$ for the perturbed edge $(i,j)\in E$ and otherwise $0$. Given $\bm{v}$ as the unit eigenvector corresponding to $\lambda_1$ we have,
\vspace{-0.05in}
\begin{align*}
\bm{v}^T\bar{D}\bm{v}=\sum_{k=i,j}\bar{D}_{kk}v^2_k, \text{ and }
\bm{v}^TS\bm{v}=\sum_{i,j}\sigma(i,j)v_iv_j
\label{eq:s_mat}
\end{align*}

\vspace{-0.05in}
From the first-order matrix perturbation theory (see p. 183 \cite{Stewart-Book:1990}),

\vspace{-0.05in}
\begin{align*}
\tilde{\lambda}_1&=\lambda_1+\bm{v}^TP\bm{v}+O(||P||^2_F)\nonumber
=\lambda_1+\bm{v}^T\bar{D}\bm{v}+\bm{v}^TS\bm{v}+O(||P||^2_F)\nonumber\\
&=\lambda_1-\sum_{k=i,j}v^2_k+\sum_{i,j}\sigma(i,j)v_iv_j+O(1)\\
&=\lambda_1-v_i(v_i-\sigma(i,j)v_j)-v_j(v_j-\sigma(j,i)v_i)+O(1)\\
&=\lambda_1-\left(v_i-\sigma(i,j)v_j\right)^2+O(1)
\end{align*}

\vspace{-0.05in}

Now, to show that $\tilde{\lambda_1}(H)$ is indeed the smallest eigenvalue of $\hat{L}(H)$, using matrix perturbation theory (p. 203 \cite{Stewart-Book:1990}), we have

\vspace{-0.05in}

\begin{align*}
    \tilde{\lambda_1}&\leq \lambda_1+||P||_2 \leq \lambda_1+||P||_F\leq \lambda_1+2\\
    \tilde{\lambda_i}&\geq \lambda_i-||P||_2 \geq \lambda_i-||P||_F\geq \lambda_i-2,  ~(i\geq 2)
 \end{align*}
 
 \vspace{-0.05in}

Since the spectral gap $\delta=\lambda_2-\lambda_1\geq 4$, we have $\tilde{\lambda_i}\geq\tilde{\lambda_1}$. So, we have $\tilde{\lambda_1}=\hat{\lambda}_1$ is the smallest eigenvalue of $\hat{L}(H)$.
\end{proof}

\vspace{-0.20in}
\subsection{Details for proof for Theorem \ref{thm:local_pseudo_submodular}}
\label{sec:proof_pseduo_submodular}

Before proving Thm. \ref{thm:local_pseudo_submodular}, we derive a few results. Let $\hat{\mathbb{V}}$ be the node set that gets added in the maximum balanced subgraph $S(H)$ after deleting $B$ edges. We know that $\forall u \in \hat{\mathbb{V}}$ there exists $v\in S(H)$ such that $(u,v)\in B$. The inclusion of one node may lead to including more nodes in the balanced portion. Let $C_u$ be the size of component that gets added with $u\in \hat{\mathbb{V}}$ and $C^* = \max\{C_u,\: u\in \hat{\mathbb{V}}\}$. 
\begin{observation} \label{obs:c_star_UB}
\vspace{-0.05in}
\begin{equation}\label{eq:C_star}
    C^* + 1 \le \frac{\Delta(H)}{2}
    \vspace{-0.05in}
\end{equation}

\end{observation}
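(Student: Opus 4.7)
The plan is to leverage the maximality of $S(H)$ as the largest connected induced balanced subgraph of $H$ in order to derive the bound by constructing an alternative balanced subgraph that would contradict maximality if $C^* + 1$ were too large. The key observation is that the added component (attaining the maximum $C^*$) has enough structure on its own — being both balanced and attached to $S(H)$ via a bridge — that it can be grafted onto a suitable half of $S(H)$'s bipartition to form a new connected balanced subgraph whose size we can lower-bound by $C^* + 1 + \Delta(H)/2$.

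More concretely, the first step is to argue that the component of size $C^*$ is itself a balanced connected induced subgraph of $H$. This follows from the setup: by the earlier remark, every $u \in \hat{\mathbb{V}}$ has some $v \in S(H)$ with $(u,v) \in B$, so deletions in $B$ that affect the component only sever edges between the component and the original $S(H)$, never edges internal to the component. Hence, the induced subgraph on the $C^*$ cascading vertices keeps its edges intact, and since after deleting $B$ these vertices enter the balanced portion $S(H_B)$ consistently, their induced sign structure must admit a balanced bipartition $(K_1, K_2)$. The second step is to fix a bridge vertex $w \in S(H)$ incident to the added component via some edge in $B$, let $(V_1, V_2)$ be the balanced partition of $S(H)$ with $w \in V_1$, and, assuming without loss of generality $|V_2| \ge \Delta(H)/2$, consider the vertex set $K \cup \{w\} \cup V_2$. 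The third step is to verify that this induced subgraph is both connected (through $w$, which remains adjacent to $V_2$ in $H$ and, after using the preserved edges between $K$ and $w$, also to $K$) and balanced (the bipartition $(K_1 \cup \{w\}, K_2) \cup (\emptyset, V_2)$ or its appropriate flip aligns with the signs, because inside $S(H)$ the partition was already consistent, and the attachment of $K$ through $w$ inherits the consistent orientation used when $K$ joined $S(H_B)$). Applying maximality of $S(H)$ then yields $C^* + 1 + |V_2| \le \Delta(H)$, i.e., $C^* + 1 \le \Delta(H) - |V_2| \le \Delta(H)/2$.

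The main obstacle is Step~2's verification that the merged subgraph is simultaneously connected and balanced as an induced subgraph of $H$, because the component $K$ may attach to $S(H)$ through multiple deleted edges and through non-deleted edges that land in either $V_1$ or $V_2$. One has to choose the bridge $w$ and the partition side carefully so that (i) all remaining $K$-to-$S(H)$ adjacencies land in $\{w\} \cup V_2$ compatibly with signs, and (ii) $w$ is the right parity to glue $(K_1, K_2)$ onto $(V_1, V_2)$. The case analysis relies on the fact that $S(H_B) \supseteq S(H) \cup K$ already exhibits a valid global bipartition, from which the local consistency of signs along the cut between $K$ and $V_2 \cup \{w\}$ can be read off, so the construction ultimately reduces to restricting the bipartition of $S(H_B)$ to the chosen vertex set and checking that no disallowed edge appears among the retained edges.
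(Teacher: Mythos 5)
Your overall strategy is the paper's: invoke maximality of $S(H)$ by exhibiting a connected balanced induced subgraph of $H$ containing the cascading component together with (at least) half of $S(H)$, so that $C^*+1+\Delta(H)/2\le\Delta(H)$. The paper's own proof is a one-sentence version of exactly this exchange argument (the larger of $V_1$ and $V_2$ together with $\{u\}\cup C_u$). However, your concrete construction has a genuine gap: you glue $K$ to $S(H)$ through a vertex $w$ chosen as the $S(H)$-endpoint of an edge of $B$. That edge is still present in $H$, so it appears in the induced subgraph $H[K\cup\{w\}\cup V_2]$, and it is precisely a contradictory edge whose deletion was needed for $K$ to enter the balanced subgraph --- by construction its sign is incompatible with the side on which $u$ sits in $S(H_B)$. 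Your proposed verification (``read off the consistency from the bipartition of $S(H_B)$'') cannot certify this edge, because it is absent from $H_B$; in fact it is inconsistent, so the glued set is in general not balanced. The attachment must instead go through the \emph{surviving} edge $e'$ of the contradictory pair (the edge via which $u$ actually joins $S(H_B)$), and both connectivity and sign-compatibility should be argued through that edge, not through an edge of $B$.

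A second, smaller problem: after pinning $w\in V_1$ you assume ``without loss of generality'' that $|V_2|\ge\Delta(H)/2$. This is not WLOG --- once $w$ is fixed to one side, the opposite side may be the smaller one, in which case your inequality only yields $C^*+1\le\Delta(H)-|V_2|$, which can be as weak as roughly $\Delta(H)$. The paper sidesteps both issues by not routing the attachment through a designated bridge at all: it simply argues that the larger half of the bipartition together with $\{u\}\cup C_u$ would form a balanced connected set of size at least $\Delta(H)/2+C_u+1$, forcing $C_u+1\le\Delta(H)/2$. (The paper, too, leaves the connectivity and balance verification of this merged set implicit, but it does not commit to re-introducing a deleted edge.)
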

\vspace{-0.10in}
\textsc{Proof by Contradiction.}
   If $|C_u|>\frac{\Delta(H) - 2}{2}$, then the initial $S(H)$ would consist of the larger among $V_1$ and $V_2$ (which would be at least of size $\frac{\Delta(H)}{2}$) along with $\{u\} \cup C_u$. 

\textbf{Choice of $\alpha(B)$ and Peripheral Edges (PE):} Let $\alpha(B)$ be the number of nodes $x$ satisfying: (1) $|cep(H_{\{e\}}, x)| > 0,\forall e\in B$ and (2) $|cep(H_{Y}, x)| = 0$ for some subset $Y \subseteq B, Y\neq \emptyset$. We use Obs. \ref{obs:singleEdgeDeletion} to restrict the edge set $B$ to always belong to the \emph{periphery} of the current balanced subgraph.  An upperbound of $f(B)$ is as follows. 
\vspace{-0.05in}
\begin{lem}
\label{lemma:induction}
    \begin{equation}
        f(B) \le \sum_{i=1}^{b}{f(\{e_i\})} + (C^* + 1) \alpha(B).
    \end{equation}

\end{lem}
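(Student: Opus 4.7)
My plan is to partition the set $\hat{\mathbb{V}}$ of newly-added nodes according to how they come to join $S(H_B)$, and then bound each part separately. Invoking Obs.~\ref{obs:nodetypes} and Obs.~\ref{obs:singleEdgeDeletion}, each $x \in \hat{\mathbb{V}}$ is either a \emph{trigger}, meaning $|cep(H,x)| > 0$ but $|cep(H_B,x)| = 0$, or a \emph{cascade node}, whose $cep$ was already empty in $H$ but whose only paths to $S(H)$ went through blocked triggers. I would further split triggers into \emph{easy triggers} $T_1$, where $|cep(H_{\{e_i\}}, x)| = 0$ for some single $e_i \in B$, and \emph{hard triggers} $T_2$, where unblocking requires a subset of $B$ of size at least two. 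By the very definition of $\alpha(B)$, $|T_2| = \alpha(B)$.

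For the first term, for each $e_i \in B$ let $A_i$ be the set of nodes that join the balanced subgraph when only $e_i$ is deleted, so $|A_i| = f(\{e_i\})$. If $x \in T_1$ is unblocked by some single $e_i \in B$, then $x \in A_i$; moreover, any cascade node $y$ whose reachability to $S(H)$ was obstructed solely by $x$ is also in $A_i$, because $y$ has empty $cep$ in $H$ and its only barrier was connectivity through $x$, which is already restored by the single deletion of $e_i$. Hence the easy triggers and the cascade nodes attached to at least one easy trigger are contained in $\bigcup_{i=1}^b A_i$, whose cardinality is at most $\sum_{i=1}^b f(\{e_i\})$. This yields the first term of the desired bound.

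The remaining contributors to $\hat{\mathbb{V}}$ are the $\alpha(B)$ hard triggers together with those cascade nodes whose paths to $S(H)$ go exclusively through hard triggers. For each hard trigger $x \in T_2$, the block $\{x\} \cup C_x$ it drags in has size at most $1 + C^*$ by the definition of $C^*$ as the maximum cascade size. Summing over $T_2$ yields an upper bound of $(C^* + 1)\alpha(B)$ on the hard contribution, and adding to the bound from the previous paragraph gives $f(B) = |\hat{\mathbb{V}}| \leq \sum_{i=1}^b f(\{e_i\}) + (C^* + 1)\alpha(B)$. I expect the main technical obstacle to be the book-keeping in the second and third paragraphs: carefully verifying that a cascade node attached to an easy trigger is genuinely captured by the corresponding $A_i$ rather than needing additional deletions, and cleanly assigning cascade nodes that are simultaneously attached to several triggers to one representative so as to avoid double-counting while still respecting the per-trigger cap of $C^*$.
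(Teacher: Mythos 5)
Your proof is correct in substance but takes a genuinely different route from the paper. The paper proves Lemma~\ref{lemma:induction} by induction on $b$: it orders $B$ as $e_1,\dots,e_b$ (each $e_{k+1}$ peripheral with respect to $H_{B_k}$) and bounds each marginal gain $f(B_{k+1})-f(B_k)$ by a case analysis --- either the target node $x$ is freed by $e_{k+1}$ alone, in which case the gain equals $f(\{e_{k+1}\})$; or $x$ needs several edges and becomes free exactly at step $k+1$, in which case the gain is at most $C^*+1$ and $\alpha$ increments; or the gain is zero. The telescoping of marginal gains does the book-keeping automatically. Your static charging argument replaces this with a partition of $\hat{\mathbb{V}}$ into easy triggers (covered by $\bigcup_i A_i$, with $|\bigcup_i A_i|\le\sum_i f(\{e_i\})$ by subadditivity) and hard triggers plus their cascades (at most $(C^*+1)\alpha(B)$). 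Two remarks. First, the double-counting you worry about in your closing paragraph is a non-issue for an upper bound: you only need every node of $\hat{\mathbb{V}}$ to be \emph{covered} by at least one term on the right-hand side, and overlaps between the $A_i$'s or between cascades of different hard triggers only make the right-hand side larger. Second, the point that does need care is coverage: a cascade node $y$ attached to an easy trigger $x$ freed by $e_i$ belongs to $A_i$ only if the single deletion of $e_i$ already restores $y$'s entire access path to $S(H)$. Within the paper's model this holds --- by Obs.~\ref{obs:nodetypes} any node with a nonempty $cep$ is adjacent to $S(H)$, so the chain from $y$ to $x$ consists of nodes with empty $cep$ that join as soon as $x$ does --- and your argument closes. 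The paper's induction buys a cleaner accounting at the cost of committing to a particular ordering of $B$; your version makes the charging scheme explicit and arguably more transparent, at the cost of having to argue coverage by hand.
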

\vspace{-0.05in}

\begin{proof}
   This is proved using induction (Sec. \ref{sec:proof_lemmma_induction}).
\end{proof}
\vspace{-0.10in}

\subsubsection{Final proof for Theorem \ref{thm:local_pseudo_submodular}}
\vspace{-0.05in}
\begin{proof}
    Note that $f(Q \cup R) - f(Q) = \Delta(H_{Q \cup R}) - \Delta(H_{Q})$. We can write this as $\Delta(H'_R) - \Delta(H')$, where $H' = H_Q$. That means marginal gain in balance of deleting the set $R$ over $Q$ is same as the marginal gain in balance of deleting the set $R$ from $H_Q$. We can thus use $f'(R) = \Delta(H'_R) - \Delta(H')$ in place of $f$. Thus, by Lem.~\ref{lemma:induction}:
    
    \vspace{-0.05in}
    \begin{equation}\label{eq:f_dash}
        f'(B) \le \sum_{i=1}^{b}{f'(\{e_i\})} + (C^* + 1)  \alpha(B) 
    \end{equation}
    \vspace{-0.05in}

    where  $C^*$ and $\alpha$ are defined accordingly to new initial subgraph $H' = H_Q$.
    Next, we propose an upper bound of $\alpha(.)$ as follows:
    \begin{equation}\label{eq:alpha_ub}
    \alpha(B) \le \frac{|B| - 1}{2}
    \end{equation}
    This is true since we need at least two edges for one node to be counted in $\alpha(B)$. 
    \vspace{-0.05in}
\begin{equation*}
\begin{split}
\text{Now we have,  }\frac{\sum_{e \in R}{[f(Q \cup \{e\})-f(Q)]}}{f(Q \cup R)- f(Q)} = \frac{\sum_{e \in R}{f'(\{e\})}}{f'(R)} \\
\geq \frac{1}{1 + \frac{\alpha(R)(C^* + 1)}{\sum_{e \in R}{f'(\{e\})}}} \left(\text{ Replace $f'(R)$ using Eq. \ref{eq:f_dash}}\right) \\
\end{split}
\end{equation*}
\begin{equation*}
    \begin{split}
\geq \frac{1}{1 + \frac{\frac{|R|-1}{2}(C^* + 1)}{\sum_{e \in R}{f'(\{e\})}}} \left(\text{ Using the upper bound of $\alpha$ in Eq. \ref{eq:alpha_ub} }\right) \\
\geq \frac{1}{1 + \frac{1}{4}\Delta(H_Q)(|R| - 1)} \left(\text{$\sum_{e \in R}{f'(e)} \geq 1, C^* + 1 \le \frac{\Delta(H_Q)}{2}$ [Eq. \ref{eq:C_star}]}\right).
\end{split}
\end{equation*}
\end{proof}
We also show a construction for the tight lower bound in Thm.~\ref{thm:local_pseudo_submodular} (Sec.~\ref{subsec:tight_thm2}).

\subsection{Proof with bound $\frac{4\psi^r}{4\psi^r + \Delta^{RG} (b - 1)}$ } 
\label{subsec:tigh_tbound_rg}

In proof of Thm.~\ref{thm:local_pseudo_submodular}, we have $\sum_{e \in R}{f'(e)} \geq 1$. However, $\sum_{e \in R}{f'(e)} \geq \psi^r$, where $\psi^r$ is the summation of marginal gains of the elements in the optimal solution set (i.e., $R$) over the solution set produced by \rg (i.e., $Q$). Now replacing $\sum_{e \in R}{f'(e)}$, as $\psi^r$ we get, $\gamma' \geq \frac{4\psi^r}{4\psi^r + \Delta^{RG} (b - 1)} $ according to Thm. \ref{thm:approx_RG}.

\vspace{-0.05in}
\subsection{Approximation by \greedy}
\label{sec:greed_approx}



    


\begin{lem}[\cite{Das-JMLR:2018}]
\label{lem:Das_greedy}
Given $f$ is a non-negative and monotone set function, budget $b$, and $\sum_{e \in R} {[f(S^G \cup\{e\})-f(S^G))]} \geq \gamma  \cdot [f(S^G \cup R)- f(R)]$ where $S^G$ is the final set selected by the \greedy Algorithm, then the algorithm has the following approximation guarantee of $(1-e^{-\gamma_{S^G,b}})$ where $\gamma_{S^G,b} = min \{\gamma\}$ for any $R, S^G \cap R =\Phi$.
\end{lem}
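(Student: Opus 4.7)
The plan is to follow the classical greedy analysis adapted to the pseudo-submodular regime. Let $S_0 = \emptyset, S_1, \ldots, S_b = S^G$ denote the sequence of greedy iterates, where $|S_i| = i$, and let $R^*$ be an optimal set of size $b$. Define the shortfall $\phi_i := f(R^*) - f(S_i)$. The goal is to show $\phi_b \leq e^{-\gamma_{S^G,b}} f(R^*)$, which immediately yields $f(S^G) \geq (1 - e^{-\gamma_{S^G,b}}) f(R^*)$.

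First I would invoke the hypothesized pseudo-submodularity inequality, applied at each intermediate iterate $S_i$ with the disjoint remainder $R = R^* \setminus S_i$. Using that $\gamma_{S^G,b}$ is the minimum such ratio over all admissible $R$, this gives
\begin{equation*}
\sum_{e \in R^* \setminus S_i} \bigl[f(S_i \cup \{e\}) - f(S_i)\bigr] \;\geq\; \gamma_{S^G,b} \cdot \bigl[f(S_i \cup R^*) - f(S_i)\bigr].
\end{equation*}
Monotonicity of $f$ then yields $f(S_i \cup R^*) \geq f(R^*)$, so the right-hand side is at least $\gamma_{S^G,b} \cdot \phi_i$. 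For the left-hand side, because the $(i{+}1)$-st greedy pick maximizes the marginal gain over $\mathbb{C}\setminus S_i$, every $e \in R^* \setminus S_i$ satisfies $f(S_i \cup \{e\}) - f(S_i) \leq f(S_{i+1}) - f(S_i)$, so the sum is at most $|R^* \setminus S_i| \cdot [f(S_{i+1}) - f(S_i)] \leq b \cdot [f(S_{i+1}) - f(S_i)]$.

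Chaining these bounds produces the key recurrence
\begin{equation*}
f(S_{i+1}) - f(S_i) \;\geq\; \frac{\gamma_{S^G,b}}{b}\, \phi_i,
\end{equation*}
which rewrites as $\phi_{i+1} \leq \bigl(1 - \gamma_{S^G,b}/b\bigr)\, \phi_i$. Iterating from $i=0$ to $i=b-1$ and applying the standard inequality $(1 - x/b)^b \leq e^{-x}$ together with the non-negativity of $f$ (so that $\phi_0 \leq f(R^*)$) gives $\phi_b \leq e^{-\gamma_{S^G,b}} f(R^*)$, completing the argument.

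The main obstacle is a notational subtlety: the lemma as stated bundles the pseudo-submodularity condition around the terminal set $S^G$, whereas the recurrence above requires the ratio to hold at every intermediate $S_i$. The cleanest way to bridge this is to read the ``$\min\{\gamma\}$ over any $R$ with $S^G \cap R = \emptyset$'' as inducing a uniform worst-case ratio that also lower-bounds the ratio at each $S_i$ (since $S_i \subseteq S^G$, any $R$ disjoint from $S^G$ is also disjoint from $S_i$, and the marginal-gain structure degrades monotonically). Once this bookkeeping is made explicit, the argument reduces verbatim to the Das--Kempe weak-submodularity analysis that the cited reference \cite{Das-JMLR:2018} already establishes.
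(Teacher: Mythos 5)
The paper does not actually prove this lemma --- it is imported from the cited reference without proof --- so there is no in-paper argument to compare against. Your proof is the standard Das--Kempe greedy analysis, and the core recurrence is derived correctly: the greedy-maximality bound $f(S_i\cup\{e\})-f(S_i)\le f(S_{i+1})-f(S_i)$ on the left-hand side, the monotonicity bound $f(S_i\cup R^*)\ge f(R^*)$ on the right-hand side, the telescoping $\phi_{i+1}\le(1-\gamma/b)\phi_i$, and the final step $(1-\gamma/b)^b\le e^{-\gamma}$ with $\phi_0\le f(R^*)$ (non-negativity) are all sound. You also silently read $f(S^G\cup R)-f(S^G)$ where the statement prints $f(S^G\cup R)-f(R)$; that is evidently a typo in the paper and your reading is the intended one.

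The one soft spot is exactly the one you flag, but your proposed bridge does not hold up. The recurrence needs the ratio inequality with each intermediate iterate $S_i$ as the base set, while the hypothesis as printed asserts it only at the terminal set $S^G$. Your justification --- that ``the marginal-gain structure degrades monotonically,'' so the worst case at $S^G$ also bounds the ratio at each $S_i$ --- is not a valid argument for a general non-submodular monotone function: there is no a priori ordering between the submodularity ratio measured at $S_i$ and at $S^G\supseteq S_i$, and one can cook up monotone functions where the ratio at a subset is strictly worse. The correct resolution is definitional rather than analytic: in the cited reference, $\gamma_{U,k}$ is defined as the minimum of the ratio over \emph{all} base sets $L\subseteq U$ and all disjoint $S$ with $|S|\le k$, so $\gamma_{S^G,b}$ lower-bounds the ratio at every $S_i\subseteq S^G$ by definition, which is precisely what the recurrence consumes. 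With that reading your proof goes through verbatim; with the literal reading of the hypothesis as printed, the lemma is too weak to support the recurrence, and no amount of ``degradation'' reasoning repairs it.
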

\vspace{-0.05in}

We apply this result in our problem setting: 

\vspace{-0.05in}
\begin{thm} 
\label{thm:approx_greedy}
For the MBED problem, \greedy algorithm obtains an approximation of $1 - e^{-\gamma'}$, and $\gamma' \geq \frac{4}{4 + \Delta^{*}(b -1)}$
where $b$ and $\Delta^{*}$ denote the budget and the balance after deleting the optimal set of edges respectively.
\end{thm}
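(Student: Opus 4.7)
The plan is to mirror the proof of Theorem \ref{thm:approx_RG}, substituting Lem.~\ref{lem:Das_greedy} for Lem.~\ref{lem:greedy_appx}. Lem.~\ref{lem:Das_greedy} already reduces the problem to exhibiting, for the final greedy solution $S^G$ and any disjoint set $R$, a pseudo-submodularity ratio $\gamma$ satisfying
\[
\sum_{e\in R}\bigl[f(S^G\cup\{e\})-f(S^G)\bigr]\ \geq\ \gamma\,\bigl[f(S^G\cup R)-f(S^G)\bigr].
\]
So the work reduces to producing a uniform lower bound on $\gamma_{S^G,R}$ over all feasible $R$ with $|R|\le b$, and then plugging that bound into the $(1-e^{-\gamma})$ guarantee.

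First, I would invoke Obs.~\ref{obs:monotonic}: since \greedy only selects peripheral edges whose deletion does not increase the number of connected components (the same restriction used for \rg), the objective $f$ is monotone along the entire trajectory of the algorithm. In particular, $f$ is monotone for the hypothesis of Lem.~\ref{lem:Das_greedy}. Second, I would directly apply Thm.~\ref{thm:local_pseudo_submodular} with $Q=S^G$ and the arbitrary disjoint set $R$ to obtain
\[
\gamma_{S^G,R}\ \geq\ \frac{1}{1+\tfrac{1}{4}\Delta(H_{S^G})(|R|-1)}\ =\ \frac{4}{4+\Delta(H_{S^G})(|R|-1)}.
\]
Third, to turn this into a uniform bound independent of $R$, I would use monotonicity one more time: since $S^G$ is the output of greedy deletions and the optimum $B^*$ achieves balance $\Delta^*$, monotonicity together with the optimality of $B^*$ yields $\Delta(H_{S^G})\le \Delta^*$. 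Combined with $|R|\le b$, this gives
\[
\gamma_{S^G,R}\ \geq\ \frac{4}{4+\Delta^{*}(b-1)}\ =:\ \gamma'.
\]

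Finally, substituting $\gamma_{S^G,b}\geq \gamma'$ into Lem.~\ref{lem:Das_greedy} yields the desired approximation factor $1-e^{-\gamma'}$. The main obstacle is the justification that $\Delta(H_{S^G})\le \Delta^{*}$ is the right quantity to substitute for $\Delta(H_Q)$ in the local ratio: because Thm.~\ref{thm:local_pseudo_submodular} is stated for an arbitrary base set $Q$ but Lem.~\ref{lem:Das_greedy} only needs the inequality at $Q=S^G$, the argument goes through cleanly; otherwise one would have to worry about intermediate $S_i$'s with larger balance than $S^G$ or $B^*$, which cannot happen under monotonicity. The tighter bounds reported in Table~\ref{tab:summary_approx} (with $\Delta^G$ instead of $\Delta^{*}$, and the $\psi^g$ refinement) follow from exactly the same argument by keeping $\Delta(H_{S^G})=\Delta^G$ rather than upper-bounding it by $\Delta^{*}$, and by replacing the crude lower bound $\sum_{e\in R}f'(\{e\})\ge 1$ used inside Thm.~\ref{thm:local_pseudo_submodular} with $\psi^g$, the total marginal gain of the optimal elements over $S^G$.
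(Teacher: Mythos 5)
Your proposal is correct and follows essentially the same route as the paper: apply Theorem~\ref{thm:local_pseudo_submodular} at $Q=S^G$, bound $\Delta(H_{S^G})\le\Delta^{*}$ by optimality of $B^*$ (monotonicity is not actually needed for this step, since $S^G$ is a feasible $b$-edge set), and substitute the resulting uniform $\gamma'$ into Lemma~\ref{lem:Das_greedy}. Your additional remarks on monotonicity via Obs.~\ref{obs:monotonic} and on the tighter $\Delta^G$/$\psi^g$ bounds match the paper's treatment as well.
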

\vspace{-0.1in}

\begin{proof}
Let the optimal set of $b$ edges be $B^*$ and let $S^G$ denote the final edge set by the \greedy algorithm. Also, let $\Delta^{*}$ denote the balance after deleting the optimal set of edges, then by its definition we have $\Delta(H_{S^G}) \le \Delta^{*}$. From Theorem \ref{thm:local_pseudo_submodular}, $\gamma'= \gamma_{S^G, |B^*|} \geq \frac{4}{4 + \Delta(H_{S^G})(|B^*| - 1)} \geq \frac{4}{4 + \Delta^{*} (b -1)} $. So, substituting $\gamma_{S^G,b}$ in Lem. \ref{lem:Das_greedy} as $\gamma_{S^G, |B^*|}$ (or $\gamma'$), we get the desired approximation of $\frac{4}{4 + \Delta^{*} (b -1)}$. 
\end{proof}

The other lowers bounds for $\gamma'$ (where the approximation produced by \greedy is $1 - e^{-\gamma'}$) as $\frac{4}{4 + \Delta^{G} (b - 1)}$ and $\frac{4\psi^g}{4\psi^g + \Delta^{G} (b - 1)}$ can be derived in similar ways as in the case of \rg.

\bibliographystyle{ACM-Reference-Format}
\bibliography{WSDM21}


\begin{thebibliography}{39}


\ifx \showCODEN    \undefined \def \showCODEN     #1{\unskip}     \fi
\ifx \showDOI      \undefined \def \showDOI       #1{#1}\fi
\ifx \showISBNx    \undefined \def \showISBNx     #1{\unskip}     \fi
\ifx \showISBNxiii \undefined \def \showISBNxiii  #1{\unskip}     \fi
\ifx \showISSN     \undefined \def \showISSN      #1{\unskip}     \fi
\ifx \showLCCN     \undefined \def \showLCCN      #1{\unskip}     \fi
\ifx \shownote     \undefined \def \shownote      #1{#1}          \fi
\ifx \showarticletitle \undefined \def \showarticletitle #1{#1}   \fi
\ifx \showURL      \undefined \def \showURL       {\relax}        \fi
\providecommand\bibfield[2]{#2}
\providecommand\bibinfo[2]{#2}
\providecommand\natexlab[1]{#1}
\providecommand\showeprint[2][]{arXiv:#2}

\bibitem[\protect\citeauthoryear{Akiyama, Avis, Chv{\'a}tal, and Era}{Akiyama
  et~al\mbox{.}}{1981}]%
        {akiyama1981balancing}
\bibfield{author}{\bibinfo{person}{Jin Akiyama}, \bibinfo{person}{David Avis},
  \bibinfo{person}{Vasek Chv{\'a}tal}, {and} \bibinfo{person}{Hiroshi Era}.}
  \bibinfo{year}{1981}\natexlab{}.
\newblock \showarticletitle{Balancing signed graphs}.
\newblock \bibinfo{journal}{\emph{Discrete Applied Mathematics}}
  \bibinfo{volume}{3}, \bibinfo{number}{4} (\bibinfo{year}{1981}),
  \bibinfo{pages}{227--233}.
\newblock


\bibitem[\protect\citeauthoryear{Arulselvan}{Arulselvan}{2014}]%
        {arulselvan2014note}
\bibfield{author}{\bibinfo{person}{Ashwin Arulselvan}.}
  \bibinfo{year}{2014}\natexlab{}.
\newblock \showarticletitle{A note on the set union knapsack problem}.
\newblock \bibinfo{journal}{\emph{Discrete Applied Mathematics}}
  \bibinfo{volume}{169} (\bibinfo{year}{2014}), \bibinfo{pages}{214--218}.
\newblock


\bibitem[\protect\citeauthoryear{Askarisichani, {Ng~Lane}, Bullo, Friedkin,
  Singh, and Uzzi}{Askarisichani et~al\mbox{.}}{2019}]%
        {omid}
\bibfield{author}{\bibinfo{person}{O. Askarisichani}, \bibinfo{person}{J.
  {Ng~Lane}}, \bibinfo{person}{F. Bullo}, \bibinfo{person}{N.~E. Friedkin},
  \bibinfo{person}{A.~K. Singh}, {and} \bibinfo{person}{B. Uzzi}.}
  \bibinfo{year}{2019}\natexlab{}.
\newblock \showarticletitle{Structural Balance Emerges and Explains Performance
  in Risky Decision-Making}.
\newblock  \bibinfo{volume}{10}, \bibinfo{number}{2648} (\bibinfo{year}{2019}).
\newblock
\urldef\tempurl%
\url{https://doi.org/10.1038/s41467-019-10548-8}
\showDOI{\tempurl}


\bibitem[\protect\citeauthoryear{Belardo}{Belardo}{2014}]%
        {belardo2014balancedness}
\bibfield{author}{\bibinfo{person}{Francesco Belardo}.}
  \bibinfo{year}{2014}\natexlab{}.
\newblock \showarticletitle{Balancedness and the least eigenvalue of Laplacian
  of signed graphs}.
\newblock \bibinfo{journal}{\emph{Linear Algebra Appl.}}  \bibinfo{volume}{446}
  (\bibinfo{year}{2014}), \bibinfo{pages}{133--147}.
\newblock


\bibitem[\protect\citeauthoryear{Chaoji, Ranu, Rastogi, and Bhatt}{Chaoji
  et~al\mbox{.}}{2012}]%
        {chaoji2012recommendations}
\bibfield{author}{\bibinfo{person}{Vineet Chaoji}, \bibinfo{person}{Sayan
  Ranu}, \bibinfo{person}{Rajeev Rastogi}, {and} \bibinfo{person}{Rushi
  Bhatt}.} \bibinfo{year}{2012}\natexlab{}.
\newblock \showarticletitle{Recommendations to boost content spread in social
  networks}. In \bibinfo{booktitle}{\emph{WWW}}. \bibinfo{pages}{529--538}.
\newblock


\bibitem[\protect\citeauthoryear{Coakley and Rokhlin}{Coakley and
  Rokhlin}{2013}]%
        {coakley2013fast}
\bibfield{author}{\bibinfo{person}{Ed~S Coakley} {and}
  \bibinfo{person}{Vladimir Rokhlin}.} \bibinfo{year}{2013}\natexlab{}.
\newblock \showarticletitle{A fast divide-and-conquer algorithm for computing
  the spectra of real symmetric tridiagonal matrices}.
\newblock \bibinfo{journal}{\emph{Applied and Computational Harmonic Analysis}}
  \bibinfo{volume}{34}, \bibinfo{number}{3} (\bibinfo{year}{2013}),
  \bibinfo{pages}{379--414}.
\newblock


\bibitem[\protect\citeauthoryear{Crescenzi, D'Angelo, Severini, and
  Velaj}{Crescenzi et~al\mbox{.}}{2015}]%
        {crescenzi2015}
\bibfield{author}{\bibinfo{person}{Pierluigi Crescenzi},
  \bibinfo{person}{Gianlorenzo D'Angelo}, \bibinfo{person}{Lorenzo Severini},
  {and} \bibinfo{person}{Yllka Velaj}.} \bibinfo{year}{2015}\natexlab{}.
\newblock \showarticletitle{Greedily Improving Our Own Centrality in A
  Network}. In \bibinfo{booktitle}{\emph{SEA}}. \bibinfo{publisher}{Springer
  International Publishing}, \bibinfo{pages}{43--55}.
\newblock


\bibitem[\protect\citeauthoryear{Das and Kempe}{Das and Kempe}{2018}]%
        {Das-JMLR:2018}
\bibfield{author}{\bibinfo{person}{A. Das} {and} \bibinfo{person}{D. Kempe}.}
  \bibinfo{year}{2018}\natexlab{}.
\newblock \showarticletitle{Approximate submodularity and its applications:
  subset selection, sparse approximation and dictionary selection}.
\newblock \bibinfo{journal}{\emph{The Journal of Machine Learning Research}}
  \bibinfo{volume}{19}, \bibinfo{number}{1} (\bibinfo{year}{2018}),
  \bibinfo{pages}{74--107}.
\newblock


\bibitem[\protect\citeauthoryear{DasGupta, Enciso, Sontag, and Zhang}{DasGupta
  et~al\mbox{.}}{2007}]%
        {dasgupta2007algorithmic}
\bibfield{author}{\bibinfo{person}{Bhaskar DasGupta},
  \bibinfo{person}{German~Andres Enciso}, \bibinfo{person}{Eduardo Sontag},
  {and} \bibinfo{person}{Yi Zhang}.} \bibinfo{year}{2007}\natexlab{}.
\newblock \showarticletitle{Algorithmic and complexity results for
  decompositions of biological networks into monotone subsystems}.
\newblock \bibinfo{journal}{\emph{Biosystems}} \bibinfo{volume}{90},
  \bibinfo{number}{1} (\bibinfo{year}{2007}), \bibinfo{pages}{161--178}.
\newblock


\bibitem[\protect\citeauthoryear{Dey and Medya}{Dey and Medya}{2020}]%
        {dey2019manipulating}
\bibfield{author}{\bibinfo{person}{Palash Dey} {and} \bibinfo{person}{Sourav
  Medya}.} \bibinfo{year}{2020}\natexlab{}.
\newblock \showarticletitle{Manipulating Node Similarity Measures in Network}.
  In \bibinfo{booktitle}{\emph{AAMAS}}.
\newblock


\bibitem[\protect\citeauthoryear{Dilkina, Lai, and Gomes}{Dilkina
  et~al\mbox{.}}{2011}]%
        {dilkina2011}
\bibfield{author}{\bibinfo{person}{Bistra Dilkina},
  \bibinfo{person}{Katherine~J. Lai}, {and} \bibinfo{person}{Carla~P. Gomes}.}
  \bibinfo{year}{2011}\natexlab{}.
\newblock \showarticletitle{Upgrading shortest paths in networks}. In
  \bibinfo{booktitle}{\emph{Integration of AI and OR Techniques in Constraint
  Programming for Combinatorial Optimization Problems}}.
  \bibinfo{publisher}{Springer}, \bibinfo{pages}{76--91}.
\newblock


\bibitem[\protect\citeauthoryear{Figueiredo and Frota}{Figueiredo and
  Frota}{2014}]%
        {figueiredo2014maximum}
\bibfield{author}{\bibinfo{person}{Rosa Figueiredo} {and} \bibinfo{person}{Yuri
  Frota}.} \bibinfo{year}{2014}\natexlab{}.
\newblock \showarticletitle{The maximum balanced subgraph of a signed graph:
  Applications and solution approaches}.
\newblock \bibinfo{journal}{\emph{European Journal of Operational Research}}
  \bibinfo{volume}{236}, \bibinfo{number}{2} (\bibinfo{year}{2014}),
  \bibinfo{pages}{473--487}.
\newblock


\bibitem[\protect\citeauthoryear{Garimella and Weber}{Garimella and
  Weber}{2017}]%
        {garimella2017long}
\bibfield{author}{\bibinfo{person}{Venkata Rama~Kiran Garimella} {and}
  \bibinfo{person}{Ingmar Weber}.} \bibinfo{year}{2017}\natexlab{}.
\newblock \showarticletitle{A long-term analysis of polarization on Twitter}.
  In \bibinfo{booktitle}{\emph{Eleventh International AAAI Conference on Web
  and Social Media}}.
\newblock


\bibitem[\protect\citeauthoryear{Goldschmidt, Nehme, and Yu}{Goldschmidt
  et~al\mbox{.}}{1994}]%
        {goldschmidt1994note}
\bibfield{author}{\bibinfo{person}{Olivier Goldschmidt}, \bibinfo{person}{David
  Nehme}, {and} \bibinfo{person}{Gang Yu}.} \bibinfo{year}{1994}\natexlab{}.
\newblock \showarticletitle{Note: On the set-union knapsack problem}.
\newblock \bibinfo{journal}{\emph{Naval Research Logistics (NRL)}}
  (\bibinfo{year}{1994}).
\newblock


\bibitem[\protect\citeauthoryear{Harary et~al\mbox{.}}{Harary
  et~al\mbox{.}}{1953}]%
        {harary1953notion}
\bibfield{author}{\bibinfo{person}{Frank Harary} {et~al\mbox{.}}}
  \bibinfo{year}{1953}\natexlab{}.
\newblock \showarticletitle{On the notion of balance of a signed graph.}
\newblock \bibinfo{journal}{\emph{The Michigan Mathematical Journal}}
  \bibinfo{volume}{2}, \bibinfo{number}{2} (\bibinfo{year}{1953}),
  \bibinfo{pages}{143--146}.
\newblock


\bibitem[\protect\citeauthoryear{Hou, Li, and Pan}{Hou et~al\mbox{.}}{2003}]%
        {hou2003laplacian}
\bibfield{author}{\bibinfo{person}{Yaoping Hou}, \bibinfo{person}{Jiongsheng
  Li}, {and} \bibinfo{person}{Yongliang Pan}.} \bibinfo{year}{2003}\natexlab{}.
\newblock \showarticletitle{On the Laplacian eigenvalues of signed graphs}.
\newblock \bibinfo{journal}{\emph{Linear and Multilinear Algebra}}
  \bibinfo{volume}{51}, \bibinfo{number}{1} (\bibinfo{year}{2003}),
  \bibinfo{pages}{21--30}.
\newblock


\bibitem[\protect\citeauthoryear{H{\"u}ffner, Betzler, and
  Niedermeier}{H{\"u}ffner et~al\mbox{.}}{2007}]%
        {huffner2007optimal}
\bibfield{author}{\bibinfo{person}{Falk H{\"u}ffner}, \bibinfo{person}{Nadja
  Betzler}, {and} \bibinfo{person}{Rolf Niedermeier}.}
  \bibinfo{year}{2007}\natexlab{}.
\newblock \showarticletitle{Optimal edge deletions for signed graph balancing}.
  In \bibinfo{booktitle}{\emph{International Workshop on Experimental and
  Efficient Algorithms}}. Springer, \bibinfo{pages}{297--310}.
\newblock


\bibitem[\protect\citeauthoryear{Ishakian, Erdos, Terzi, and
  Bestavros}{Ishakian et~al\mbox{.}}{2012}]%
        {ishakian2012framework}
\bibfield{author}{\bibinfo{person}{Vatche Ishakian}, \bibinfo{person}{D{\'o}ra
  Erdos}, \bibinfo{person}{Evimaria Terzi}, {and} \bibinfo{person}{Azer
  Bestavros}.} \bibinfo{year}{2012}\natexlab{}.
\newblock \showarticletitle{A Framework for the Evaluation and Management of
  Network Centrality}. In \bibinfo{booktitle}{\emph{Proc. SIAM International
  Conference on Data Mining}}. \bibinfo{pages}{427--438}.
\newblock


\bibitem[\protect\citeauthoryear{Kempe, Kleinberg, and Tardos}{Kempe
  et~al\mbox{.}}{2003}]%
        {kempe}
\bibfield{author}{\bibinfo{person}{David Kempe}, \bibinfo{person}{Jon
  Kleinberg}, {and} \bibinfo{person}{{\'E}va Tardos}.}
  \bibinfo{year}{2003}\natexlab{}.
\newblock \showarticletitle{Maximizing the spread of influence through a social
  network}. In \bibinfo{booktitle}{\emph{KDD}}.
\newblock


\bibitem[\protect\citeauthoryear{Kimura, Saito, and Motoda}{Kimura
  et~al\mbox{.}}{2008}]%
        {kimura2008minimizing}
\bibfield{author}{\bibinfo{person}{Masahiro Kimura}, \bibinfo{person}{Kazumi
  Saito}, {and} \bibinfo{person}{Hiroshi Motoda}.}
  \bibinfo{year}{2008}\natexlab{}.
\newblock \showarticletitle{Minimizing the Spread of Contamination by Blocking
  Links in a Network.}. In \bibinfo{booktitle}{\emph{AAAI}}.
\newblock


\bibitem[\protect\citeauthoryear{Knyazev}{Knyazev}{2001}]%
        {knyazev2001toward}
\bibfield{author}{\bibinfo{person}{Andrew~V Knyazev}.}
  \bibinfo{year}{2001}\natexlab{}.
\newblock \showarticletitle{Toward the optimal preconditioned eigensolver:
  Locally optimal block preconditioned conjugate gradient method}.
\newblock \bibinfo{journal}{\emph{SIAM journal on scientific computing}}
  \bibinfo{volume}{23}, \bibinfo{number}{2} (\bibinfo{year}{2001}),
  \bibinfo{pages}{517--541}.
\newblock


\bibitem[\protect\citeauthoryear{Li and Li}{Li and Li}{2009}]%
        {li2009note}
\bibfield{author}{\bibinfo{person}{Hong-hai Li} {and}
  \bibinfo{person}{Jiong-sheng Li}.} \bibinfo{year}{2009}\natexlab{}.
\newblock \showarticletitle{Note on the normalized Laplacian eigenvalues of
  signed graphs.}
\newblock \bibinfo{journal}{\emph{Australasian J. Combinatorics}}
  \bibinfo{volume}{44} (\bibinfo{year}{2009}), \bibinfo{pages}{153--162}.
\newblock


\bibitem[\protect\citeauthoryear{Lin and Mouratidis}{Lin and
  Mouratidis}{2015}]%
        {lin2015}
\bibfield{author}{\bibinfo{person}{Yimin Lin} {and} \bibinfo{person}{Kyriakos
  Mouratidis}.} \bibinfo{year}{2015}\natexlab{}.
\newblock \showarticletitle{Best upgrade plans for single and multiple
  source-destination pairs}.
\newblock \bibinfo{journal}{\emph{GeoInformatica}} \bibinfo{volume}{19},
  \bibinfo{number}{2} (\bibinfo{year}{2015}), \bibinfo{pages}{365--404}.
\newblock


\bibitem[\protect\citeauthoryear{Medya, Ma, Silva, and Singh}{Medya
  et~al\mbox{.}}{2020}]%
        {medyakcore}
\bibfield{author}{\bibinfo{person}{Sourav Medya}, \bibinfo{person}{Tiyani Ma},
  \bibinfo{person}{Arlei Silva}, {and} \bibinfo{person}{Ambuj Singh}.}
  \bibinfo{year}{2020}\natexlab{}.
\newblock \showarticletitle{A Game Theoretic Approach For Core Resilience}. In
  \bibinfo{booktitle}{\emph{Proceedings of the Twenty-Ninth International Joint
  Conference on Artificial Intelligence, {IJCAI-20}}}.
\newblock


\bibitem[\protect\citeauthoryear{{Medya}, {Silva}, and {Singh}}{{Medya}
  et~al\mbox{.}}{2020}]%
        {medya_influence}
\bibfield{author}{\bibinfo{person}{S. {Medya}}, \bibinfo{person}{A. {Silva}},
  {and} \bibinfo{person}{A. {Singh}}.} \bibinfo{year}{2020}\natexlab{}.
\newblock \showarticletitle{Approximate Algorithms for Data-driven Influence
  Limitation}.
\newblock \bibinfo{journal}{\emph{IEEE Transactions on Knowledge and Data
  Engineering}} (\bibinfo{year}{2020}).
\newblock


\bibitem[\protect\citeauthoryear{Medya, Silva, Singh, Basu, and Swami}{Medya
  et~al\mbox{.}}{2018a}]%
        {medya2018group}
\bibfield{author}{\bibinfo{person}{Sourav Medya}, \bibinfo{person}{Arlei
  Silva}, \bibinfo{person}{Ambuj Singh}, \bibinfo{person}{Prithwish Basu},
  {and} \bibinfo{person}{Ananthram Swami}.} \bibinfo{year}{2018}\natexlab{a}.
\newblock \showarticletitle{Group centrality maximization via network design}.
  In \bibinfo{booktitle}{\emph{Proc. 24th SIAM International Conference on Data
  Mining}}. SIAM, \bibinfo{pages}{126--134}.
\newblock


\bibitem[\protect\citeauthoryear{Medya, Vachery, Ranu, and Singh}{Medya
  et~al\mbox{.}}{2018b}]%
        {medya2018noticeable}
\bibfield{author}{\bibinfo{person}{Sourav Medya}, \bibinfo{person}{Jithin
  Vachery}, \bibinfo{person}{Sayan Ranu}, {and} \bibinfo{person}{Ambuj Singh}.}
  \bibinfo{year}{2018}\natexlab{b}.
\newblock \showarticletitle{Noticeable network delay minimization via node
  upgrades}.
\newblock \bibinfo{journal}{\emph{Proceedings of the VLDB Endowment}}
  \bibinfo{volume}{11}, \bibinfo{number}{9} (\bibinfo{year}{2018}),
  \bibinfo{pages}{988--1001}.
\newblock


\bibitem[\protect\citeauthoryear{Meyerson and Tagiku}{Meyerson and
  Tagiku}{2009}]%
        {meyerson2009}
\bibfield{author}{\bibinfo{person}{Adam Meyerson} {and} \bibinfo{person}{Brian
  Tagiku}.} \bibinfo{year}{2009}\natexlab{}.
\newblock \showarticletitle{Minimizing average shortest path distances via
  shortcut edge addition}.
\newblock In \bibinfo{booktitle}{\emph{Approximation, Randomization, and
  Combinatorial Optimization. Algorithms and Techniques (APPROX-RANDOM)}}.
  \bibinfo{publisher}{Springer}, \bibinfo{pages}{272--285}.
\newblock


\bibitem[\protect\citeauthoryear{Mitra, Ranu, Kolar, Telang, Bhattacharya,
  Kokku, and Raghavan}{Mitra et~al\mbox{.}}{2015}]%
        {myinfocom}
\bibfield{author}{\bibinfo{person}{Shubhadip Mitra}, \bibinfo{person}{Sayan
  Ranu}, \bibinfo{person}{Vinay Kolar}, \bibinfo{person}{Aditya Telang},
  \bibinfo{person}{Arnab Bhattacharya}, \bibinfo{person}{Ravi Kokku}, {and}
  \bibinfo{person}{Sriram Raghavan}.} \bibinfo{year}{2015}\natexlab{}.
\newblock \showarticletitle{Trajectory aware macro-cell planning for mobile
  users}. In \bibinfo{booktitle}{\emph{2015 IEEE Conference on Computer
  Communications (INFOCOM)}}. IEEE, \bibinfo{pages}{792--800}.
\newblock


\bibitem[\protect\citeauthoryear{Nemhauser and Wolsey}{Nemhauser and
  Wolsey}{1978}]%
        {nemhauser1978}
\bibfield{author}{\bibinfo{person}{George~L Nemhauser} {and}
  \bibinfo{person}{Laurence~A Wolsey}.} \bibinfo{year}{1978}\natexlab{}.
\newblock \showarticletitle{Best algorithms for approximating the maximum of a
  submodular set function}.
\newblock \bibinfo{journal}{\emph{Mathematics of operations research}}
  \bibinfo{volume}{3}, \bibinfo{number}{3} (\bibinfo{year}{1978}),
  \bibinfo{pages}{177--188}.
\newblock


\bibitem[\protect\citeauthoryear{Ordozgoiti, Matakos, and Gionis}{Ordozgoiti
  et~al\mbox{.}}{2020}]%
        {ordozgoiti2020finding}
\bibfield{author}{\bibinfo{person}{Bruno Ordozgoiti}, \bibinfo{person}{Antonis
  Matakos}, {and} \bibinfo{person}{Aristides Gionis}.}
  \bibinfo{year}{2020}\natexlab{}.
\newblock \showarticletitle{Finding large balanced subgraphs in signed
  networks}. In \bibinfo{booktitle}{\emph{Proceedings of The Web Conference
  2020}}. \bibinfo{pages}{1378--1388}.
\newblock


\bibitem[\protect\citeauthoryear{Orecchia, Sachdeva, and Vishnoi}{Orecchia
  et~al\mbox{.}}{2012}]%
        {orecchia2012approximating}
\bibfield{author}{\bibinfo{person}{Lorenzo Orecchia}, \bibinfo{person}{Sushant
  Sachdeva}, {and} \bibinfo{person}{Nisheeth~K Vishnoi}.}
  \bibinfo{year}{2012}\natexlab{}.
\newblock \showarticletitle{Approximating the exponential, the Lanczos method
  and an O (m)-time spectral algorithm for balanced separator}. In
  \bibinfo{booktitle}{\emph{Proceedings of the forty-fourth annual ACM
  symposium on Theory of computing}}. \bibinfo{pages}{1141--1160}.
\newblock


\bibitem[\protect\citeauthoryear{Paulheim}{Paulheim}{2017}]%
        {kg}
\bibfield{author}{\bibinfo{person}{Heiko Paulheim}.}
  \bibinfo{year}{2017}\natexlab{}.
\newblock \showarticletitle{Knowledge graph refinement: A survey of approaches
  and evaluation methods}.
\newblock \bibinfo{journal}{\emph{Semantic web}} \bibinfo{volume}{8},
  \bibinfo{number}{3} (\bibinfo{year}{2017}), \bibinfo{pages}{489--508}.
\newblock


\bibitem[\protect\citeauthoryear{Peng, Kolda, and Pinar}{Peng
  et~al\mbox{.}}{2014}]%
        {peng2014accelerating}
\bibfield{author}{\bibinfo{person}{Chengbin Peng}, \bibinfo{person}{Tamara~G
  Kolda}, {and} \bibinfo{person}{Ali Pinar}.} \bibinfo{year}{2014}\natexlab{}.
\newblock \showarticletitle{Accelerating community detection by using k-core
  subgraphs}.
\newblock \bibinfo{journal}{\emph{arXiv preprint arXiv:1403.2226}}
  (\bibinfo{year}{2014}).
\newblock


\bibitem[\protect\citeauthoryear{Poljak and Turz{\'\i}k}{Poljak and
  Turz{\'\i}k}{1986}]%
        {poljak1986polynomial}
\bibfield{author}{\bibinfo{person}{Svatopluk Poljak} {and}
  \bibinfo{person}{Daniel Turz{\'\i}k}.} \bibinfo{year}{1986}\natexlab{}.
\newblock \showarticletitle{A polynomial time heuristic for certain subgraph
  optimization problems with guaranteed worst case bound}.
\newblock \bibinfo{journal}{\emph{Discrete Mathematics}} \bibinfo{volume}{58},
  \bibinfo{number}{1} (\bibinfo{year}{1986}), \bibinfo{pages}{99--104}.
\newblock


\bibitem[\protect\citeauthoryear{Santiago and Yoshida}{Santiago and
  Yoshida}{2020}]%
        {santiago2020weakly}
\bibfield{author}{\bibinfo{person}{Richard Santiago} {and}
  \bibinfo{person}{Yuichi Yoshida}.} \bibinfo{year}{2020}\natexlab{}.
\newblock \showarticletitle{Weakly Submodular Function Maximization Using Local
  Submodularity Ratio}.
\newblock \bibinfo{journal}{\emph{arXiv preprint arXiv:2004.14650}}
  (\bibinfo{year}{2020}).
\newblock


\bibitem[\protect\citeauthoryear{Stewart and Sun}{Stewart and Sun}{1990}]%
        {Stewart-Book:1990}
\bibfield{author}{\bibinfo{person}{G.W. Stewart} {and} \bibinfo{person}{J-g
  Sun}.} \bibinfo{year}{1990}\natexlab{}.
\newblock \bibinfo{booktitle}{\emph{Matrix Perturbation Theory}}.
\newblock \bibinfo{publisher}{Academic Press, Inc.}
\newblock


\bibitem[\protect\citeauthoryear{Zhang, Zhang, Qin, Zhang, and Lin}{Zhang
  et~al\mbox{.}}{2017}]%
        {zhang2017finding}
\bibfield{author}{\bibinfo{person}{Fan Zhang}, \bibinfo{person}{Ying Zhang},
  \bibinfo{person}{Lu Qin}, \bibinfo{person}{Wenjie Zhang}, {and}
  \bibinfo{person}{Xuemin Lin}.} \bibinfo{year}{2017}\natexlab{}.
\newblock \showarticletitle{Finding Critical Users for Social Network
  Engagement: The Collapsed k-Core Problem}. In
  \bibinfo{booktitle}{\emph{Thirty-First AAAI Conference on Artificial
  Intelligence}}. \bibinfo{pages}{245--251}.
\newblock


\bibitem[\protect\citeauthoryear{Zhou, Zhang, Lin, Zhang, and Chen}{Zhou
  et~al\mbox{.}}{2019}]%
        {zhou2019k}
\bibfield{author}{\bibinfo{person}{Zhongxin Zhou}, \bibinfo{person}{Fan Zhang},
  \bibinfo{person}{Xuemin Lin}, \bibinfo{person}{Wenjie Zhang}, {and}
  \bibinfo{person}{Chen Chen}.} \bibinfo{year}{2019}\natexlab{}.
\newblock \showarticletitle{K-Core Maximization: An Edge Addition Approach.}.
  In \bibinfo{booktitle}{\emph{IJCAI}}. \bibinfo{pages}{4867--4873}.
\newblock


\end{thebibliography}

\section{Additional proofs}

\subsection{NP-hardness}
\label{app:nphard}
\begin{proof}

Let $SK(U,S,P,W,q)$ be an instance of the Set Union Knapsack Problem 
 \cite{goldschmidt1994note}, where $U=\{u_1, \ldots u_{n}\}$ is a set of items, $S=\{S_1, \ldots S_{m}\}$ is a set of subsets ($S_i \subseteq U$), $P:S\to\mathbb{R}_{+}$ is a subset profit function, $w:U\to\mathbb{R}_+$ is an item weight function, and $q \in \mathbb{R}_+$ is the budget. For a subset $\mathcal{A} \subseteq S$, the weighted union of set $\mathcal{A}$ is $W(\mathcal{A}) = \sum_{e\in \cup_{t\in \mathcal{A}} S_t} w_e$ and $P(\mathcal{A})=\sum_{t \in \mathcal{A}} p_t$. The problem is to find a subset $\mathcal{A}^*\subseteq S$ such that $W(\mathcal{A}^*)\leq q$ and $P(\mathcal{A}^*)$ is maximized.  SK is NP-hard to approximate within a constant factor \cite{arulselvan2014note}. We reduce a version of $SK$ with equal profits and weights (also NP-hard) to the \textsc{Mbed} problem.
 We define a corresponding \textsc{Mbed} problem instance via constructing a graph $\Gamma$ as follows. 

For each $S_i \in S$ and $u_j \in U$ we create nodes $x_i$ and $y_j$ respectively. We also add a node $v$ with a large connected component $L$ of size $l$ only with positive edges attached to it. The node $v$ has negative edges with every node $x_i$, $\forall i\in [m]$ and every node $y_j$, $\forall j \in [n]$. Additionally, if $u_j \in S_i$, a negative edge $(x_i,y_j)$ will be added to the edge set $E$. 


In \textsc{Mbed}, the number of edges to be removed is the budget, $b=q$. The candidate set, $\mathbb{C} = \{ (v,y_j)| \forall j\in [n]\}$. Note that initial largest connected balanced component is $\{v \cup L\}\cup \{y_j \forall j \in [n]\}$ if $l>m+1$ (assuming $n>m$). Our claim is that, for any solution $\mathcal{A}$ of an instance of $SK$ there is a corresponding solution set of edges, $B$ (where $|B|=b$) in the graph $\Gamma$ of the \textsc{Mbed} version, such that $f(B)=P(\mathcal{A})+n+l+1$ if  $B = \{(v,y)|y\in \mathcal{A}\}$ are removed.

In the new balanced graph, we aim to build two partitions ($W_1$ and $W_2$) as follows. One partition $W_1$ consists of $\{v \cup L\}$ initially. Our goal is to delete edges from $\mathbb{C}$ and add the nodes $y_j$'s in $W_1$. If $(v,y_{j'})$ for any $j'$ does not get deleted then it would be in $W_2$. If there is any node $x_i$ that is connected with only nodes in $\mathcal{A}$ beside being connected with $v$, then removing all the edges in $B$ would put the node $x_i$ in $W_2$. Thus removing edges in $\mathcal{A}$ would put $P(\mathcal{A})$ nodes in $W_2$. Thus, $f(B)=P(\mathcal{A})+n+l+1$.

\end{proof}
\subsection{Proportionally Submodular}
\label{app:propsubmodular}
\begin{lemma}\label{lemma:prop_submodular}
The objective function $f$ is not proportionally\\ submodular \cite{santiago2020weakly}. In other words, there exists $S, T \in E$ for some graph $H$ such that $|T|f(S) + |S|f(T) < |S \cap T| f(S \cup T) + |S \cup T| f(S \cap T)$.
\end{lemma}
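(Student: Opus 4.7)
The plan is to adapt the style of Lemma~\ref{lem:submodular}'s counterexample by engineering a small signed graph in which pairwise edge deletions fail to enlarge the balanced subgraph, yet a three-edge deletion succeeds. First, I would build a graph $H$ whose current balanced subgraph is a positive path on $\{u_1,u_2,u_3,u_4\}$ (taking $V_1=\{u_1,u_2,u_3,u_4\}$, $V_2=\emptyset$), plus an extra vertex $x$ joined to $V_1$ by two positive edges $e_1=(x,u_1)$, $e_2=(x,u_2)$ and two negative edges $e_3=(x,u_3)$, $e_4=(x,u_4)$. A brief case analysis will confirm $\Delta(H)=4$: any bipartition that would place $x$ inside the balanced part must re-assign some $u_i$'s, which forces a positive path-edge $(u_i,u_{i+1})$ to cross partitions — forbidden by balance — or leaves an $x$-incident edge of the wrong sign. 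Hence to induct $x$ into the balanced portion one must either delete both negative $x$-incident edges (so $x$ joins $V_1$) or both positive ones (so $x$ starts a new partition $V_2$).

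Next, I would take $S=\{e_1,e_3\}$ and $T=\{e_1,e_4\}$, giving $|S|=|T|=2$, $|S\cap T|=1$, $|S\cup T|=3$. In each of $S$, $T$, and $S\cap T=\{e_1\}$, the vertex $x$ retains at least one positive and one negative edge to $V_1$, so $x$ stays outside $S(H)$ and $f(S)=f(T)=f(S\cap T)=0$. However $f(S\cup T)=1$: after deleting $\{e_1,e_3,e_4\}$ the only surviving edge from $x$ to $V_1$ is the positive edge $e_2$, so $x$ joins $V_1$ and balance grows from $4$ to $5$.

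Substituting into the proportional submodularity inequality,
\[
|T|f(S)+|S|f(T) \;=\; 0 \;<\; 1 \;=\; |S\cap T|\,f(S\cup T)+|S\cup T|\,f(S\cap T),
\]
which violates proportional submodularity and proves the claim.

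The main obstacle I anticipate is the case analysis needed to ensure $\Delta(H)=4$ exactly, i.e., that no alternative balanced induced subgraph of $H$ contains five vertices. This reduces to enumerating, for each placement of $x$ in $V_1$ or in $V_2$, how the $u_i$'s may be split between the partitions, and verifying that in every such split either a path-edge $(u_i,u_{i+1})$ (positive) crosses the partitions or some $x$-incident edge is internally negative or externally positive — a short but careful check that I expect to be the most tedious part of the argument.
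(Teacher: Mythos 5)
Your construction is correct: the arithmetic $2\cdot 0+2\cdot 0<1\cdot 1+3\cdot 0$ goes through, and the verification that $x$ enters the balanced part only after deleting both negative (or both positive) incident edges is sound. This is essentially the paper's own argument — a node blocked by a contradictory edge pair, with $S$ and $T$ each padded by a shared ``neutral'' edge so that $|S|=|T|=2$, $|S\cap T|=1$, $|S\cup T|=3$ — differing only in that your padding edge is incident to $x$ rather than internal to $V_1$.
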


\begin{proof}
    Consider a balanced subgraph of $H$, $S(H)$ has a partition $V_1$ and $V_2$. A node $v$ is outside $S(H)$ and it is connected to $V_1$ with positive edges $e_1$ and $e_2$, $V_2$ with another positive edge $e_3$. Thus the node $v$ cannot be the part of $S(H)$. Consider an edge $e_4$ inside $V_1$ which can be removed without making the graph disconnected. Let us assume $S = \{e_1, e_4\}, T = \{e_2, e_4\}$. Then, $f(\{e_1, e_4\}) = 0$ and $f(\{e_2, e_4\}) = 0$, since even after removing any of these edges it is not possible to add the node $v$ to $S(H)$. Note that $f(S \cap T) = f(\{e_4\}) = 0$. However, $f(S \cup T) = f(\{e_1, e_2, e_4\}) = 1$ since the node $v$ can be added. Substituting these values, we get $|T|f(S) + |S|f(T) < |S \cap T| f(S \cup T) + |S \cup T| f(S \cap T)$.
\end{proof}

\subsection{Proof of Lemma \ref{lemma:modular}}
\label{app:modular}

We denote $g_X(Y)$ as the marginal gain of the set of edges $Y$ over the set $X$, i.e., $g_X(Y) = g(X\cup Y)-g(X)$. To prove modularity, we need to show $g_X(Y)=\sum_{e \in Y} {g_X(e)}$, i.e. the marginal gain of the set of $Y$ over $X$ is the summation of the marginal gains of each individual in $Y$ over $X$ for any $X,Y$.

\vspace{-0.05in}
\begin{proof} We can write $g_X(Y)$ as follows.

\vspace{-0.05in}
    \begin{align*}
        g_X(Y) &= - \sum_{(i, j) \in X \cup Y} {{(\bm{v}_i - \sigma(i, j)\bm{v}_j)}^2} + \sum_{(i, j) \in X} {{(\bm{v}_i - \sigma(i, j)\bm{v}_j)}^2} \\
        &= -\sum_{(i, j) \in Y} {{(\bm{v}_i - \sigma(i, j)\bm{v}_j)}^2} 
        = \sum_{e \in Y} {g_X(e)}
    \end{align*}
\end{proof}

\subsection{Proof of Lemma \ref{lemma:induction}}
\label{sec:proof_lemmma_induction}

\begin{proof}
    We prove this by induction on the number of edges, $b$. Let us denote $B_k\subseteq B$ as $\{e_1, \cdots, e_{k}\}$. We construct $B$ by only considering peripheral edges $e_{k+1}$ such that, for all $k \le b$: $(e_{k+1}, e') \in cep (H_{B_k}, x)$, for some node $x$ and edge $e'$.
    
    \textbf{Base case $(b =1)$:} $f(\{e_1\}) \le f(\{e_1\})$. Also, $\alpha(\{e_1\}) = 0$. 
    
    \textbf{Inductive hypothesis (IH):} Suppose the equation holds for $b = k$, i.e., $f(B_k) \le \sum_{i=1}^{k}{f(\{e_i\})} + (C^* + 1) \alpha(B_k)$.
    
    \textbf{Inductive step $(b = k+1)$:}
     We present different cases for $e_{k+1}$. Note that we have $(e_{k+1}, e') \in cep(H_{B_k}, x)$ for some $x, e'$. 
     
    \textbf{Case 1:} $(e_{k+1}, e') \in cep(H, x)$ and $\left|cep\left(H_{e_{k+1}}, x\right)\right| = 0$, i.e., after deleting $e_{k+1}$, $x$ moves into the balanced subgraph. Then, we must also have $\left|cep\left(H_{B_{k+1}}, x\right)\right| = 0$. Hence, $f(B_k \cup \{e_{k+1}\}) - f(B_k) = f(\{e_{k+1}\})$ and the inequality holds.

    \textbf{Case 2:} Either (1) $(e_{k+1}, e') \in cep(H, x)$ and $\left|cep\left(H_{e_{k+1}}, x\right)\right| > 0$ or (2) $(e_{k+1}, e') \notin cep(H, x)$. 
    
    Thus, by Observation \ref{obs:nodetypes}, we have $f(\{e_{k+1}\}) = 0$. 
    
    \textbf{Case 2a:} Suppose $\left|cep\left(H_{B_{k+1}}, x\right)\right| = 0$. Then by definition of $\alpha, C^*$,  we have  $\alpha(B_{k+1}) = \alpha(B_k) + 1$, and $f(\{B_k \cup e_{k+1}\}) - f(B_k) \le C^* + 1$. 
    
    Substituting this, we get $f(B_{k+1}) \le (C^* + 1) + \sum_{i=1}^{k}{f(\{e_i\})} +(C^* + 1) \alpha(B_k) = \sum_{i=1}^{k+1}{f(\{e_i\})} +(C^* + 1) \alpha(B_{k+1})$. 
    
    \textbf{Case 2b:} In other cases, $f(B_k \cup \{e_{k+1}\}) - f(B_k) = f(\{e_{k+1}\}) = 0$. 
    
    This exhausts our cases and the claim is true $\forall b, b > 0$. 
\end{proof}

\subsection{Construction for the tight lower bound in Thm. \ref{thm:local_pseudo_submodular}}
\label{subsec:tight_thm2}

One can construct a graph $H$ and the sets $Q, R$ where equality holds. In particular, let $R$ be of an arbitrary size $b$. Consider $H_Q$ to have the MBS partition as $V_1, V_2$ each of size $\frac{\Delta(H_Q)}{2}$. Nodes of type 1 (Obs. \ref{obs:nodetypes}) are attached to these each with the sole connected component of size $\frac{\Delta(H_Q) - 2}{2}$. Let these nodes have 3 such connections (thus, removing two will help - any two such that our "connected assumption" holds are in the set $R$). We have another node of type 1 such that only two such connections are connected and one of these is in $R$ and the connected component $C$ to it is of size $0$. This completes the set $R$. Thus, $\sum_{e \in R}{[f(Q \cup \{e\})- f(Q)} = 1$ and $f(Q\cup R)-f(Q) = 1 + \left(\frac{\Delta(H_Q) - 2}{2} + 1\right) \frac{b-1}{2}$.

\end{document}